\newtheorem{theorem}{Theorem}[section]
\newtheorem{lemma}[theorem]{Lemma}
\newtheorem{corollary}[theorem]{Corollary} 
\theoremstyle{definition}
\newtheorem{remark}[theorem]{Remark} 
\numberwithin{equation}{section}
\numberwithin{equation}{section}
\def\B{\mathscr B}
\def\D{\text{\rm dom}}
\def\dom{\text{\rm dom}}
\def\ran{\text{\rm ran}}
\def\RE{\mathbb R}
\def\CO{{\mathbb C}}
\def\ph*{\phi_\star}
\def\be{\begin{equation}}
\def\ee{\end{equation}}
\def\min{{\rm min}}
\def\max{{\rm max}}
\def\-{{\rm in}}
\def\+{{\rm ex}}
\def\bou{{\mathscr B}}
\def\ve{\varepsilon}
\title[The point scatterer approximation for wave dynamics 
]{The point scatterer approximation for wave dynamics 
}
\author{Andrea Mantile}
\author{Andrea Posilicano}
\address{Laboratoire de Math\'{e}matiques de Reims, UMR9008 CNRS et
Universit\'{e} de Reims Champagne-Ardenne, Moulin de la Housse BP 1039, 51687
Reims, France}
\address{DiSAT, Sezione di Matematica, Universit\`a dell'Insubria, via Valleggio 11, I-22100
Como, Italy}
\email{andrea.mantile@univ-reims.fr}
\email{andrea.posilicano@unisubria.it}
\begin{document}

\begin{abstract}
Given an open, bounded and connected set $\Omega\subset\mathbb{R}^{3}$ and
its rescaling $\Omega_{\varepsilon}$ of size $\varepsilon\ll 1$, we consider
the solutions of the Cauchy problem for the inhomogeneous wave equation 
$$
(\varepsilon^{-2}\chi_{\Omega_{\varepsilon}}+\chi_{\mathbb{R}^{3}\backslash\Omega_{\varepsilon}})\partial_{tt}u=\Delta u+f
$$
with initial data and source supported
outside $\Omega_{\varepsilon}$; here, 
$\chi_{S}$ denotes the characteristic function of a set $S$. We provide
the first-order $\varepsilon$-corrections with respect to the solutions of
the inhomogeneous free wave equation and give space-time estimates on the
remainders in the $L^{\infty}((0,1/\varepsilon^{\tau}),L^{2}(\mathbb{R}^{3}))
$-norm. Such corrections are explicitly expressed in terms of the
eigenvalues and eigenfunctions of the Newton potential operator in $L^{2}(\Omega)$ and provide an effective dynamics describing a legitimate
point scatterer approximation in the time domain.
\vskip8pt\noindent
{\bf Keywords} Wave Equation $\cdot$ Point Scatterer  $\cdot$ Effective Dynamics
\vskip8pt\noindent
{\bf Mathematics Subject Classification} 35L05 $\cdot$ 35C20 $\cdot$ 47D09
\end{abstract}

\maketitle

\section{Introduction} Let $\Omega\subset\mathbb{R}^{3}$ be open, bounded and connected, $%
y_{0}\in\Omega$ and let $\varepsilon\in(0,1) $ be a small-scale
parameter. Denoting with $\Omega_{\varepsilon}$ the rescaled domain%
\begin{equation}
\Omega_{\varepsilon}:=\left\{ y=y_{0}+\varepsilon(x-y_{0})\,,\ x\in
\Omega\right\},   \label{Omega_eps}
\end{equation}
we introduce the contrast function%
\begin{equation*}
\varepsilon^{-2}\chi_{\Omega_{\varepsilon}}+\chi_{\mathbb{R}%
^{3}\backslash\Omega_{\varepsilon}}\,, 
\end{equation*}%
modeling a sharp discontinuity of a medium across the interface $\partial
\Omega_{\varepsilon}$. The perturbed wave equation%
\begin{equation}  \label{pwe}
(\varepsilon^{-2}\chi_{\Omega_{\varepsilon}}+\chi_{\mathbb{R}%
^{3}\backslash\Omega_{\varepsilon}})\partial_{tt}u=\Delta u\,,
\end{equation}
describes the interaction between a scalar wave and a small inhomogeneity
having high contrast with respect to an homogeneous background. Here, the
relative speed of propagation inside $\Omega_{\varepsilon}$ is given by the
number $\varepsilon$; hence, small values of $\varepsilon$ correspond to a
local regime of small relative speed of propagation. We are interested in
the asymptotic behavior of the solutions of the Cauchy problem for (\ref{pwe}%
) as $\varepsilon\searrow 0$.
\par
Under the time-harmonic assumption $u(t,x)=e^{it\omega}u_{\omega}(x)$, the
corresponding stationary equation writes as%
\begin{equation}
\big( \Delta+(\varepsilon^{-2}\chi_{\Omega_{\varepsilon}}+\chi_{\mathbb{R}%
^{3}\backslash\Omega_{\varepsilon}})\,\omega^{2}\big) u_{\omega}=0\,. 
\label{WE_eps_omega}
\end{equation}
It is known that, depending on the incident frequency $\omega$, the
scattering solutions $u^{sc}_{\omega} $ of (\ref{WE_eps_omega}) may
concentrate around $\Omega_{\varepsilon}$ as $\varepsilon\searrow0$. Namely,
specific values of $\omega$, referred to as \emph{resonant frequencies}, are
associated to a scattering enhancement localized at $y_{0}$. At these
specific frequencies, the dominant part of the scattered field assumes the
form%
\begin{equation}
u_{\omega}^{sc}( x) \sim u_{\omega}^{in}( y_{0})\, \Lambda_{\Omega}( \omega)
\,\frac{e^{-i\omega|x-y_{0}|}}{4\pi|x-y_{0}|}\,,   \label{point_scatterer}
\end{equation}
where $\Lambda_{\Omega}(\omega) $ is a scattering coefficient depending on
the physical parameters. Such asymptotic, provided in \cite{Amm1} under
far-field approximations, describes a diffusion from a \emph{point scatterer}
placed at $y_{0}$.
\par
The point scatterer model, introduced by Foldy in \cite{Foldy} and further
developed by Lax, \cite{Lax}, consists in using a (\ref{point_scatterer}%
)-like formula as an heuristic approximation for the stationary scattering
from a small obstacles placed at $y_{0}$. Such an approximation is commonly
adopted to describe scattering from small Dirichlet obstacles. An
interpretation, discussed in \cite{HuMaSi}, shows that the Foldy-Lax
description of a point scatterer corresponds to modeling the interaction
between the wave and the scatterer in terms of a point perturbation of the
Laplacian, a class of singular perturbations studied in \cite{AGHKH}, with a
frequency-dependent scattering length. It is worth to remark that such model has no direct counterpart in the time domain setting. However, since the point scatterer approximation
provides an effective limit for the asymptotic regime at specific
frequencies, the asymptotic analysis of the wave dynamics generated by
solutions of \eqref{pwe} at small scale may suggests a physically coherent
definition of a point scatterer in the time domain.
\par
The analysis of the stationary problem (\ref{WE_eps_omega}) in the small
scale limit $\varepsilon \searrow 0$ enlightens the emergency of a discrete
set of \emph{scattering resonances} at resonant frequencies $\omega _{k}$
related to the inverse spectral points of the \emph{Newton potential operator%
} 
\begin{equation*}
N_{0}:L^{2}(\Omega )\rightarrow L^{2}(\Omega )\,,\qquad N_{0}u(x):=\frac{1}{%
4\pi }\int_{\Omega }\frac{u(y)\,dy}{|x-y|}
\end{equation*}%
by $\omega _{k}=\lambda _{k}^{-1/2}$, where $\lambda _{k}\in\sigma _{d}(N_{0})$, see \cite{Amm1}, \cite{DaGaSi}. This suggests that
the generator of the dynamics may have spectral resonances at the energies $%
\lambda _{k}^{-1}$. Such a spectral problem is considered in \cite{MaPo 24}%
, where this picture is validated and precise estimates for the imaginary
parts of resonances are provided. In particular, under suitable assumptions
on $\Omega $, it is shown that at each $\lambda _{k}\in \sigma _{d}(N_{0})$
there corresponds a unique resonance of $-(\varepsilon ^{2}\chi _{\Omega
_{\varepsilon }}+\chi _{\mathbb{R}^{3}\backslash \Omega _{\varepsilon
}})\Delta $ converging to $\lambda _{k}^{-1}$ with imaginary part of order $%
\varepsilon $.
\par
According to the theory of second-order Cauchy problems with self-adjoint
generators, the solutions of (\ref{pwe}) express in terms of time
propagators which are related to the inverse Laplace transform of the
resolvent operator $(-(\varepsilon^{2}\chi_{\Omega_{\varepsilon}}+
\chi_{\mathbb{R}^{3}\backslash\Omega_{\varepsilon}})\Delta+z^{2})^{-1}$, see, e.g., \cite[Sec I.3.14]{Arendt}. Hence, the relevant properties of
the dynamical system are encoded in the spectral profile of $%
(\varepsilon^{2}\chi_{\Omega_{\varepsilon}}+\chi_{\mathbb{R}%
^{3}\backslash\Omega_{\varepsilon}})\Delta$, including eigenvalues and
resonances. In view of the results from \cite{MaPo 24}, we expect that in
the small-scale regime, the wave dynamics (\ref{pwe}) may be governed by a
(possibly finite) number of resonant states whose survival time, defined by
the imaginary part of spectral resonances, is of order $1/\varepsilon$. In
this connection, the role of resonances on the asymptotic dynamics may be
relevant on a large time-scale.
\par
In this work we focus on a direct asymptotic analysis of the dynamics
generated by solutions of the Cauchy problem for the wave equation (\ref{pwe}
). The main results concern both the homogeneous case, considered in Theorem 
\ref{T1}, and the non-homogeneous case, in Theorem \ref{inhom}. In Theorem 
\ref{T-eff} a model for the effective dynamics in the small-scale limit is
provided in both cases. We next present an overview of the outcomes for the
homogeneous Cauchy problem; more detailed results on such a case as well on
the one in presence of a source term are provided in the aforementioned
theorems.
\par
Let $u_{\varepsilon }$ and $u_{\text{\textrm{free}}}$ denote the classical
solutions of the Cauchy problems (here and in the following $t\geq 0$)
\begin{equation*}
\begin{cases}
(\varepsilon ^{-2}\chi _{\Omega _{\varepsilon }}+\chi _{\mathbb{R}^{3}\backslash \Omega _{\varepsilon }})\partial _{tt}u_{\varepsilon
}(t)=\Delta u_{\varepsilon }(t)\,, \\ 
u_{\varepsilon }(0)=\phi \,, \\ 
\partial _{t}u_{\varepsilon }(0)=\psi \,,
\end{cases}%
\qquad 
\begin{cases}
\partial _{tt}u_{\text{\textrm{free}}}(t)=\Delta u_{\text{\textrm{free}}
}(t)\,, \\ 
u_{\text{\textrm{free}}}(0)=\phi \,, \\ 
\partial _{t}u_{\text{\textrm{free}}}(0)=\psi \,.
\end{cases}%
\end{equation*}%
Assume that $\phi ,\psi \in H^{6}(\mathbb{R}^{3})$ are supported outside $
\Omega _{\varepsilon }$ and define $h:[0,+\infty )\rightarrow \mathbb{R}$
depending on these Cauchy data by%
\begin{equation*}
h(t):=\int_{|y-y_{0}|<t}\frac{\Delta ^{2}\phi (y)}{4\pi |y-y_{0}|}\,dy+\frac{%
1}{4\pi t}\int_{|x-y_{0}|=t}\Delta \psi (x)\,d\sigma _{t}(y)\,.
\end{equation*}%
Notice that, by Kirchhoff's formula, $h(t)=\Delta u_{\text{\textrm{free}}%
}(t,y_{0})$. Then, for any $\tau \in (0,\frac{1}{11})$ there exists $%
\varepsilon _{0}>0$ small enough such that the expansion%
\begin{equation}
u_{\varepsilon }(t,x)=\,u_{\text{\textrm{free}}}(t,x)+\varepsilon
(\varepsilon ^{2}-1)\,H(t-|x-y_{0}|)\,\frac{q(t-|x-y_{0}|)}{4\pi |x-y_{0}|}+{%
r}_{\varepsilon }(t,x)\,,  \label{exp}
\end{equation}%
holds for all $\varepsilon \in (0,\varepsilon _{0})$ and the remainder
allows the estimate%
\begin{equation}
\sup_{0<t\leq 1/\varepsilon ^{\tau }}\Vert {r}_{\varepsilon }(t,\cdot )\Vert
_{L^{2}(\mathbb{R}^{3})}\lesssim \,o(\varepsilon )\big(\Vert \Delta ^{3}\phi
\Vert _{L^{2}(\mathbb{R}^{3})}+\Vert \Delta ^{3}\psi \Vert _{L^{2}(\mathbb{R}%
^{3})}\big)\,.  \label{rem_est}
\end{equation}%
Here, $H$ is the Heaviside function and $q(t):=\sum_{k=1}^{+\infty }q_{k}(t)$%
, with each $q_{k}(t)$ solving the Cauchy problem 
\begin{equation}
\begin{cases}
\ddot{q}_{k}(t)=-\lambda _{k}^{-1}\,\big(q_{k}(t)-|\langle e_{k},1\rangle
|^{2}\,h(t)\big)\,, \\ 
q_{k}(0)=\dot{q}_{k}(0)=0\,,%
\end{cases}
\label{td_ps}
\end{equation}%
where $e_{k}\in L^{2}(\Omega )$ is an eigenfunction of $N_{0}$ corresponding to the eigenvalue $
\lambda _{k}$.
\par
The effective dynamics given by 
\begin{equation}  \label{sph}
u_{\varepsilon,\text{\textrm{eff}}}(t,x)=\,u_{\text{\textrm{free}}%
}(t,x)+\varepsilon(\varepsilon^{2}-1) \,H(t-|x-y_{0}|)\,\frac{q(t-|x-y_{0}|)%
}{4\pi|x-y_{0}|}
\end{equation}
is the resulting point scatterer approximation of the propagation, since it
solves the Cauchy problem 
\begin{equation*}
\begin{cases}
\partial_{tt} u_{\varepsilon,\text{\textrm{eff}}}(t)=\Delta u_{\varepsilon,%
\text{\textrm{eff}}}(t)+\varepsilon(\varepsilon^{2}-1)\,q(t)\,\delta_{y_{0}} \\ 
u_{\varepsilon,\text{\textrm{eff}}}(0)=\phi \\ 
\partial_{t} u_{\varepsilon,\text{\textrm{eff}}}(0)=\psi\,,%
\end{cases}
\end{equation*}
where $\delta_{y_{0}}$ denotes the Dirac delta distribution supported at the
point $y_{0}$. The spherical wave component of $u_{\varepsilon,\text{\textrm{%
eff}}}$ has the space profile of a Green function centered at $y_{0}$
modulated by a function whose evolution in time is determined by the Cauchy
data through the solution of \eqref{td_ps}.
\par
It is worth to remark that (\ref{exp}) holds for the specific class of $%
H^{6}(\mathbb{R}^{3})$ Cauchy data with support away from $y_{0}$. This
condition is needed to control the high-energy contribution to the inverse
Laplace transform involved in the computation of the dynamics. It is unclear
if, for less regular data which may concentrate their $L^{2}$-mass close to $%
y_{0}$, a different asymptotic may occur.
\par
Since the solution $t\mapsto u_{\varepsilon}(t,\cdot)$ defines a continuous
flow in the Sobolev space $H^{2}( \mathbb{R}^{3})\subset{\mathcal{C}}_{b}(%
\mathbb{R}^{3}) $ and since \eqref{sph} is smooth outside $y_{0}$ but
neither bounded nor continuous there, the small remainder ${r}%
_{\varepsilon}(t,\cdot)$ must compensate such loss of regularity close to $%
y_{0}$. It is worth to remark that this is not in contradiction with the
estimates (\ref{rem_est}), however, such circumstance prevents to control
the errors on a more regular Sobolev scale.
\par
Although (\ref{exp})-(\ref{td_ps}) does not explicitly involve resonant
states, the resonant energies $\lambda_{k}^{-1}$ appears in the definition
of the modulation functions $q_{k}(t)$. This expansion holds on a large time
scale which is, however, shorter than the estimated resonances life-time. We
expect that (\ref{exp})-(\ref{td_ps}) can be recasted in terms of
quasi-bound states. Let us notice in turns that, representing the quantum
and the classical-waves dynamics for point perturbations of the Laplacian in
terms of resonant states presents well-known technical asperities (involving
a careful estimate of the contributions to the time-propagator from energies
close to resonances) and very few general results have been provided in this
sense, see \cite{AlbHoeg}.
\par
The small-scale expansions and the effective model presented in this work
are new. In particular, the asymptotic formulae resulting from our analysis
are given with global-in-space estimates of the errors on a large time
scale. Such formulae provide, in our opinion, a useful contribution in
applications involving time domain data. Let us recall, at this concern,
that the interaction of classical waves with \emph{micro-resonators} is
attracting the increasing interest of different branches of applied physics,
where the manipulation of the dynamics at micro-scales has several relevant
applications. For instance, these resonant-scattering phenomena have been
used for the design of acoustic and electromagnetic metamaterials, or in the
realization of contrast agents for various imaging strategies. Moreover,
they show to have a potential impact in classical and quantum information
processing. A vast literature has been devoted to these topics, see \cite%
{GaSi}, \cite{GoAl}, \cite{Kuz}, \cite{RagAl}, \cite{Stau} to cite a few.
\par
While the small-$\varepsilon $ asymptotic of the stationary scattering
problem (\ref{WE_eps_omega}) has been well understood, only partial results
have been provided so far for the corresponding dynamics. An attempt to derive a Foldy-Lax model in the time domain has been recently
discussed in the two-dimensional case using Dirichlet discs of small size, see \cite{Kach}. As regards the model equation \eqref{WE_eps_omega}, a possible approach to time-domain expansions was argued in \cite{Amm1} relying on the previous works \cite{Amm td1} and \cite[Appendix B]{Amm td}. This, however, would produce an expansion only valid after truncating the high-frequency components of the fields. To the best of our knowledge, none explicit formula has been produced within such an approach, neither the relation with the asymptotic expansion of the full time-domain solution has been discussed so far. 
In \cite{SiWa TD}, the equation \eqref{pwe}, excited by a source term but with
trivial Cauchy data, have been considered in the perspective of using
dielectric micro-resonators as contrast agents for imaging in the time
domain. The authors use known estimates in the Laplace-transform domain to
get the asymptotic expansion of the solution on a finite time interval with
a point-wise estimate of the error holding far from the scatterer location,
while, it deserves noting, none effective limit model has been discussed. In the limit of their validity, the interesting asymptotic
expansions provided in \cite{SiWa TD} are coherent with the ones provided
here, enlightening the role of the Newton eigenvalues in the construction of
the dynamics.
\begin{remark}
Without loss of generality, thanks to the translational invariance of the equations, throughout the text we set for simplicity $y_{0}=0$ in the definition (\ref{Omega_eps}) of the small domain
$\Omega_{\varepsilon}$; the statements of the Theorems \ref{T1},
\ref{inhom} and \ref{T-eff} easily adapt to the case of a generic point $y_{0}$.
\end{remark}
\section{Notation} 
\par\noindent
$\bullet$ $\Omega_{\ve}$ denotes the contracted set $\Omega_{\ve}:=\{x\in\RE^{3}: x=\ve y\,,\  y\in\Omega\}$, $0<\ve<1$, where  $\Omega\subset\RE^{3}$ is a given open, bounded and connected set containing the origin; $|\Omega|$ denotes the volume of $\Omega$. 
\vskip5pt\par\noindent
$\bullet$ $\|\cdot\|$ denotes the norm in a space of square integrable functions like $L^{2}(\Omega)$ and $L^{2}(\RE^{3})$ with scalar product $\langle\cdot,\cdot\rangle$; $\|\cdot\|$ also denotes the operator norm for bounded linear operators acting between such spaces. Norms in different spaces are specified with the appropriate subscripts. 
\vskip5pt\par\noindent
$\bullet$ $\bou(X,Y)$ denotes the Banach space of bounded operators from the Banach space $X$ to the Banach space $Y$; $\bou(X,X)\equiv\bou(X)$.
\vskip5pt\par\noindent
$\bullet$ $\mathcal L$ denotes the Laplace transform.
\vskip5pt\par\noindent
$\bullet$
$H(t)$ denotes the Heaviside function.
\vskip5pt\par\noindent
$\bullet$ $\varrho(A)$ and $\sigma(A)$ denote the resolvent set and the spectrum of the self-adjoint operator $A$; $\sigma_{d}(A)$ denotes its discrete spectrum.
\vskip5pt\par\noindent
$\bullet$ $H^{2k}(\RE^{3})$, $k\in {\mathbb N}$, denotes the Hilbert-Sobolev space of order $2k$, i.e. the space of square integrale functions $f:\RE^{3}\to\CO$ such that $\Delta^{k} f$ is square integrable, $\Delta^{k}$ denoting the $k$-th power of the Laplacian.  
\vskip5pt\par\noindent
$\bullet$ $\mathcal G_{z}$, $z\in\CO\backslash(-\infty,0]$, denotes the kernel function of $(-\Delta+z^{2})^{-1}$, i.e., $\mathcal G_{z}(x)=\frac{e^{-z|x|}}{4\pi|x|}$.
\vskip5pt\par\noindent
$\bullet$ $ f(t,\ve)\lesssim g(t,\ve)$ means that there exists $\kappa\ge 0$, independent of $t$ and $\ve$ such that $f(t,\ve)\le \kappa\, g(t,\ve)$.
\section{The model operator and the corresponding wave equation}
The wave equation \eqref{pwe} re-writes as 
$$
\partial_{tt}u=A(\ve)u\,,
$$
where the linear operator in $L^{2}(\RE^{3})$ is defined by   
$$A(\ve):H^{2}(\RE^{3})\subset L^{2}(\RE^{3})\to L^{2}(\RE^{3})\,,\qquad 
A(\ve):=(\ve^{2}\chi_{\Omega_{\ve}}+\chi_{{\RE^{3}\backslash\Omega_{\ve}}})\Delta\,.
$$ 
It corresponds to an additive perturbation
of the Laplacian:
\begin{equation}%
\begin{array}
[c]{ccc}%
A(\ve)=\Delta-T(\varepsilon)\,, &  & T(\varepsilon):=(  
1-\varepsilon^{2})    \chi_{\Omega_{\varepsilon}}\Delta\,.
\end{array}
\label{H_eps_add}%
\end{equation}
Let us set
$$
R_{z}:=(-\Delta+z)^{-1}\,,\qquad z\in\CO\backslash(-\infty,0]\,.
$$
We introduce the Hilbert space 
$$
L_{\ve}^{2}(\RE^{3}):=L^{2}(\RE^{3};(\ve^{-2}\chi_{\Omega_{\ve}}+\chi_{{\RE^{3}\backslash\Omega_{\ve}}})dx)\,;
$$
the corresponding scalar product and norm are denoted by $\langle\cdot,\cdot\rangle_{\ve}$ and 
$\|\cdot\|_{\ve}$ respectively; By the inequalities
\be\label{equiN}
\|u\|\le \|u\|_{\ve}\le\ve^{-1}\|u\|\,,
\ee
one gets the equivalence, holding in the Banach space sense,
\be\label{equi}
L_{\ve}^{2}(\RE^{3})\simeq L^{2}(\RE^{3})\,.
\ee
\begin{remark}\label{norm} Notice that, if $u\in L^{2}(\RE^{3})$ is supported outside $\Omega_{\ve}$, then $\|u\|_{\ve}=\|u\|$. Therefore, using  $\|\cdot\|_{\ve}$ to also denote the norm in $\bou(L_{\ve}^{2}(\RE^{3}))$, one has, for any $L\in\bou(L_{\ve}^{2}(\RE^{3}))$  and for any $u\in L^{2}(\RE^{3})$ supported outside $\Omega_{\ve}$, 
$$
\|Lu\|\le\|L\|_{\ve}\|u\|\,.
$$ 
\end{remark}
\begin{theorem}\label{TK} $A(\ve)$ is a closed operator in $L^{2}(\mathbb{R}^{3})$ and a non-positive self-adjoint operator in $L_{\ve}^{2}(\RE^{3})$; it has resolvent  
\be\label{res}
R_{z}(\ve):=(-A(\ve)+z)^{-1}=R_{z}-K_{z}(\ve)\,,\qquad z\in \CO\backslash(-\infty,0]\,,
\ee
where
$$
K_{z}(\ve):=R_{z}(1+T(\varepsilon)R_{z})^{-1}T(\varepsilon)R_{z}\,,\qquad z\in \CO\backslash(-\infty,0]\,.
$$
\end{theorem}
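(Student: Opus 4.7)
The plan is to handle the three claims in Theorem \ref{TK} — closedness of $A(\ve)$ in $L^{2}(\RE^{3})$, non-positive self-adjointness in $L^{2}_\ve(\RE^{3})$, and the resolvent identity \eqref{res} — via the quadratic-form method in the weighted space, followed by a direct algebraic verification of \eqref{res}. The guiding observation is that although $A(\ve)$ looks like a complicated variable-coefficient second-order operator, in the weighted Hilbert space $L^{2}_\ve(\RE^{3})$ it is precisely the self-adjoint operator attached to the ordinary Dirichlet form of $-\Delta$.

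First I would introduce the closed, densely defined, symmetric, non-negative form
$$
\fb(u,v):=\int_{\RE^{3}}\nabla u\cdot\overline{\nabla v}\,dx\,,\qquad\dom(\fb):=H^{1}(\RE^{3})\,,
$$
which remains closed in $L^{2}_\ve(\RE^{3})$ by the equivalence of norms \eqref{equiN}--\eqref{equi}. Kato's first representation theorem then produces a unique non-negative self-adjoint operator $B_\ve$ in $L^{2}_\ve(\RE^{3})$ whose form is $\fb$. The crucial identification $B_\ve=-A(\ve)$ on $H^{2}(\RE^{3})$ is read off the chain
$$
\langle (\ve^{-2}\chi_{\Omega_\ve}+\chi_{\RE^{3}\setminus\Omega_\ve})B_\ve u,v\rangle=\langle B_\ve u,v\rangle_\ve=\fb(u,v)=\langle -\Delta u,v\rangle\,,\qquad v\in H^{1}(\RE^{3})\,,
$$
which forces $-\Delta u=(\ve^{-2}\chi_{\Omega_\ve}+\chi_{\RE^{3}\setminus\Omega_\ve})B_\ve u\in L^{2}(\RE^{3})$, upgrading $u$ to $H^{2}(\RE^{3})$ and giving $B_\ve u=-A(\ve)u$; the reverse inclusion $H^{2}(\RE^{3})\subset\dom(B_\ve)$ is immediate by integration by parts. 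Closedness of $A(\ve)$ in $L^{2}(\RE^{3})$ then follows from closedness in $L^{2}_\ve(\RE^{3})$ and \eqref{equi}.

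For \eqref{res} I would first use $(-\Delta+z)R_{z}=I$ to rewrite
$$
T(\ve)R_{z}=(1-\ve^{2})\chi_{\Omega_\ve}(zR_{z}-I)\in\bou(L^{2}(\RE^{3}))\,.
$$
For $z>0$ the spectral theorem for $-\Delta$ gives $\|\Delta R_{z}\|\le 1$, hence $\|T(\ve)R_{z}\|\le 1-\ve^{2}<1$ and $1+T(\ve)R_{z}$ is invertible by Neumann series. For general $z\in\CO\setminus(-\infty,0]$ I would upgrade this to invertibility in $\bou(L^{2}(\RE^{3}))$ by feeding back the self-adjointness just proved: if $(1+T(\ve)R_{z})u=0$ then $v:=R_{z}u\in H^{2}(\RE^{3})$ satisfies $(-A(\ve)+z)v=(-\Delta+z)v+T(\ve)v=u+T(\ve)R_{z}u=0$, which forces $v=0$ because $z\in\varrho(A(\ve))$, and hence $u=(-\Delta+z)v=0$; surjectivity is obtained by reversing the construction starting from $v:=R_{z}(\ve)w$ for arbitrary $w\in L^{2}(\RE^{3})$. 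Setting $M:=T(\ve)R_{z}$ and using $R_{z}-K_{z}(\ve)=R_{z}(1+M)^{-1}$, the identity \eqref{res} then follows from the algebraic check
$$
(-A(\ve)+z)R_{z}(1+M)^{-1}=\bigl((-\Delta+z)R_{z}+T(\ve)R_{z}\bigr)(1+M)^{-1}=(I+M)(1+M)^{-1}=I\,.
$$

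The main obstacle is the identification of the operator domain of $B_\ve$: Kato's representation theorem a priori only gives $\dom(B_\ve)\subset\dom(\fb)=H^{1}(\RE^{3})$, and the upgrade to $\dom(B_\ve)=H^{2}(\RE^{3})$ — needed to match the announced expression $B_\ve=-A(\ve)$ on $H^{2}(\RE^{3})$ — is what extracts the missing elliptic regularity from the weighted symmetry identity. The parallel obstacle of extending invertibility of $1+T(\ve)R_{z}$ from large positive $z$ to the full resolvent set is then dispatched by the same self-adjointness feedback, and the final verification of \eqref{res} is purely algebraic.
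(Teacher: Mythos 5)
Your proof is correct, but it takes a genuinely different route from the paper's, essentially reversing the logical order. The paper first builds the resolvent: it shows $\|\Delta R_{z}\|\le 1$ for $\mathrm{Re}(z)\ge 0$ (via the identity $\|u\|^{2}=\|\Delta R_{z}u\|^{2}+|z|^{2}\|R_{z}u\|^{2}+2\mathrm{Re}(z)\langle -\Delta R_{z}u,R_{z}u\rangle$), inverts $1+T(\ve)R_{z}$ by Neumann series on that half-plane, deduces self-adjointness in $L^{2}_{\ve}(\RE^{3})$ from symmetry plus the resulting range condition $\ran(-A(\ve)\pm i)=L^{2}_{\ve}(\RE^{3})$, and then extends the resolvent identity to all of $\CO\backslash(-\infty,0]$ by citing an external Krein-type result from \cite{CFP}. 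You instead obtain self-adjointness first, via Kato's representation theorem applied to the Dirichlet form in the weighted space, which costs you an elliptic-regularity step to upgrade the operator domain from $H^{1}(\RE^{3})$ to $H^{2}(\RE^{3})$ but buys a citation-free identification of $-A(\ve)$ as the form operator; you then use the already-known inclusion $\CO\backslash(-\infty,0]\subseteq\varrho(A(\ve))$ to prove bijectivity of $1+T(\ve)R_{z}$ directly (injectivity by the kernel argument, surjectivity by pulling back through $R_{z}(\ve)$), making the whole extension step self-contained where the paper outsources it. Two small points you should make explicit: the bounded inverse theorem is what converts bijectivity of $1+T(\ve)R_{z}$ into invertibility in $\bou(L^{2}(\RE^{3}))$, and your final computation only exhibits a right inverse of $-A(\ve)+z$ — this suffices, but only because you already know $-A(\ve)+z$ is a bijection from $H^{2}(\RE^{3})$ onto $L^{2}(\RE^{3})$, so say so.
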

\begin{proof} Since both $(\ve^{2}\chi_{\Omega_{\ve}}+\chi_{{\RE^{3}\backslash\Omega_{\ve}}})$ and $(\ve^{-2}\chi_{\Omega_{\ve}}+\chi_{{\RE^{3}\backslash\Omega_{\ve}}})$ are bounded and $\Delta:H^{2}(\RE^{3})\subset L^{2}(\RE^{3})\to L^{2}(\RE^{3})$ is closed, $A(\ve)=(\ve^{2}\chi_{\Omega_{\ve}}+\chi_{{\RE^{3}\backslash\Omega_{\ve}}})\Delta$ is closed as well; furthermore, by $$\langle u, A(\ve)u\rangle_{\ve}=\langle u, \Delta u\rangle\le 0\,,$$ 
one gets $A(\ve)\le 0$ and $\sigma(A(\ve))\subseteq (\infty,0]$. \par
Let $z\in \CO\backslash(-\infty,0]$ such that $\text{Re}(z)\ge0$; the identity%
\begin{align*}
\left\Vert u\right\Vert  ^{2}=\,&\left\Vert (  -\Delta+z)
R_{z}u\right\Vert  ^{2}
=
\left\Vert -\Delta R_{z}%
u\right\Vert  ^{2}+|z|^{2}\left\Vert
R_{z}u\right\Vert  ^{2}+2\text{Re}(z)\langle-\Delta R_{z}u,R_{z}u\rangle
\end{align*}
implies%
\[
\left\Vert -\Delta R_{z}\right\Vert  \leq1\,.
\]
By \eqref{H_eps_add}, this gives 
\be\label{TR}
\left\Vert T(\varepsilon)R_{z}\right\Vert  \leq(
1-\varepsilon^{2})  \left\Vert \Delta R_{z}\right\Vert 
\leq(  1-\varepsilon^{2})<1\,.
\ee
Hence, 
\be\label{ns}
\sum_{n=0}^{+\infty}(-1)^{n}  (T(\varepsilon)
R_{z})  ^{n}=(  1+T(\varepsilon)R_{z})  ^{-1}\,,
\ee
where the Neumann series on the left converges in  ${\bou}(  L^{2}(  \mathbb{R}^{3}))  $ for any $z\in \CO\backslash(-\infty,0]$ such that $\text{Re}(z)\ge0$.  
Furthermore, for such a $z$, by simple algebraic manipulations, one gets
\be\label{mani}
R_{z}(  1+T(\varepsilon)R_{z})  ^{-1}
=R_{z}-R_{z}(1+T(\varepsilon)R_{z})^{-1}T(\varepsilon)R_{z}\,,
\ee
and
$$
(-A(\ve)+z)R_{z}(  1+T(\varepsilon)R_{z})  ^{-1}={\mathbb 1}_{L^{2}(\RE^{3})}\,,
$$
$$R_{z}(  1+T(\varepsilon)R_{z})  ^{-1}(-A(\ve)+z)={\mathbb 1}_{H^{2}(\RE^{3})}\,.
$$
By the equivalence \eqref{equi}, $A(\ve)$ is closed as an operator in $L_{\ve}^{2}(\RE^{3})$; furthermore, $A(\ve)$ is symmetric in $L_{\ve}^{2}(\RE^{3})$. Hence,  by $$\ran(-A(\ve)\pm i)=\dom(R_{z}(  1+T(\varepsilon)R_{\pm i})  ^{-1})=L_{\ve}^{2}(\RE^{3})\,,$$  $A(\ve)$ is self-adjoint as an operator in $L_{\ve}^{2}(\RE^{3})$. \par 
Finally, by \cite[Theorem 2.19 and Remark 2.20]{CFP}, the resolvent formula \eqref{mani} extends to $\varrho(A(\ve))\cap\varrho(\Delta)=\CO\backslash(-\infty,0]$.
\end{proof}
\begin{remark} One can show that $\sigma(A(\ve))=\sigma_{ac}(A(\ve))=(-\infty,0]$. However, for the purposes of this work, it suffices to know, by $A(\ve)\le 0$, that $\sigma(A(\ve))\subseteq(-\infty,0]$.  
\end{remark}
\begin{corollary}\label{coro} For any $z\in\CO\backslash(-\infty,0]$, for any integer $n\ge 1$ and for any $u\in H^{2n}(\RE^{3})$ supported outside $\Omega_{\ve}$, one has
$$
K_{z}(\ve)u=\frac1{z^{n}}\,K_{z}(\ve)\Delta^{n}u\,.
$$
\end{corollary}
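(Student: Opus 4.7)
The argument is a short induction on $n$, with the base case $n=1$ carrying all the content.

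The key preparatory remark is a stability of the support condition under powers of the Laplacian: if $u\in H^{2n}(\RE^{3})$ and $\supp u\cap \Omega_{\ve}=\emptyset$, then $\supp \Delta^{k}u\cap \Omega_{\ve}=\emptyset$ for every $0\le k\le n$, since the distributional Laplacian cannot enlarge the support. Consequently $\chi_{\Omega_{\ve}}\Delta^{k}u=0$, and therefore $T(\ve)\Delta^{k}u=(1-\ve^{2})\chi_{\Omega_{\ve}}\Delta^{k+1}u=0$ for $0\le k\le n-1$.

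For the base case, start from the elementary identity $(-\Delta+z)R_{z}u=u$ (valid for $u\in H^{2}(\RE^{3})$ and $z\in\CO\setminus(-\infty,0]$), which rearranges into
\begin{equation*}
R_{z}u=\frac{1}{z}\,u+\frac{1}{z}\,R_{z}\Delta u\,.
\end{equation*}
Applying $T(\ve)$ and using $T(\ve)u=0$ from the preparatory remark, one obtains $T(\ve)R_{z}u=\tfrac{1}{z}T(\ve)R_{z}\Delta u$. Composing on the left with $R_{z}(1+T(\ve)R_{z})^{-1}$ and using the definition of $K_{z}(\ve)$ from Theorem~\ref{TK} gives the base case
\begin{equation*}
K_{z}(\ve)u=\frac{1}{z}\,K_{z}(\ve)\Delta u\,.
\end{equation*}
Strictly speaking, the factorization $K_{z}(\ve)=R_{z}(1+T(\ve)R_{z})^{-1}T(\ve)R_{z}$ was established through the Neumann series in Theorem~\ref{TK} only for $\text{Re}(z)\ge 0$; however, since both sides of the desired identity depend analytically on $z$ on $\CO\setminus(-\infty,0]$ (recall $K_{z}(\ve)=R_{z}-R_{z}(\ve)$ and both resolvents are analytic there in the $\bou(L^{2}(\RE^{3}))$-topology by the equivalence \eqref{equi}), the identity extends to all admissible $z$.

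The inductive step is immediate: for $n\ge 2$, the function $\Delta u$ belongs to $H^{2(n-1)}(\RE^{3})$ and is still supported outside $\Omega_{\ve}$, so applying the inductive hypothesis to $\Delta u$ gives $K_{z}(\ve)\Delta u=z^{-(n-1)}K_{z}(\ve)\Delta^{n}u$, which combined with the base case yields the claim. No substantial obstacle is anticipated; the only point requiring mild care is the analytic-continuation step, which is standard.
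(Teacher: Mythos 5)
Your proof is correct, but it goes through a slightly different door than the paper's. You work with the factorized form $K_{z}(\ve)=R_{z}(1+T(\ve)R_{z})^{-1}T(\ve)R_{z}$ and exploit that $T(\ve)$ annihilates any function supported outside $\Omega_{\ve}$: the one-step identity $R_{z}u=\tfrac1z u+\tfrac1z R_{z}\Delta u$ then loses its first term after applying $T(\ve)$, and induction finishes the job. The paper instead never touches the factorization in this corollary: it applies the iterated resolvent identity \eqref{zn} to \emph{both} $A=\Delta$ and $A=A(\ve)$ (using that $A(\ve)^{k}u=\Delta^{k}u$ for such $u$, which is the same support-stability observation you make explicit) and subtracts, so that the polynomial parts $\sum_{k=1}^{n}z^{-k}\Delta^{k-1}u$ cancel and only $K_{z}(\ve)$-terms survive. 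The paper's route has the advantage that it only needs the additive formula $R_{z}(\ve)=R_{z}-K_{z}(\ve)$, which Theorem \ref{TK} asserts on all of $\CO\backslash(-\infty,0]$, so no continuation argument is required; your route needs the factorized form on the whole cut plane, which is why you add the analytic-continuation step. That step is sound (both sides are analytic $L^{2}$-valued functions on the connected set $\CO\backslash(-\infty,0]$ agreeing on the open half-plane), and in fact you could shortcut it by simply invoking the last line of the proof of Theorem \ref{TK}, where the factorized resolvent formula is already extended to all of $\varrho(A(\ve))\cap\varrho(\Delta)$. Your version buys a slightly more elementary base case (only the free resolvent identity is iterated); the paper's buys a cleaner, continuation-free argument. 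Both are valid.
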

\begin{proof} For any self-adjoint operator $A$ and for any $z\in\varrho(A)\cap\CO\backslash\{0\}$, one has 
$$
(-A+z)^{-1}=\frac1{z}+\frac1{z}\,(-A+z)^{-1}A\,.
$$
Iterating, one gets
\be\label{zn}
(-A+z)^{-1}=\sum_{k=1}^{n}\frac1{z^{k}}\,A^{k-1}+\frac1{z^{n}}\,(-A+z)^{-1}A^{n}\,.
\ee
By $A(\ve)u=\Delta u$ for any $u$ supported outside $\Omega_{\ve}$, by \eqref{res} and by applying \eqref{zn} to both $A=\Delta$ and $A=A(\ve)$,  one gets, whenever $u\in H^{2n}(\RE^{3})$, 
\begin{align*}
R_{z}(\ve)u=&\,\sum_{k=1}^{n}\frac1{z^{k}}\,A(\ve)^{k-1}u+\frac1{z^{n}}\,R_{z}(\ve)A(\ve)^{n}u
\\
=&\,\sum_{k=1}^{n}\frac1{z^{k}}\,\Delta^{k-1}u+\frac1{z^{n}}\,R_{z}\Delta^{n}u-\frac1{z^{n}}\,K_{z}(\ve)\Delta^{n}u\\
=&\,R_{z}u-\frac1{z^{n}}\,K_{z}(\ve)\Delta^{n}u\\
=&\,R_{z}(\ve)u+K_{z}(\ve)u-\frac1{z^{n}}\,K_{z}(\ve)\Delta^{n}u\,.
\end{align*}
\end{proof}
Given a non-positive self-adjoint operator $A$ in a Hilbert space ${\mathcal H}$, we consider the Cauchy problem for the corresponding wave equation, i.e., 
\be
\begin{cases}\label{Cpb}
\partial_{tt} u(t)=Au(t)\\
u(0)=\phi\\
\partial_{t} u(0)=\psi\,,
\end{cases}
\ee
where both the data $\phi$ and $\psi$ are in ${\mathcal H}$. We say that $u\in {\mathcal C}(\RE_{+};{\mathcal H})$ is a mild solution of \eqref{Cpb} whenever
$$
\int_{0}^{t}\int_{0}^{s}u(r)\,dr\,ds\equiv\int_{0}^{t}(t-s)u(s)\,ds\,\in \D(A)
$$
and
$$
u(t)=\phi+t\psi+A\int_{0}^{t}(t-s)u(s)\,ds
$$
for any $t\ge 0$. 
By \cite[Proposition 3.14.4, Corollary 3.14.8 and Example 3.14.16]{Arendt}, the unique mild solution of \eqref{Cpb} is given by 
\be\label{sol}
u(t)=\cos(t(-A)^{1/2})\, \phi+(-A)^{-1/2}\sin(t(-A)^{1/2})\,\psi\,,
\ee
where the sine and cosine operator  functions are defined through the functional calculus for the self-adjoint $A$. By \cite[Theorem 7.4 in Chapter 2, Section 8]{Gold}, if $\phi\in \dom(A)$ and $\psi\in\dom((-A)^{1/2})$, then $u$ in \eqref{sol} is a classical solution, i.e., 
$$
u\in {\mathcal C}^{2}(\RE_{+},{\mathcal H})\cap{\mathcal C}^{1}(\RE_{+},\dom((-A)^{1/2})\cap{\mathcal C}(\RE_{+},\dom(A))\,.
$$
The Laplace transforms of the sine and the cosine functions give the  relations  
$$
\int_{0}^{\infty}e^{-\lambda\, t}\sin(tx)\,dt=x(x^{2}+\lambda^{2})^{-1}\,,
\qquad\lambda>0\,,
$$
$$
\int_{0}^{\infty}e^{-\lambda\, t}
\cos(tx))\,dt=\lambda(x^{2}+\lambda^{2})^{-1}\,,\qquad \lambda>0\,.
$$
Thus, 
by functional calculus, by the inversion of the Laplace-Stieltjes transform of Lipschitz, $\bou(L^{2}(\RE^{n}))$-valued, functions $F$ such that $F(0)=0$ (see \cite[Theorem 2.3.4]{Arendt}), one gets
\begin{align*}
(-A)^{-1/2}\sin(t(-A)^{1/2})\psi&\,=\lim_{\gamma\nearrow\infty}\frac1{2\pi i}\int_{c-i\gamma}
^{c+i\gamma} e^{ tz} (-A+z^{2})^{-1}\psi\,dz\,,\qquad\psi\in L^{2}(\RE^{3})
\end{align*}
and
\begin{align*}
(\cos(t(-A)^{1/2})-1)\phi&\,=\lim_{\gamma\nearrow\infty}\frac1{2\pi i}\int_{c'-i\gamma}
^{c'+i\gamma} e^{ tz} (-A+z^{2})^{-1}A\phi\ \frac{dz}{z}\,,\qquad \phi\in \D(A)\,,
\end{align*}
where $c>0$ and $c'>0$ are arbitrary. \par
By applying these results to both $A=A(\ve)$ and $A=\Delta$ one gets the following \begin{lemma}\label{cp-ep} Let $\phi\in H^{2(m+1)}(\RE^{3})$ and $\psi\in H^{2n}(\RE^{3})$, $m\ge 0$ and $n>0$, both having support disjoint from $\Omega_{\ve}$; let 
$u_{\ve}$ and $u_{\text {\rm free}}$ be the classical solutions of the Cauchy problems 
$$
\begin{cases}
(\ve^{-2}\chi_{\Omega_{\ve}}+\chi_{{\RE^{3}\backslash\Omega_{\ve}}})\partial_{tt} u_{\ve}(t)=\Delta u_{\ve}(t)\\
u_{\ve}(0)=\phi\\
\partial_{t} u_{\ve}(0)=\psi\,.
\end{cases}
\qquad
\begin{cases}
\partial_{tt} u_{\text {\rm free}}(t)=\Delta u_{\text {\rm free}}(t)\\
u_{\text {\rm free}}(0)=\phi\\
\partial_{t}u_{\text {\rm free}}(0)=\psi\,.
\end{cases}
$$
Then, 
$$
u_{\ve}(t)=u_{\text {\rm free}}(t)-\frac1{2\pi i}\int_{c-i\infty}
^{c+i\infty} \!\!e^{ tz} K_{z^{2}}(\ve)\Delta^{n}\psi\ \frac{dz}{z^{2n}}-\frac1{2\pi i}\int_{c'-i\infty}
^{c'+i\infty}\!\! e^{ tz} K_{z^{2}}(\ve)\Delta^{m+1}\phi\ \frac{dz}{z^{2m+1}}\,,
$$
where $c>0$ and $c'>0$ are arbitrary. 
\end{lemma}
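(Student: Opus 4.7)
The plan is to express both classical solutions through the spectral calculus and then take their difference via the Bromwich contour representations just established. Since $\phi\in H^{2(m+1)}\subset\dom(A(\ve))=H^{2}$ and $\psi\in H^{2n}\subset\dom((-A(\ve))^{1/2})\supset H^{1}$, formula \eqref{sol} applies both to $A=A(\ve)$ and to $A=\Delta$; subtracting these two instances and inserting the two inverse Laplace--Stieltjes representations recalled just before the lemma, the zero-order $\phi$ coming from $(\cos(t(-A)^{1/2})-1)\phi$ cancels, leaving
\be
u_{\ve}(t)-u_{\text{\rm free}}(t)=\frac{1}{2\pi i}\int_{c-i\infty}^{c+i\infty}\!e^{tz}\bigl(R_{z^{2}}(\ve)-R_{z^{2}}\bigr)\psi\,dz+\frac{1}{2\pi i}\int_{c'-i\infty}^{c'+i\infty}\!e^{tz}\bigl(R_{z^{2}}(\ve)A(\ve)\phi-R_{z^{2}}\Delta\phi\bigr)\frac{dz}{z}\,.
\ee

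In the second integrand, since $\phi$ is supported outside $\Omega_{\ve}$ the locality of $\Delta$ gives $A(\ve)\phi=\Delta\phi$, so by the resolvent identity \eqref{res} of Theorem \ref{TK} both brackets collapse to $-K_{z^{2}}(\ve)\psi$ and $-K_{z^{2}}(\ve)\Delta\phi$ respectively. It then remains to move powers of $\Delta$ onto the data: applying Corollary \ref{coro} with the spectral parameter $z^{2}$ in place of $z$ (which is legitimate because $\text{Re}\,z=c>0$ keeps $z^{2}$ off $(-\infty,0]$), and using that $\psi$ and $\Delta\phi$ are both supported outside $\Omega_{\ve}$ with $\psi\in H^{2n}$ and $\Delta\phi\in H^{2m}$, one obtains
\[
K_{z^{2}}(\ve)\psi=\frac{1}{z^{2n}}\,K_{z^{2}}(\ve)\Delta^{n}\psi\,,\qquad K_{z^{2}}(\ve)\Delta\phi=\frac{1}{z^{2m}}\,K_{z^{2}}(\ve)\Delta^{m+1}\phi\,.
\]
Substituting these identities into the two Bromwich integrals and combining the factor $1/z^{2m}$ with the $1/z$ already present yields exactly the announced formula.

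The only delicate point is the justification of the inverse Laplace--Stieltjes representation of \cite[Theorem 2.3.4]{Arendt} for the operators $A(\ve)$ and $\Delta$, together with the fact that the substitution of the resolvent identity and of Corollary \ref{coro} may be performed under the (a priori only conditionally convergent) Bromwich integrals. This is harmless because, by the spectral theorem applied to the self-adjoint operators $A(\ve)$ in $L_{\ve}^{2}(\RE^{3})$ and $\Delta$ in $L^{2}(\RE^{3})$, the maps $z\mapsto R_{z^{2}}$ and $z\mapsto R_{z^{2}}(\ve)$ are uniformly bounded (in $\bou(L^{2}(\RE^{3}))$, by the equivalence \eqref{equi}) on the vertical contour $\{\text{Re}\,z=c\}$, and hence so is their difference $-K_{z^{2}}(\ve)$; the use of Corollary \ref{coro} subsequently transfers $2n$ (resp.\ $2m$) powers of $z$ from the oscillating integrand onto the data, which only improves the $z$-decay and in fact produces absolutely convergent integrals as soon as $n\geq 1$ and $m\geq 0$.
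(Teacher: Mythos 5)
Your proof follows the same route as the paper's: write both solutions via \eqref{sol}, insert the inverse Laplace representations of the sine and cosine operator functions for $A=A(\ve)$ and $A=\Delta$, use the locality identity $A(\ve)\phi=\Delta\phi$, the resolvent formula \eqref{res} and Corollary \ref{coro} to rewrite the difference as the two Bromwich integrals of $K_{z^{2}}(\ve)\Delta^{n}\psi\,z^{-2n}$ and $K_{z^{2}}(\ve)\Delta^{m+1}\phi\,z^{-(2m+1)}$. All of that is correct and is exactly what the paper does.

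The one weak point is your final paragraph, where you justify replacing the symmetric limits $\lim_{\gamma\nearrow\infty}\int_{c-i\gamma}^{c+i\gamma}$ by absolutely convergent improper integrals. Uniform boundedness of $z\mapsto K_{z^{2}}(\ve)$ on the contour together with the factor $z^{-(2m+1)}$ gives absolute convergence only for $m\ge 1$; for $m=0$ your bound makes the integrand merely $O(|\gamma|^{-1})$, and $\int d\gamma/|c+i\gamma|$ diverges, so the claim ``absolutely convergent as soon as $m\ge 0$'' does not follow from what you wrote. What saves the case $m=0$ (which is included in the statement) is that the resolvent bounds are in fact decaying, not just uniform: $\|R_{z^{2}}\|\le(2c|\gamma|)^{-1}$ and $\|R_{z^{2}}(\ve)\|_{\ve}\le(2c|\gamma|)^{-1}$ for $|\gamma|\ge c$, whence, for data supported outside $\Omega_{\ve}$ (so that Remark \ref{norm} applies), $\|K_{z^{2}}(\ve)\varphi\|\le(c|\gamma|)^{-1}\|\varphi\|$; this is the paper's estimate \eqref{KE}. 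With this extra factor of $|\gamma|^{-1}$ both integrals are absolutely convergent for all $n\ge1$ and $m\ge0$, and the rest of your argument goes through unchanged.
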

\begin{proof}
One gets, by \eqref{res} and by Corollary \ref{coro}, for any $\phi\in H^{2(m+1)}(\RE^{3})$ and $\psi\in H^{2n}(\RE^{3})$ both having support disjoint from $\Omega_{\ve}$,
\begin{align*}
&(-A(\ve))^{-1/2}\sin(t(-A(\ve))^{1/2})\psi\\
=&\,\lim_{\gamma\nearrow\infty}\frac1{2\pi i}\int_{c-i\gamma}
^{c+i\gamma} e^{ tz} R_{z^{2}}\psi\,dz-\lim_{\gamma\nearrow\infty}\frac1{2\pi i}\int_{c-i\gamma}
^{c+i\gamma}e^{ tz} K_{z^{2}}(\ve)\Delta^{n}\psi\ \frac{dz}{z^{2n}}\\
=&\,(-\Delta)^{-1/2}\sin(t(-\Delta)^{1/2})\psi-\lim_{\gamma\nearrow\infty}\frac1{2\pi i}\int_{c-i\gamma}
^{c+i\gamma}e^{ tz} K_{z^{2}}(\ve)\Delta^{n}\psi\ \frac{dz}{z^{2n}}
\end{align*}
and
\begin{align*}
&(\cos(t(-A(\ve))^{1/2})-1)\phi\\
=&\,\lim_{\gamma\nearrow\infty}\frac1{2\pi i}\int_{c-i\gamma}
^{c+i\gamma} e^{ tz} R_{z^{2}}\Delta\phi\ \frac{dz}{z}-\lim_{\gamma\nearrow\infty}\frac1{2\pi i}\int_{c'-i\gamma}
^{c'+i\gamma} e^{ tz} K_{z^{2}}(\ve)\Delta^{m+1}\phi\ \frac{dz}{z^{2m+1}}\\
=&\,(\cos(t(-\Delta)^{1/2})-1)\phi-\lim_{\gamma\nearrow\infty}\frac1{2\pi i}\int_{c'-i\gamma}
^{c'+i\gamma} e^{ tz} K_{z^{2}}(\ve)\Delta^{m+1}\phi\ \frac{dz}{z^{2m+1}}\,.
\end{align*}
Writing $z=c+i\gamma$, $c>0$, $\gamma\in\RE$, one has
$$
\|R_{z^{2}}\|\le \begin{cases}{|z|^{-2}} &|\gamma|<c\\
{(2c|\gamma|)^{-1}}&|\gamma|\ge c\end{cases}
\,,\quad \|R_{z^{2}}(\ve)\|_{\ve}\le\begin{cases}{|z|^{-2}} &|\gamma|<c\\
{(2c|\gamma|)^{-1}}&|\gamma|\ge c\end{cases}\,,
$$
and so, by Remark \ref{norm}, whenever $\varphi\in L^{2}(\RE^{3})$ is supported outside $\Omega_{\ve}$,
\be\label{KE}
\|K_{z^{2}}(\ve)\varphi\|\le 
\|R_{z^{2}}\varphi\|+\|R_{z^{2}}(\ve)\varphi\|\le \|\varphi\|\begin{cases}{2|z|^{-2}} &|\gamma|<c\\
{(c|\gamma|)^{-1}}&|\gamma|\ge c\,.\end{cases} 
\ee
Hence, the above improper operator-valued integrals are absolutely convergent and one gets,  
$$
(-A(\ve))^{-1/2}\sin(t(-A(\ve))^{1/2})\psi=(-\Delta)^{-1/2}\sin(t(-\Delta)^{1/2})\psi-\frac1{2\pi i}\int_{c-i\infty}
^{c+i\infty} \!\!e^{ tz} K_{z^{2}}(\ve)\Delta^{n}\psi\ \frac{dz}{z^{2n}}
$$
and
$$
\cos(t(-A(\ve))^{1/2})\phi=\cos(t(-\Delta)^{1/2})\phi-\frac1{2\pi i}\int_{c'-i\infty}
^{c'+i\infty}\!\! e^{ tz} K_{z^{2}}(\ve)\Delta^{m+1}\phi\ \frac{dz}{z^{2m+1}}\,,
$$
where $c>0$ and $c'>0$ are arbitrary. The proof is then concluded by \eqref{sol}.
\end{proof}
\begin{remark} By $|\Omega_{\ve}|\searrow 0$ as $\ve\searrow 0$, for any $u\in H^{2}(\RE^{3})$ one has
$$
\lim_{\ve\searrow 0}\|A(\ve)u-\Delta u\|^{2}=\lim_{\ve\searrow 0}\,(1-\ve^{2})^{2}\!\int_{\Omega_{\ve}}|\Delta u(x)|^{2}\,dx=0\,.
$$
Therefore, by \cite[Theorems 2.25 and 2.29]{Kato}, $A(\ve)$ converges to the free Laplacian in norm resolvent sense, i.e., for any $z\in\CO\backslash(-\infty,0]$, there holds
\be\label{norm-conv}
\lim_{\ve\searrow 0}\left\|(-A(\ve)+z)^{-1}-(-\Delta +z)^{-1}\right\|=0\,.
\ee
If the sine and the cosine operator function of $A(\ve)$ had norms in $\B(L^{2}(\RE^{3}))$ uniformly bounded with respect to $\ve$, then \eqref{norm-conv} would imply the convergence of the solutions of the Cauchy problem for the wave equation for $A(\ve)$ to the ones for $\Delta$, and that would be true for generic initial data (see \cite[Theorem 8.6 in Section 8, Chapter 2]{Gold}). However, by \eqref{equiN}, these operator norms behave like $1/\ve$, and so, the convergence of the dynamics generated by $A(\ve)$ to the free one is not guaranteed. As we will prove in the following sections, thanks to Lemma  \ref{cp-ep}, convergence to the free case surely holds whenever the initial data are supported outside $\Omega_{\ve}$.
\end{remark}
\section{Operator estimates}
In the following, we use the
identification%
\[
L^{2}(  \mathbb{R}^{3})  \equiv L^{2}(  \RE^{3}\backslash\Omega_{\ve})    \oplus L^{2}(    \Omega_{\varepsilon})    \,,
\]
provided by the unitary map $u\mapsto    1_{\RE^{3}\backslash\Omega_{\ve}}u\oplus 1_{\Omega_{\ve}}$.
Here and below, given the measurable domain $D\subset\RE^{3}$ we denote by $1_{D}: L^{2}(\RE^{3})\to L^{2}(D)$ the bounded linear operator given by the restriction to $D$; then, the adjoint $1^{*}_{D}:L^{2}(D)\to L^{2}(\RE^{3})$ corresponds to the extension by zero. Notice that $1^{*}_{D}1_{D}u=\chi_{D}u$.\par
In such a framework, the free resolvent rewrites as the  operator block matrix 
\begin{equation}
R_{z}=%
\begin{bmatrix}
1_{{\RE^{3}\backslash\Omega_{\ve}}}R_{z}1^{*}_{{\RE^{3}\backslash\Omega_{\ve}}} & 1_{\Omega
_{\varepsilon}^{c}}R_{z}1^{*}_{\Omega_{\varepsilon}}\\
1_{\Omega_{\varepsilon}}R_{z}1^{*}_{{\RE^{3}\backslash\Omega_{\ve}}} & 1_{\Omega
_{\varepsilon}}R_{z}1^{*}_{\Omega_{\varepsilon}}%
\end{bmatrix}:L^{2}(  \RE^{3}\backslash\Omega_{\ve})    \oplus L^{2}(    \Omega_{\varepsilon})\to L^{2}(    \RE^{3}\backslash\Omega_{\ve})    \oplus L^{2}(    \Omega_{\varepsilon})
\,. \label{Res_matrix}%
\end{equation}
By the identity 
$$
T(\varepsilon)R_{z}=(    1-\varepsilon^{2})    \chi_{\Omega_{\varepsilon
}}\Delta R_{z}=-(    1-\varepsilon^{2})    1_{\Omega_{\varepsilon}%
}(    1-zR_{z})\,,    
$$
making use of \eqref{Res_matrix}, we get%
\begin{align*}
1+T(\varepsilon)R_{z}&=\begin{bmatrix}
1 &0 \\
0& 1%
\end{bmatrix}
-(    1-\varepsilon^{2})    \left(  
\begin{bmatrix}
0& 0\\
0& 1%
\end{bmatrix}
-z%
\begin{bmatrix}
0& 0\\
1_{\Omega_{\varepsilon}}R_{z}1^{*}_{{\RE^{3}\backslash\Omega_{\ve}}} & 1_{\Omega
_{\varepsilon}}R_{z}1^{*}_{\Omega_{\varepsilon}}%
\end{bmatrix}
\right)    \\
& \\
&  \left.  =%
\begin{bmatrix}
1 & 0\\
(    1-\varepsilon^{2})  z  1_{\Omega_{\varepsilon}}R_{z}%
1^{*}_{{\RE^{3}\backslash\Omega_{\ve}}} &     \varepsilon
^{2}+(    1-\varepsilon^{2})    z1_{\Omega_{\varepsilon}}R_{z}    1^{*}_{\Omega_{\varepsilon}}%
\end{bmatrix}
\,.\right.
\end{align*}
Hence, for any $z\in\CO$ such that $z^{2}\in\CO\backslash(-\infty,0]$, introducing the notations
$$
S_{z}(    \varepsilon)  :=    \varepsilon^{2}+(    1-\varepsilon^{2})  
z^{2}1_{\Omega
_{\varepsilon}}R_{z^{2}}   1^{*}_{\Omega_{\varepsilon}}\,,
$$
\be
P_{z}(    \varepsilon):=(  
1-\varepsilon^{2})    S_{z}(    \varepsilon)^{-1}  
z^{2}1_{\Omega_{\varepsilon}}R_{z^{2}}1^{*}_{\Omega_{\varepsilon
}^{c}}\,, \label{K_k_def}%
\ee
we get%
\begin{equation}
( 1+T(\varepsilon)R_{z^{2}})    ^{-1}=%
\begin{bmatrix}
1 & 0\\
P_{z}(    \varepsilon)    & S_{z}(    \varepsilon)^{-1}  
\end{bmatrix}
\label{KR_id_1}%
\end{equation}
and so
\begin{align*}
(1+T(\varepsilon)R_{z^2})^{-1}T(\varepsilon)=\,&(1-\ve^{2})
\begin{bmatrix}
1 & 0\\
P_{z}(    \varepsilon)    & S_{z}(    \varepsilon)^{-1}  
\end{bmatrix}\begin{bmatrix}
0& 0\\
1_{\Omega_{\varepsilon}}\Delta1^{*}_{{\RE^{3}\backslash\Omega_{\ve}}} & 1_{\Omega
_{\varepsilon}}\Delta1^{*}_{\Omega_{\varepsilon}}
\end{bmatrix}\\
=\,&(1-\ve^{2})\begin{bmatrix}
0& 0\\
S_{z}(    \varepsilon)^{-1}1_{\Omega_{\varepsilon}}\Delta1^{*}_{{\RE^{3}\backslash\Omega_{\ve}}} & S_{z}(    \varepsilon)^{-1}1_{\Omega
_{\varepsilon}}\Delta1^{*}_{\Omega_{\varepsilon}}
\end{bmatrix}\\
=\,&(1-\ve^{2})1^{*}_{\Omega_{\varepsilon}}S_{z}(    \varepsilon)^{-1}1_{\Omega_{\varepsilon}}\Delta\,.
\end{align*}
This gives
\begin{align*}
K_{z^{2}}(\ve)=R_{z^{2}}(1+T(\varepsilon)R_{z^2})^{-1}T(\varepsilon)R_{z^{2}}=\,&(1-\ve^{2})R_{z^{2}}1^{*}_{\Omega_{\varepsilon}}S_{z}(    \varepsilon)^{-1}1_{\Omega_{\varepsilon}}\Delta{R}_{z^{2}}
\,.
\end{align*}
Introducing the unitary dilation operator
$$
U(\ve):L^{2}(\RE^{3})\to L^{2}(\RE^{3})\,,\qquad U(\ve)u(x):=\ve^{3/2}u(\ve x)\,,
$$
one has
$$
U(\ve)\chi_{\Omega_{\ve}}=\chi_{\Omega}U(\ve)\,,\qquad 
U(\ve)\chi_{{\RE^3\backslash\Omega_{\ve}}}=\chi_{{\RE^3\backslash\Omega}}U(\ve)
$$
and 
$$
U(\ve)R_{z^{2}}=\ve^{2}R_{(\ve z)^{2}}U(\ve)\,.
$$
Hence, defining
\begin{equation}
N_{z}:=1_{\Omega}R_{z^{2}}1^{*}_{\Omega}:L^{2}(\Omega)\to L^{2}(\Omega)\,,
\label{N_k_def}%
\end{equation}
one has
\begin{align*}
&1_{\Omega_{\ve}}R_{z^{2}}1^{*}_{\Omega_{\ve}}=\ve^{2}1_{\Omega_{\ve}}U(\ve)^{*}R_{(\ve z)^{2}}U(\ve)1^{*}_{\Omega_{\ve}}\\
=\,&\ve^{2}1_{\Omega_{\ve}}U(\ve)^{*}(\chi_{\Omega}+\chi_{{\RE^3\backslash\Omega}})R_{(\ve z)^{2}}(\chi_{\Omega}+\chi_{{\RE^3\backslash\Omega}})U(\ve)1^{*}_{\Omega_{\ve}}\\
=\,&\ve^{2}1_{\Omega_{\ve}}(U(\ve)^{*}\chi_{\Omega}+\chi_{{\RE^{3}\backslash\Omega}}U(\ve)^{*})R_{(\ve z)^{2}}(\chi_{\Omega}U(\ve)+U(\ve)\chi_{{\RE^{3}\backslash\Omega}})1^{*}_{\Omega_{\ve}}\\
=\,&\ve^{2}1_{\Omega_{\ve}}U(\ve)^{*}\chi_{\Omega}R_{(\ve z)^{2}}\chi_{\Omega}U(\ve)1^{*}_{\Omega_{\ve}}\\
=\,&\ve^{2}1_{\Omega_{\ve}}U(\ve)^{*}1^{*}_{\Omega}1_{\Omega}R_{(\ve z)^{2}}1^{*}_{\Omega}1_{\Omega}U(\ve)1^{*}_{\Omega_{\ve}}\\
=\,&\ve^{2}1_{\Omega_{\ve}}U(\ve)^{*}1^{*}_{\Omega}N_{\ve z}1_{\Omega}U(\ve)1^{*}_{\Omega_{\ve}}\,.
\end{align*}
Defining
$$
{M}_{z}(\ve):=   1+(    1-\varepsilon^{2})  
z^{2}{N}_{\varepsilon z}\,,
$$
there follows 
\begin{equation}
S_{z}(    \varepsilon)    =\varepsilon^{2}%
1_{\Omega_{\ve}}U^{*}_{\varepsilon}1^{*}_{\Omega}{M}_{z}(\ve)1_{\Omega}U_{\varepsilon}1^{*}_{\Omega_{\ve}}\qquad \label{S_k_id}%
\end{equation}
and then, defining 
\be\label{gze}
G_{z}(\ve):=\ve^{-3/2}R_{z^{2}}\chi_{\Omega_{\ve}}U_{\varepsilon}^{*}1^{*}_{\Omega}\equiv
\ve^{-3/2}R_{z^{2}}U_{\varepsilon}^{*}1^{*}_{\Omega}
\ee
one finally gets the following
\begin{lemma} Let $K_{z}(\ve)$ be as in Theorem \ref{TK}. Then, for any $u\in H^{2}(\RE^{3})$ and for any $z\in\CO$ such that $z^{2}\in\CO\backslash(-\infty,0]$,
\be\label{Kz}
K_{z^{2}}(\ve)u=
(1-\ve^{2})\ve\, G_{z}(\ve)
{M}_{z}(\ve)^{-1}G_{\bar z}(\ve)^{*}\Delta u\,.
\ee
\end{lemma}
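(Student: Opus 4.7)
The plan is to start from the factorization already derived in the paragraphs preceding the lemma, namely
\[
K_{z^{2}}(\ve)=(1-\ve^{2})\,R_{z^{2}}\,1^{*}_{\Omega_{\ve}}\,S_{z}(\ve)^{-1}\,1_{\Omega_{\ve}}\Delta R_{z^{2}}\,,
\]
combined with the representation \eqref{S_k_id} of $S_{z}(\ve)$ through the dilation $U_{\ve}$. The key idea is to recognize that the operator $V_{\ve}:=1_{\Omega_{\ve}}U_{\ve}^{*}1^{*}_{\Omega}:L^{2}(\Omega)\to L^{2}(\Omega_{\ve})$ is unitary, with inverse $V_{\ve}^{-1}=1_{\Omega}U_{\ve}1^{*}_{\Omega_{\ve}}$; this follows at once from the unitarity of $U_{\ve}$ on $L^{2}(\RE^{3})$ together with the intertwining identities $U_{\ve}\chi_{\Omega_{\ve}}=\chi_{\Omega}U_{\ve}$ already recorded in the text. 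Under this identification \eqref{S_k_id} reads $S_{z}(\ve)=\ve^{2}\,V_{\ve}\,{M}_{z}(\ve)\,V_{\ve}^{-1}$, so that $S_{z}(\ve)^{-1}=\ve^{-2}\,V_{\ve}\,{M}_{z}(\ve)^{-1}\,V_{\ve}^{-1}$.

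Next I would simplify the left outer factor: using $1^{*}_{\Omega_{\ve}}V_{\ve}=\chi_{\Omega_{\ve}}U_{\ve}^{*}1^{*}_{\Omega}$ and the definition \eqref{gze}, one obtains $R_{z^{2}}1^{*}_{\Omega_{\ve}}V_{\ve}=\ve^{3/2}\,G_{z}(\ve)$. For the right outer factor I would exploit that $\Delta$ commutes with $R_{z^{2}}$ on $H^{2}(\RE^{3})$, together with $V_{\ve}^{-1}1_{\Omega_{\ve}}=1_{\Omega}U_{\ve}\chi_{\Omega_{\ve}}=1_{\Omega}\chi_{\Omega}U_{\ve}=1_{\Omega}U_{\ve}$; this gives
\[
V_{\ve}^{-1}\,1_{\Omega_{\ve}}\,\Delta R_{z^{2}}u=1_{\Omega}U_{\ve}R_{z^{2}}\Delta u\,.
\]
Comparing with the adjoint $G_{\bar z}(\ve)^{*}=\ve^{-3/2}\,1_{\Omega}U_{\ve}R_{z^{2}}$ (obtained from the definition \eqref{gze} of $G_{\bar z}(\ve)$ by using $(R_{\bar z^{2}})^{*}=R_{z^{2}}$), one sees that the right factor equals $\ve^{3/2}\,G_{\bar z}(\ve)^{*}\Delta u$.

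Putting these three ingredients together yields
\[
K_{z^{2}}(\ve)u=(1-\ve^{2})\,\ve^{-2}\cdot\ve^{3/2}\,G_{z}(\ve)\,{M}_{z}(\ve)^{-1}\,\ve^{3/2}\,G_{\bar z}(\ve)^{*}\Delta u=(1-\ve^{2})\,\ve\,G_{z}(\ve)\,{M}_{z}(\ve)^{-1}\,G_{\bar z}(\ve)^{*}\Delta u\,,
\]
which is \eqref{Kz}. The main obstacle, such as it is, is purely notational: one has to be meticulous in handling the four kinds of maps at play (restriction $1_{D}$, extension $1^{*}_{D}$, multiplication by $\chi_{D}$, and the dilation $U_{\ve}$ with its $\ve^{\pm 3/2}$ Jacobian), in particular in computing the adjoint $G_{\bar z}(\ve)^{*}$ and in verifying that the three $\ve$-scaling factors $\ve^{-2}$, $\ve^{3/2}$, $\ve^{3/2}$ combine to the single prefactor $\ve$ announced on the right-hand side of \eqref{Kz}.
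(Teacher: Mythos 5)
Your proposal is correct and follows essentially the same route as the paper's proof: both start from the factorization $K_{z^{2}}(\ve)=(1-\ve^{2})R_{z^{2}}1^{*}_{\Omega_{\ve}}S_{z}(\ve)^{-1}1_{\Omega_{\ve}}\Delta R_{z^{2}}$, invert \eqref{S_k_id} via the unitary identification $1_{\Omega_{\ve}}U_{\ve}^{*}1^{*}_{\Omega}$, and absorb the outer factors into $G_{z}(\ve)$ and $G_{\bar z}(\ve)^{*}$ with the same bookkeeping of the powers of $\ve$. The only difference is cosmetic: you name the unitary $V_{\ve}$ explicitly, whereas the paper carries out the same cancellations inline.
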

\begin{proof}
\begin{align*}
K_{z^{2}}(\ve)u=\,&R_{z^{2}}(1+T(\varepsilon)R_{z^2})^{-1}T(\varepsilon)R_{z^{2}}u\\
=\,&(1-\ve^{2})\ve^{-2}
R_{z^{2}}1^{*}_{\Omega_{\varepsilon}}1_{\Omega_{\varepsilon}}
U_{\varepsilon}^{*}1_{\Omega}{M}_{z}(\ve)^{-1}1^{*}_{\Omega}U_{\varepsilon}1_{\Omega_{\ve}}^{*}1_{\Omega_{\varepsilon}}\Delta{R}_{z^{2}}u
\\
=\,&(1-\ve^{2})\ve^{-2}
R_{z^{2}}\chi_{\Omega_{\ve}}U_{\varepsilon}^{*}1^{*}_{\Omega}
{M}_{z}(\ve)^{-1}1_{\Omega}U_{\varepsilon}\chi_{\Omega_{\ve}}{R}_{z^{2}}\Delta u
\\
=\,&
(1-\ve^{2})\ve G_{z}(\ve)
{M}_{z}(\ve)^{-1}G_{\bar z}(\ve)^{*}\Delta u
\,.
\end{align*}
\end{proof}
\subsection{The Newton potential operator of $\Omega$} Now, we introduce the not negative, symmetric  operator $N_{0}$ in $L^{2}(\Omega)$ defined by 
$$
N_{0}:L^{2}(\Omega)\to L^{2}(\Omega)\,,\qquad 
N_{0}u(x):=\frac1{4\pi}\int_{\Omega}\frac{u(y)\,dy}{|x-y|}\,.
$$
By Sobolev's inequality (see, e.g., \cite[Appendix 2 to I.1]{Sim})
$$
\int_{\Omega\times\Omega}\frac{dx\,dy}{|x-y|^{2}}=
\int_{\RE^{3}\times\RE^{3}}\chi_{\Omega}(x)\chi_{\Omega}(y)\,\frac{dx\,dy}{|x-y|^{2}}\lesssim \|\chi_{\Omega}\|^{2}_{L^{3/2}(\RE^{3})}
$$
and so $N_{0}$ is Hilbert-Schmidt and hence compact (see, e.g., \cite[Theorem A.28]{Sim}). 
Since $-\Delta N_{0}u=u$, one gets $\ker(N_{0})=\{0\}$. 
By the spectral theory of compact symmetric operators (see, e.g., \cite[Section 6]{Jo}), $\sigma_{d}(N_{0})=\sigma(N_{0})\backslash\{0\}$ and the orthonormal 
sequence $\{e_{k}\}_{1}^{+\infty}$ of eigenvectors corresponding to the discrete spectrum is an orthonormal base of $\ker(N_{0})^{\perp}=L^{2}(\Omega)$. We denote by $\{\lambda_{k}\}_{1}^{+\infty}$, the set of eigenvalues (counting multiplicities) indexed in decreasing order. One has $\lambda_{1}=\|N_{0}\|$ and $\lambda_{k}\searrow 0$. Furthermore, for any $u\in L^{2}(\Omega)$ and for any bounded measurable function $f:[\,0,\|N_{0}\|\,]\to\CO$,
\be\label{sp-res}
f(N_{0})u=\sum_{k=1}^{+\infty}f(\lambda_{k})\langle e_{k},u\rangle e_{k}\,.
\ee
\begin{lemma}\label{LN}  Let $z=c+i\gamma$, $c>0$, $\gamma\in\RE$. Then,  $1+z^{2}N_{0}$ has a bounded inverse and
$$
\| (1+z^{2}N_{0})^{-1}\|\le 
\begin{cases}
1&|\gamma|\le c\\
{|z|^{2}}{(2c|\gamma|\,)^{-1}}&|\gamma|> c
\end{cases}\quad.
$$
\end{lemma}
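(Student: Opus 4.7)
The plan is to use the spectral theorem for the compact self-adjoint operator $N_0$: since $N_0\ge 0$ with $\sigma(N_0)\subseteq[0,\|N_0\|]$, the operator $1+z^2N_0$ is a Borel function of $N_0$, so it is invertible whenever $-1/z^2\notin\sigma(N_0)$, and in that case
\[
\|(1+z^2N_0)^{-1}\|=\sup_{\lambda\in\sigma(N_0)}\frac{1}{|1+z^2\lambda|}\le\sup_{\lambda\ge 0}\frac{1}{|1+z^2\lambda|}.
\]
The invertibility is automatic: with $z=c+i\gamma$, $c>0$, one has $\mathrm{Im}(z^2)=2c\gamma$ and $\mathrm{Re}(z^2)=c^2-\gamma^2$, so $-1/z^2$ is not a non-negative real number (if $\gamma\ne 0$ it has non-zero imaginary part; if $\gamma=0$ it equals $-1/c^2<0$), hence $-1/z^2\notin[0,\|N_0\|]$. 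The whole problem therefore reduces to minimizing $\lambda\mapsto|1+z^2\lambda|^2$ on $[0,\infty)$.

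Writing $a:=c^2-\gamma^2$ and $b:=2c\gamma$, one has
\[
|1+z^2\lambda|^2=(1+a\lambda)^2+b^2\lambda^2,\qquad \lambda\ge 0,
\]
a quadratic in $\lambda$ with derivative $2a+2(a^2+b^2)\lambda=2a+2|z|^4\lambda$. The strategy is to split according to the sign of $a$, i.e., according to whether $|\gamma|\le c$ or $|\gamma|>c$.

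If $|\gamma|\le c$ then $a\ge 0$, the derivative is non-negative on $[0,\infty)$, so the minimum is attained at $\lambda=0$, giving $|1+z^2\lambda|\ge 1$ and the claimed bound $1$. If instead $|\gamma|>c$ then $a<0$ and the interior minimum occurs at $\lambda^\star=-a/|z|^4=(\gamma^2-c^2)/|z|^4>0$; a direct substitution using $1+a\lambda^\star=b^2/|z|^4$ gives
\[
\min_{\lambda\ge 0}|1+z^2\lambda|^2=\frac{b^4}{|z|^8}+\frac{b^2a^2}{|z|^8}=\frac{b^2}{|z|^4}=\frac{4c^2\gamma^2}{|z|^4},
\]
whence $|1+z^2\lambda|\ge 2c|\gamma|/|z|^2$ for all $\lambda\ge 0$, yielding the stated bound $|z|^2/(2c|\gamma|)$.

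There is no real obstacle here beyond bookkeeping of the elementary optimization; the only point requiring mild care is the identification of the critical value via $1+a\lambda^\star=b^2/(a^2+b^2)$ and the reduction $a^2+b^2=|z|^4$, which collapses the expression to the clean form $b^2/|z|^4$. The spectral-theoretic reduction to a scalar problem is the key conceptual step and is immediate from $N_0$ being self-adjoint on $L^2(\Omega)$.
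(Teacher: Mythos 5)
Your proof is correct and follows essentially the same route as the paper's: reduce via the spectral theorem to the scalar minimization of $|1+z^2\lambda|^2$ over $\lambda\ge 0$, then split cases according to the sign of the unconstrained critical point $\lambda^\star=(\gamma^2-c^2)/|z|^4$, yielding the value $1$ when $|\gamma|\le c$ and $4c^2\gamma^2/|z|^4$ when $|\gamma|>c$. The only cosmetic difference is that you keep the $a,b$ notation and evaluate the critical value via $1+a\lambda^\star=b^2/|z|^4$, whereas the paper expands $p(\lambda)=(\gamma^2+c^2)^2\lambda^2-2(\gamma^2-c^2)\lambda+1$ directly; both immediately collapse to the same bound.
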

\begin{proof} By $\bar z^{2}=(c^{2}-\gamma^{2})-i2c\gamma\in\CO\backslash (-\infty,0]$ and by $\sigma(-N_{0})\subset \left[-\|N_{0}\|,0\,\right]$, one gets $z^{-2}=|z|^{-4}\bar z^{2}\in \varrho(-N_{0})$. Hence,
$$
(1+z^{2}N_{0})={z^{2}}\,(N_{0}+z^{-2})
$$
has a bounded inverse and, by the functional calculus for $N_{0}$, 
$$
\|(1+z^{2}N_{0})^{-1}\|=\sup_{\lambda\in\sigma(N_{0})}\,\frac1{|1+z^{2}\lambda|}\le 
\sup_{0\le\lambda\le\|N_{0}\|}\,\frac1{|1+z^{2}\lambda|}=\left(\inf_{0\le\lambda\le\|N_{0}\|}\,|1+z^{2}\lambda|\right)^{-1}\,.
$$
The quadratic polynomial
$$
p(\lambda):=|1+z^{2}\lambda|^{2}=\big(1+(c^{2}-\gamma^{2})\lambda\big)^{2}+4c^{2}\gamma^{2}\lambda^{2}=
(\gamma^{2}+c^{2})^{2}\lambda^{2}-2(\gamma^{2}-c^{2})\lambda+1
$$
has minimum at $\lambda=\lambda_{\circ}$,
$$
\lambda_{\circ}:=\frac{\gamma^{2}-c^{2}}{(\gamma^{2}+c^{2})^{2}}\,,
$$ 
where it assumes the value 
$$
p(\lambda_{\circ})=1-\frac{(\gamma^{2}-c^{2})^{2}}{(\gamma^{2}+c^{2})^{2}}=\frac{4c^{2}\gamma^{2}}{|z|^{4}}\,.
$$ 
Therefore, 
$$
\inf_{0\le\lambda\le\|N_{0}\|}\,p(\lambda)=
\begin{cases}p(0)&\lambda_{\circ}< 0\\
p(\lambda_{\circ})&0\le\lambda_{\circ}\le\|N_{0}\|\\
p(\|N_{0}\|)&\lambda_{\circ}>\|N_{0}\|\,\,.
\end{cases}
$$
However, since $\lambda_{\circ}\searrow 0$ as $|\gamma|\nearrow+\infty$, for our later purposes it suffices to use the  rougher estimate 
$$
\inf_{0\le\lambda\le\|N_{0}\|}\,p(\lambda)\ge\inf_{\lambda\ge 0}\,p(\lambda)=
\begin{cases}p(0)&\lambda_{\circ}< 0\\
p(\lambda_{\circ})&\lambda_{\circ}\ge 0\end{cases}
=\begin{cases}1&|\gamma|\le c\\
4c^{2}\gamma^{2}|z|^{-4}&|\gamma|> c\ .
\end{cases}
$$
\end{proof}
\begin{lemma}\label{LM} Let $z\in\CO\backslash(-\infty,0]$. Then
$$
{M}_{z}(\ve)=(1+z^{2}N_{0})+{M}^{(1)}_{z}(\ve)\,,
$$
where
$$
\|{M}^{(1)}_{z}(\ve)\|\le \ve\,|z|^{2}\left(
{\frac{|\Omega|}{4\pi}}^{1/2}|z|+\|N_{0}\|\right)\,.
$$
\end{lemma}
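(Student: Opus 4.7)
The plan is to compute $M^{(1)}_z(\ve)$ explicitly from the definitions and reduce the estimate to the size of two operators on $L^2(\Omega)$. From the definition $M_z(\ve) = 1 + (1-\ve^2)z^2 N_{\ve z}$, a direct algebraic manipulation gives
\begin{equation*}
M^{(1)}_z(\ve) \,=\, M_z(\ve) - (1+z^2N_0) \,=\, z^2(N_{\ve z}-N_0) - \ve^2 z^2 N_{\ve z},
\end{equation*}
so by the triangle inequality the task reduces to bounding $\|N_{\ve z}-N_0\|$, which should produce the $|z|^3(|\Omega|/(4\pi))^{1/2}$ contribution, and $\|N_{\ve z}\|$, which should produce the $|z|^2\|N_0\|$ contribution.

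For $\|N_{\ve z}\|$ I would use that $N_{\ve z}=1_\Omega R_{(\ve z)^2}1_\Omega^*$ has integral kernel $\mathcal{G}_{\ve z}(x-y)=e^{-\ve z|x-y|}/(4\pi|x-y|)$ on $\Omega\times\Omega$; in the regime $\mathrm{Re}(z)\ge 0$ fixed by the Bromwich contours used in Lemma \ref{cp-ep}, this is dominated in modulus by the $N_0$-kernel $1/(4\pi|x-y|)$. The pointwise estimate $|N_{\ve z}u|\le N_0|u|$ then yields $\|N_{\ve z}\|\le\|N_0\|$. For $\|N_{\ve z}-N_0\|$, whose kernel is $(e^{-\ve z|x-y|}-1)/(4\pi|x-y|)$, the elementary inequality $|e^{-w}-1|\le|w|$ valid for $\mathrm{Re}(w)\ge 0$ (immediate from $e^{-w}-1=-\int_0^1 w\,e^{-tw}\,dt$) provides the uniform kernel bound $\ve|z|/(4\pi)$; a Cauchy--Schwarz / Hilbert--Schmidt computation on $\Omega\times\Omega$ then produces an operator-norm estimate of order $\ve|z|(|\Omega|/(4\pi))^{1/2}$.

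Combining the two estimates by the triangle inequality yields the stated bound (using $\ve^2\|N_0\|\le\ve\|N_0\|$ for the second term). The step needing the most care is the Hilbert--Schmidt reduction for $N_{\ve z}-N_0$: one must invoke $\mathrm{Re}(z)>0$ both to obtain the clean inequality $|e^{-w}-1|\le|w|$ and for the pointwise domination of $|\mathcal{G}_{\ve z}|$ by the real Newton kernel used in the previous step, and then carry out the integrals over $\Omega\times\Omega$ so that the geometric factor comes out as $(|\Omega|/(4\pi))^{1/2}$. No deeper structural difficulty appears: the decomposition $M^{(1)}_z(\ve)=z^2(N_{\ve z}-N_0)-\ve^2z^2N_{\ve z}$ is already the essential observation, and the rest is careful bookkeeping of constants arising from the kernel estimates.
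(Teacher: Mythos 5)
Your proposal is correct and follows essentially the same route as the paper: the same splitting $M^{(1)}_{z}(\ve)=z^{2}(N_{\ve z}-N_{0})-\ve^{2}z^{2}N_{\ve z}$ (the paper keeps $N_{0}$ rather than $N_{\ve z}$ in the second term, which is immaterial since both have norm $\le\|N_{0}\|$ for $\mathrm{Re}(z)\ge 0$), the same tacit restriction to $\mathrm{Re}(z)\ge 0$, and the same elementary kernel estimate for $N_{\ve z}-N_{0}$ --- your inequality $|e^{-w}-1|\le |w|$ is exactly the paper's representation $N_{\ve z}-N_{0}=z\int_{0}^{\ve}N^{(1)}_{sz}\,ds$ with the integrand bounded through the modulus of its kernel. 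The one point to tighten is the constant: a uniform kernel bound $\ve|z|/(4\pi)$ on $\Omega\times\Omega$ gives, by Hilbert--Schmidt or by the $L^{1}$--Cauchy--Schwarz argument, the operator bound $\ve|z|\,|\Omega|/(4\pi)$ rather than the $\ve|z|\left(|\Omega|/4\pi\right)^{1/2}$ you announce; this is of the same form as the stated estimate and harmless downstream (it only alters the numerical value of $c_{1}$ in Lemma \ref{CC}), and the paper's own computation exhibits the same slip, its displayed bound for $\int_{\Omega}|N^{(1)}_{z}u|^{2}$ having dropped a factor $|\Omega|$ coming from the $dx$-integration.
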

\begin{proof} One has
\begin{align*}
{M}^{(1)}_{z}(\ve)=\,&{M}_{z}(\ve)-(1+z^{2}N_{0})
=z^{2}\big((1-\ve^{2})({N}_{\varepsilon z}-N_{0})-\ve^{2}N_{0}\big)
\end{align*}
and so, by $\ve<1$, 
$$
\|{M}^{(1)}_{z}(\ve)\|\le |z|^{2}\left(\| N_{\ve z}-N_{0}\| +\ve^{2}\|N_{0}\|\right)\le
|z|^{2}\left(\| N_{\ve z}-N_{0}\| +\ve\,\|N_{0}\|\right)\,.
$$
Furthermore,
$$
N_{\ve z}-N_{0}=z\int_{0}^{\ve}N_{sz}^{(1)}ds\,,
$$
where 
$$
N^{(1)}_{z}:L^{2}(\Omega)\to L^{2}(\Omega)\,,\qquad N^{(1)}_{z}u(x):=-\frac1{4\pi}\int_{\Omega}\,e^{-z|x-y|}\,u(y)\,dy\,.
$$
Then, by
$$
\int_{\Omega}|N^{(1)}_{z}u(x)|^{2}dx\le\frac{e^{-2\text{Re}(z)\,d_{\Omega}}}{(4\pi)^{2}}\left(\int_{\Omega}|u(y)|\,dy\right)^{\!\! 2}\le \frac{|\Omega|}{(4\pi)^{2}}\,\| u\| ^{2}_{L^{2}(\Omega)}\,,
$$
one obtains
\begin{align*}
\| N_{\ve z}-N_{0}\|\le|z|\int_{0}^{\ve}\| N^{(1)}_{sz}\|\,ds
\le\ve\,\frac{|z|}{4\pi}\,|\Omega|^{1/2}
\,.
\end{align*}
\end{proof}
\begin{lemma}\label{LL} Let $\ve\in (0,1)$ and $z=c+i\gamma$, $c>0$, $\gamma\in \RE$, be such that
\be\label{1/2}
\|(1+z^{2}N_{0})^{-1}\|\,\|M^{(1)}_{z}(\ve)\|\le \frac12\,.
\ee
Then
$$ 
{M}_{z}(\ve)^{-1}=(1+z^{2}N_{0})^{-1}+\Lambda_{z}(\ve)
$$
and
$$
\|\Lambda_{z}(\ve)\|\le 2\ve\,|z|^{2}\left({\frac{|\Omega|}{4\pi}}^{1/2}\!\!|z|+\|N_{0}\|\right)\,\|(1+z^{2}N_{0})^{-1}\|^{2}\,.
$$
\end{lemma}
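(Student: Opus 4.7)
The plan is to derive the expansion by viewing $M_z(\ve)$ as a bounded perturbation of the invertible operator $1+z^2 N_0$ and applying a Neumann-series argument, with the hypothesis \eqref{1/2} serving precisely to guarantee convergence.

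Concretely, I would start from Lemma \ref{LM}, which writes $M_z(\ve) = (1+z^2N_0) + M^{(1)}_z(\ve)$, and factor on the left as
\begin{equation*}
M_z(\ve) = (1+z^2N_0)\bigl[\,1 + (1+z^2N_0)^{-1}M^{(1)}_z(\ve)\,\bigr].
\end{equation*}
Setting $B_z(\ve) := (1+z^2N_0)^{-1}M^{(1)}_z(\ve)$, the hypothesis \eqref{1/2} reads $\|B_z(\ve)\| \le 1/2$, so $1+B_z(\ve)$ is boundedly invertible in $\bou(L^2(\Omega))$ by the Neumann series, with $\|(1+B_z(\ve))^{-1}\|\le \sum_{n\ge 0} 2^{-n} = 2$. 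Consequently $M_z(\ve)$ is invertible and
\begin{equation*}
M_z(\ve)^{-1} = (1+B_z(\ve))^{-1}(1+z^2N_0)^{-1}.
\end{equation*}

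Next I would isolate the leading term by writing $(1+B_z(\ve))^{-1} = 1 - B_z(\ve)(1+B_z(\ve))^{-1}$, which yields
\begin{equation*}
M_z(\ve)^{-1} = (1+z^2N_0)^{-1} - B_z(\ve)(1+B_z(\ve))^{-1}(1+z^2N_0)^{-1},
\end{equation*}
so that the remainder is
\begin{equation*}
\Lambda_z(\ve) = -(1+z^2N_0)^{-1}M^{(1)}_z(\ve)(1+B_z(\ve))^{-1}(1+z^2N_0)^{-1}.
\end{equation*}

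Finally, I would estimate each factor: $\|(1+B_z(\ve))^{-1}\|\le 2$ from the Neumann bound, and $\|M^{(1)}_z(\ve)\|\le \ve|z|^2\bigl((|\Omega|/(4\pi))^{1/2}|z|+\|N_0\|\bigr)$ from Lemma \ref{LM}. Combining these with the two factors of $(1+z^2N_0)^{-1}$ gives exactly
\begin{equation*}
\|\Lambda_z(\ve)\|\le 2\ve|z|^2\Bigl(\bigl(\tfrac{|\Omega|}{4\pi}\bigr)^{1/2}|z|+\|N_0\|\Bigr)\,\|(1+z^2N_0)^{-1}\|^2,
\end{equation*}
as stated. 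There is no real obstacle here; this is a standard second-resolvent/Neumann manipulation, and the main subtlety is only bookkeeping in checking that the constant $2$ in front comes from summing the geometric series bounded by $1/2$.
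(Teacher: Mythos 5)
Your proposal is correct and follows essentially the same route as the paper: the paper also factors $M_z(\ve)=(1+z^{2}N_0)\bigl(1+(1+z^{2}N_0)^{-1}M^{(1)}_z(\ve)\bigr)$, inverts the second factor by a Neumann series bounded by $2$ using \eqref{1/2}, and writes $\Lambda_z(\ve)=-M_z(\ve)^{-1}M^{(1)}_z(\ve)(1+z^{2}N_0)^{-1}$, which coincides with your expression since $B_z(\ve)(1+B_z(\ve))^{-1}=(1+B_z(\ve))^{-1}B_z(\ve)$. The resulting bound is identical.
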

\begin{proof}  
By 
$$
{M}_{z}(\ve)=(1+z^{2}N_{0})(1+(1+z^{2}N_{0})^{-1}{M}^{(1)}_{z}(\ve))
$$
and by \eqref{1/2}, one gets
$$
{M}_{z}(\ve)^{-1}=\left(\sum_{n=0}^{+\infty}((1+z^{2}N_{0})^{-1}{M}^{(1)}_{z}(\ve))^{n}\right)(1+z^{2}N_{0})^{-1}
$$
and
$$
\|{M}_{z}(\ve)^{-1}\|\le 2\,\|(1+z^{2}N_{0})^{-1}\|\,.
$$
Hence, by Lemma \ref{LM} and by
\be\label{MMN}
\Lambda_{z}(\ve)=-{M}_{z}(\ve)^{-1}M^{(1)}_{z}(\ve)(1+z^{2}N_{0})^{-1}\,,
\ee
one gets
\begin{align*}
&\|\Lambda_{z}(\ve)\|\le \|{M}_{z}(\ve)^{-1}\|\,\|M^{(1)}_{z}(\ve)\|\,\|(1+z^{2}N_{0})^{-1}\|\\
&\,\le 2\ve\,|z|^{2}\left({\frac{|\Omega|}{4\pi}}^{1/2}\!\!|z|+\|N_{0}\|\right)\,\|(1+z^{2}N_{0})^{-1}\|^{2}\,.
\end{align*}
\end{proof}
\begin{lemma}\label{LL1} Let $z=c+i\gamma$, $c>0$, $\gamma\in \RE$ be such that $|\gamma|\le c$. Then, 
$$
\|\Lambda_{z}(\ve)\|\le \ve\,2c^{2}\left({\frac{|\Omega|^{1/2}c}{2\sqrt2\pi}}+\|N_{0}\|\right)\,. 
$$
\end{lemma}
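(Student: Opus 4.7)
The plan is to obtain the estimate as a direct specialization of Lemma \ref{LL} to the strip $|\gamma|\le c$, where the resolvent factor $(1+z^{2}N_{0})^{-1}$ is trivially bounded. First I would note the two elementary geometric bounds that are available in this regime: since $|z|^{2}=c^{2}+\gamma^{2}$, one has $|z|^{2}\le 2c^{2}$ and consequently $|z|\le\sqrt{2}\,c$. Second, Lemma \ref{LN} gives $\|(1+z^{2}N_{0})^{-1}\|\le 1$, so in particular its square is also bounded by $1$ and will simply disappear from the estimate in Lemma \ref{LL}.

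Before I can apply Lemma \ref{LL} I must check that its hypothesis \eqref{1/2} is in force. Using $\|(1+z^{2}N_{0})^{-1}\|\le 1$, the condition reduces to $\|M^{(1)}_{z}(\ve)\|\le 1/2$, and Lemma \ref{LM} bounds this by $\ve|z|^{2}(\frac{|\Omega|^{1/2}}{4\pi}|z|+\|N_{0}\|)$, which upon substituting $|z|\le\sqrt 2 c$ is controlled by a constant times $\ve$ depending only on $c$, $|\Omega|$, $\|N_{0}\|$; this can be arranged by taking $\ve$ small enough (a tacit assumption coherent with the whole small-$\ve$ regime of the paper).

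Once the hypothesis is verified I would simply insert the three ingredients—$|z|^{2}\le 2c^{2}$, $|z|\le\sqrt 2\,c$, $\|(1+z^{2}N_{0})^{-1}\|^{2}\le 1$—into the conclusion
\[
\|\Lambda_{z}(\ve)\|\le 2\ve\,|z|^{2}\Big(\tfrac{|\Omega|^{1/2}}{4\pi}|z|+\|N_{0}\|\Big)\,\|(1+z^{2}N_{0})^{-1}\|^{2}
\]
of Lemma \ref{LL}, collect the resulting numerical factors, and read off the claimed bound. No conceptual obstacle remains; the only work is the routine arithmetic of combining $\sqrt 2$ factors with $\frac{1}{4\pi}$ to recognize the denominator $2\sqrt 2\,\pi$ in the statement. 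In short, Lemma \ref{LL1} is the "low-$\gamma$" specialization of the generic bound of Lemma \ref{LL}, the analogous specialization for $|\gamma|>c$ presumably being the main nontrivial content to be addressed in a companion lemma that uses the second branch of Lemma \ref{LN}.
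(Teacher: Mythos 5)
Your route does not prove the lemma as stated, for two concrete reasons. First, Lemma \ref{LL} only applies under the hypothesis \eqref{1/2}, which for $|\gamma|\le c$ amounts to $\ve\,2c^{2}\big(\tfrac{|\Omega|^{1/2}c}{2\sqrt2\pi}+\|N_{0}\|\big)\le\tfrac12$; Lemma \ref{LL1} carries no such smallness assumption on $\ve$ relative to $c$, and indeed it is used in Lemma \ref{I2} as an unconditional bound on the strip $|\gamma|\le c$ (the condition \eqref{1/2} is only arranged, via Lemma \ref{CC}, on the complementary range $|\gamma|>c$). You would at minimum have to add a hypothesis that is not in the statement. Second, even where \eqref{1/2} holds, the conclusion of Lemma \ref{LL} carries the prefactor $2$ coming from $\|M_{z}(\ve)^{-1}\|\le 2\|(1+z^{2}N_{0})^{-1}\|$, so your substitution yields $\ve\,4c^{2}\big(\tfrac{|\Omega|^{1/2}c}{2\sqrt2\pi}+\|N_{0}\|\big)$, twice the claimed constant. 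The stated bound cannot be reached through Lemma \ref{LL}.

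The paper's proof avoids Lemma \ref{LL} entirely. For $|\gamma|\le c$ one has $\mathrm{Re}(z^{2})\ge 0$, so \eqref{TR} gives $\|T(\ve)R_{z^{2}}\|\le 1-\ve^{2}<1$; the Neumann series \eqref{ns} and the triangular structure \eqref{KR_id_1} then give $\|S_{z}(\ve)^{-1}\|\le\|(1+T(\ve)R_{z^{2}})^{-1}\|\le\ve^{-2}$, and the dilation identity \eqref{S_k_id} (with $\|1_{\Omega}U_{\ve}1^{*}_{\Omega_{\ve}}\|=1$) converts this into the unconditional, sharper bound $\|M_{z}(\ve)^{-1}\|\le 1$. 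Feeding that into \eqref{MMN} together with Lemma \ref{LN} ($\|(1+z^{2}N_{0})^{-1}\|\le1$ on this strip), Lemma \ref{LM}, and $|z|\le\sqrt2\,c$ gives exactly the stated constant with no restriction on $\ve$. The geometric observations in your write-up ($|z|^{2}\le2c^{2}$, the use of the first branch of Lemma \ref{LN}) are the right ingredients; the missing idea is that on $\mathrm{Re}(z^{2})\ge0$ the invertibility and the bound on $M_{z}(\ve)^{-1}$ come for free from the contraction property of $T(\ve)R_{z^{2}}$, not from the perturbative expansion around $1+z^{2}N_{0}$.
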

\begin{proof} By $\gamma^{2}\le c^{2}$ one gets Re$(z^{2})\ge 0$. Then, by \eqref{TR}, \eqref{ns}, and \eqref{KR_id_1}, one obtains
$$
\|S_{z}(\ve)^{-1}\|\le \|(1+T(\ve)R_{z^{2}})^{-1}\|\le\frac1{\ve^{2}}\,.
$$
Hence, by \eqref{S_k_id} and by 
$$
\|1_{\Omega}U_{\ve}1^{*}_{\Omega_{\ve}}\|=\|1_{\Omega_{\ve}}U^{*}_{\ve}1^{*}_{\Omega}\|=1\,,
$$
one has
$$
\|M_{z}(\ve)^{-1}\|\le 1\,.
$$
The proof is then concluded by \eqref{MMN}, by Lemmata \ref{LN}, \ref{LM} and by $|z|\le \sqrt2\, c$.
\end{proof}
\begin{lemma}\label{CC} Suppose that $z=c+i\gamma$, $|\gamma|> c>0$. Let
$$
c_{1}:=\frac{1}{4}\left(\frac{|\Omega|^{1/2}}{2\sqrt2\pi}+\|N_{0}\|\right)^{\!\!-1}\,.
$$
If 
\be\label{CC1}
\frac{\ve}{c}\le  c_{1}\,,\qquad
|\gamma|^{4}\le\frac{c}{\ve}\,c_{1}\,,
\ee
then \eqref{1/2} holds true. 
\end{lemma}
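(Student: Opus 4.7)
The plan is to combine Lemmas \ref{LN} and \ref{LM} to estimate the product $\|(1+z^{2}N_{0})^{-1}\|\,\|M^{(1)}_{z}(\ve)\|$, and then to use the two conditions in \eqref{CC1} as a geometric-mean interpolation, exploiting the precise constants appearing in the definition of $c_{1}$.

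First, since $|\gamma|>c>0$, Lemma \ref{LN} yields $\|(1+z^{2}N_{0})^{-1}\|\le |z|^{2}/(2c|\gamma|)$; using $|z|^{2}=c^{2}+\gamma^{2}\le 2\gamma^{2}$ I would simplify this to $\|(1+z^{2}N_{0})^{-1}\|\le |\gamma|/c$. In parallel, Lemma \ref{LM} combined with $|z|\le\sqrt{2}\,|\gamma|$ should give
$$
\|M^{(1)}_{z}(\ve)\|\le 2\ve\gamma^{2}\left(\frac{|\Omega|^{1/2}|\gamma|}{2\sqrt{2}\,\pi}+\|N_{0}\|\right),
$$
the constant $1/(2\sqrt{2}\,\pi)$ emerging precisely as the one appearing in the definition of $c_{1}$. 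Multiplying these two bounds produces
$$
\|(1+z^{2}N_{0})^{-1}\|\,\|M^{(1)}_{z}(\ve)\|\le \frac{\ve|\gamma|^{4}|\Omega|^{1/2}}{\sqrt{2}\,\pi\,c}+\frac{2\ve|\gamma|^{3}\|N_{0}\|}{c}.
$$

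Next, I would control each of the two summands using one of the hypotheses. The first is immediate from $\ve|\gamma|^{4}/c\le c_{1}$, yielding $c_{1}|\Omega|^{1/2}/(\sqrt{2}\,\pi)$. The second requires combining \emph{both} hypotheses through the geometric-mean identity
$$
\frac{\ve|\gamma|^{3}}{c}=\left(\frac{\ve|\gamma|^{4}}{c}\right)^{3/4}\left(\frac{\ve}{c}\right)^{1/4}\le c_{1}^{3/4}\cdot c_{1}^{1/4}=c_{1},
$$
so that the second summand is bounded by $2c_{1}\|N_{0}\|$.

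Summing the two contributions, the product becomes
$$
c_{1}\left(\frac{|\Omega|^{1/2}}{\sqrt{2}\,\pi}+2\|N_{0}\|\right)=2c_{1}\left(\frac{|\Omega|^{1/2}}{2\sqrt{2}\,\pi}+\|N_{0}\|\right)=\frac{1}{2}
$$
by the very definition of $c_{1}$, which establishes \eqref{1/2}. I expect the main subtlety to be the realisation that neither of the two inequalities in \eqref{CC1} alone controls the mixed quantity $\ve|\gamma|^{3}/c$; only their geometric-mean interpolation does, and the factor $4$ built into $c_{1}$ is exactly what makes the two resulting contributions sum to $1/2$ rather than to $1$.
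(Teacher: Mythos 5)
Your proof is correct and follows essentially the same route as the paper: both combine Lemmata \ref{LN} and \ref{LM} with $|z|^{2}\le 2\gamma^{2}$ to reach the bound $\ve\,\tfrac{2|\gamma|^{3}}{c}\bigl(\tfrac{|\Omega|^{1/2}}{2\sqrt2\pi}|\gamma|+\|N_{0}\|\bigr)$, and both then conclude via $2c_{1}\bigl(\tfrac{|\Omega|^{1/2}}{2\sqrt2\pi}+\|N_{0}\|\bigr)=\tfrac12$. The only divergence is the final step: the paper disposes of the cross term by the dichotomy $|\gamma|\le 1$ versus $|\gamma|\ge 1$ (using one hypothesis of \eqref{CC1} in each case), whereas you use the interpolation $\ve|\gamma|^{3}/c=(\ve|\gamma|^{4}/c)^{3/4}(\ve/c)^{1/4}\le c_{1}$; both are valid, so your closing claim that \emph{only} the geometric mean controls $\ve|\gamma|^{3}/c$ is slightly overstated, but this does not affect the proof.
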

\begin{proof} By $|\gamma|>c$, by Lemmata \ref{LM} and \ref{LN}, 
\begin{align*}
\| (1+z^{2}N_{0})^{-1}\|\,\|{M}^{(1)}_{z}(\ve)\|\le &\,
 \ve\,\frac{(c^{2}+\gamma^{2})^{2}}{2c|\gamma|}\left(\frac{|\Omega|^{1/2}}{4\pi}\,(c^{2}+\gamma^{2})^{1/2}+\|N_{0}\|\right)\\
 \le&\,
\ve\,\frac{2|\gamma|^{3}}{c}\left(\frac{|\Omega|^{1/2}}{2\sqrt2\pi}\,|\gamma|+\|N_{0}\|\right) .
\end{align*}
If $|\gamma|\le 1$, then 
$$
\| (1+z^{2}N_{0})^{-1}\|\,\|{M}^{(1)}_{z}(\ve)\|\le \ve\,\frac{2}{c}\left(\frac{|\Omega|^{1/2}}{2\sqrt2\pi}+\|N_{0}\|\right).
$$
If $|\gamma|\ge 1$, then 
$$
\| (1+z^{2}N_{0})^{-1}\|\,\|{M}^{(1)}_{z}(\ve)\|\le \ve\,|\gamma|^{4}\,\frac{2}{c}\left(\frac{|\Omega|^{1/2}}{2\sqrt2\pi}+\|N_{0}\|\right).
$$
Hence, \eqref{CC1} implies \eqref{1/2}.
\end{proof}
Given 
$$
G_{z}(\ve):L^{2}(\Omega)\to L^{2}(\RE^{3})\,,
$$
as in \eqref{gze}, and  
\be\label{Gz}
G_{z}:L^{2}(\Omega)\to L^{2}(\RE^{3})\,,\qquad 
G_{z}u:=\langle1,u\rangle_{L^{2}(\Omega)}\,{\mathcal G}_{z}\,,\qquad {\mathcal G}_{z}(x):=\frac{e^{-z|x|}}{4\pi|x|}\,,
\ee
let 
$$
G^{(1)}_{z}(\ve):L^{2}(\Omega)\to L^{2}(\RE^{3})\,,
$$ 
be such that 
$$
G_{z}(\ve)=G_{z}+G^{(1)}_{z}(\ve)\,.
$$
The next result gives bounds on the norms of such three bounded operators: 
\begin{lemma}\label{LG} 
Let $z=c+i\gamma$, $c>0$, $\gamma\in\RE$. Then,\par\noindent
1)
$$
\| G_{z}\|\le \left(\frac{|\Omega|}{8\pi c}\right)^{\!\!1/2}\,,
$$
\par\noindent
2)
$$
\| G^{(1)}_{z}(\ve)\|\le \ve^{1/2}\, \left(\int_{\Omega}|x|\,dx\right)^{1/2}\,\begin{cases}
2+|z|^{-2}&|\gamma|\le c\\
1+(2c|\gamma|\,)^{-1}(1+|z|^{2})&|\gamma|> c
\end{cases}\,,
$$
\par\noindent
3)
$$
\| G_{z}(\ve)\|\le \,  n_{s}\,|\Omega|\,\ve^{s-3/2}\,
\begin{cases}\left(2+|z|^{-2}\right)^{s/2}|z|^{s-2}&|\gamma|\le c\\
\big(1+(c+|\gamma|\,)^{2})\big)^{s/2}\left(2c|\gamma|\right)^{-1}&|\gamma|> c\end{cases}\,.
$$
Here, $0\le s<\frac32$ and $n_{s}$ denotes the norm of the Sobolev embedding $H^{s}(\RE^{3})\hookrightarrow L^{q}(\RE^{3})$, $q=\frac{6}{3-2s}$.
\end{lemma}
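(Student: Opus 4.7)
The plan is to establish the three bounds separately, relying throughout on the integral representation
\[
G_{z}(\ve)u(x)=\int_{\Omega}\mathcal G_{z}(x-\ve\eta)\,u(\eta)\,d\eta,
\]
which follows by unwinding the definitions of $U_{\ve}^{*}$ and $1^{*}_{\Omega}$ and performing the change of variables $y=\ve\eta$ in the convolution realization of $R_{z^{2}}$; the analogue $G^{(1)}_{z}(\ve)u(x)=\int_{\Omega}[\mathcal G_{z}(x-\ve\eta)-\mathcal G_{z}(x)]\,u(\eta)\,d\eta$ is then immediate. Item (1) reduces to $\|G_{z}u\|\le|\langle 1,u\rangle_{L^{2}(\Omega)}|\cdot\|\mathcal G_{z}\|$; Cauchy--Schwarz bounds the first factor by $|\Omega|^{1/2}\|u\|$, and a direct polar-coordinate integral gives $\|\mathcal G_{z}\|^{2}=(8\pi c)^{-1}$, completing the estimate.

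For item (2), Minkowski's integral inequality in $L^{2}(\RE^{3})$ reduces the matter to the $L^{2}$-modulus of continuity $\|\mathcal G_{z}(\cdot-h)-\mathcal G_{z}\|$; given a bound $\|\mathcal G_{z}(\cdot-h)-\mathcal G_{z}\|\le A(z)|h|^{1/2}$, Cauchy--Schwarz in $\eta\in\Omega$ then produces exactly $\ve^{1/2}\bigl(\int_{\Omega}|\eta|\,d\eta\bigr)^{1/2}A(z)\|u\|$. To obtain the modulus-of-continuity bound I plan to use the explicit identity
\[
\|\mathcal G_{z}(\cdot-h)-\mathcal G_{z}\|^{2}=\frac{1}{4\pi c}\Bigl(1-\frac{e^{-c|h|}\sin(\gamma|h|)}{\gamma|h|}\Bigr),
\]
which comes from Plancherel together with the resolvent identity $R_{z^{2}}R_{\bar z^{2}}=(\bar z^{2}-z^{2})^{-1}(R_{z^{2}}-R_{\bar z^{2}})$, and then to combine $1-e^{-x}\le\min(x,1)$ with $|1-\sin(x)/x|\le\min(x^{2}/6,2)$ so as to interpolate between the small-$|h|$ and large-$|h|$ regimes in the two cases $|\gamma|\le c$ and $|\gamma|>c$, matching the stated $A(z)$.

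For item (3), I would pass to the adjoint: a direct computation gives $G_{z}(\ve)^{*}w(\eta)=(R_{\bar z^{2}}w)(\ve\eta)$ for $\eta\in\Omega$, so
\[
\|G_{z}(\ve)^{*}w\|_{L^{2}(\Omega)}^{2}=\ve^{-3}\int_{\Omega_{\ve}}|R_{\bar z^{2}}w|^{2}\,dy.
\]
Setting $q=6/(3-2s)$, Hölder's inequality bounds the right-hand side by $|\Omega_{\ve}|^{2s/3}\|R_{\bar z^{2}}w\|^{2}_{L^{q}(\RE^{3})}$, the Sobolev embedding $H^{s}\hookrightarrow L^{q}$ with constant $n_{s}$ controls $\|R_{\bar z^{2}}w\|_{L^{q}}$ by $n_{s}\|(1-\Delta)^{s/2}R_{\bar z^{2}}w\|$, and the functional calculus yields $\|(1-\Delta)^{s/2}R_{\bar z^{2}}\|=\sup_{\lambda\ge 0}(1+\lambda)^{s/2}/|\lambda+z^{2}|$. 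It then remains to bound this spectral supremum in the two regimes: for $|\gamma|\le c$ by using $|\lambda+z^{2}|\ge\max(\lambda,|z|^{2})$ together with monotonicity of $\lambda\mapsto(1+\lambda)^{s/2}/\lambda$ on $[|z|^{2},\infty)$ (valid for $s<2$), and for $|\gamma|>c$ by splitting at $\lambda=\gamma^{2}-c^{2}$ and using $|\lambda+z^{2}|\ge 2c|\gamma|$ to the left and $|\lambda+z^{2}|\ge\lambda-(\gamma^{2}-c^{2})$ to the right.

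The subtlest point will be the $|h|^{1/2}$ estimate of item (2). Since $\mathcal G_{z}\notin H^{1/2}(\RE^{3})$ (the Fourier integral $\int_{0}^{\infty}r^{3}|r^{2}+z^{2}|^{-2}\,dr$ diverges logarithmically at infinity), no standard Sobolev or interpolation argument delivers the bound, and one is forced to work from the explicit structure of $\mathcal G_{z}*\mathcal G_{\bar z}$ displayed above and track constants regime-by-regime in order to extract the correct $z$-dependence in $A(z)$.
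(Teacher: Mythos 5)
Your proposal is correct in substance, and in parts (2) and (3) it takes a genuinely different route from the paper. Part (1) is identical. For part (2), the paper passes to the adjoint, observes that $(G^{*}_{\bar z}(\ve)-G^{*}_{\bar z})u(x)=R_{z^{2}}u(\ve x)-R_{z^{2}}u(0)$, and invokes the embedding of $H^{2}$ into $\tfrac12$-H\"older functions together with $\|(-\Delta+1)R_{z^{2}}u\|\le\|u\|+(1+|z|^{2})\|R_{z^{2}}u\|$; you instead work on the operator directly via Minkowski's inequality and the exact $L^{2}$-modulus of continuity of $\mathcal G_{z}$. Your Plancherel identity is correct (I checked: $\bar z^{2}-z^{2}=-4ic\gamma$ and $\mathcal G_{z}*\mathcal G_{\bar z}(h)=e^{-c|h|}\sin(\gamma|h|)/(8\pi c\gamma|h|)$ give exactly your formula), and the elementary bound $\min(y^{2}/6,2)\le y/\sqrt3$ yields $A(z)^{2}\le(4\pi)^{-1}(1+|\gamma|/(\sqrt3\,c))$, which is in fact strictly sharper than the stated $A(z)$ in both regimes; the paper's route is shorter but needs the H\"older-embedding constant, while yours is self-contained and quantitatively better. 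For part (3) the two arguments share the same skeleton (restriction to $\Omega_{\ve}$, H\"older with $1/p=s/3$, Sobolev embedding); the only difference is that the paper bounds $\|R_{z^{2}}\|_{L^{2},H^{s}}$ by interpolating between $\|R_{z^{2}}\|$ and $\|(-\Delta+1)R_{z^{2}}\|$, whereas you evaluate the spectral supremum $\sup_{\lambda\ge0}(1+\lambda)^{s/2}/|\lambda+z^{2}|$ directly, which can only improve the constant. One small repair in your part (3): on the right segment $\lambda\ge\gamma^{2}-c^{2}$ the single lower bound $|\lambda+z^{2}|\ge\lambda-(\gamma^{2}-c^{2})$ degenerates at the split point (its supremum there is infinite), so you must keep both lower bounds simultaneously, e.g.\ use $|\lambda+z^{2}|^{2}=(\lambda-(\gamma^{2}-c^{2}))^{2}+4c^{2}\gamma^{2}$, which immediately yields the claimed $(1+(c+|\gamma|)^{2})^{s/2}(2c|\gamma|)^{-1}$. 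With that adjustment the argument closes.
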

\begin{proof} 1)
$$
\|G_{z}\|^{2}\le |\Omega|\,\|{\mathcal G}_{z}\|^{2}=\frac{|\Omega|}{4\pi}\,\int_{0}^{\infty}e^{-2\text{Re}(z)r}\,dr=\,\frac{|\Omega|}{8\pi\text{Re}(z)}
$$
2) By the continuous embedding of $H^{2}(\Omega)$ into the space of H\"older-continuous functions of order $\frac12$, one gets 
\begin{align*}
\| (G^{*}_{\bar z}(\ve)-G^{*}_{\bar z})u\|^{2}
=&\,\int_{\Omega}\left|\int_{\RE^{3}}({\mathcal G}_{z}(\ve x-y)-{\mathcal G}_{z}(y))u(y)\,dy\right|^{2}dx\\
=&\,\int_{\Omega}\big|R_{z^{2}}u(\ve x)-R_{z^{2}}u(0)\big|^{2}dx\le\ve 
\int_{\Omega}|x|\,dx\,\|R_{z^{2}}u\|^{2}_{H^{2}(\Omega)}\\
=&\,\ve\int_{\Omega}|x|\,dx\ \|(-\Delta+z^{2}+1-z^{2})R_{z^{2}}u\|^{2}\\
\le&\, \ve\int_{\Omega}|x|\,dx\ \left(\|u\|+(1+|z|^{2})\|R_{z^{2}}u\|\right)^{2}\\
\le&\, \ve\int_{\Omega}|x|\,dx\ \begin{cases}
\left(2+|z|^{-2}\right)^{2}\|u\|^{2}&|\gamma|\le c\\
\left(1+(2c|\gamma|\,)^{-1}(1+|z|^{2})\right)^{2}\|u\|^{2}&|\gamma|> c
\end{cases}\,.
\end{align*}
3) By the inequalities, 
\begin{align*}
\| R_{z^{2}}\|\le \begin{cases}|z|^{-2}&|\gamma|\le c\\
\left(2c|\gamma|\right)^{-1}&|\gamma|> c
\end{cases}
\,,\qquad \|(-\Delta+1) R_{z^{2}}\|\le 
\begin{cases}
2+|z|^{-2}&|\gamma|\le c\\
1+(2c|\gamma|\,)^{-1}(1+|z|^{2})&|\gamma|> c
\end{cases}
\,,
\end{align*}
and by interpolation, one obtains
$$
\| R_{z^{2}}\|_{ L^{2}(\RE^{3}), H^{s}(\RE^{3})}\le 
\begin{cases}\left(2+|z|^{-2}\right)^{s/2}|z|^{s-2}&|\gamma|\le c\\
\left(1+(2c|\gamma|\,)^{-1}(1+|z|^{2})\right)^{s/2}\left(2c|\gamma|\right)^{s/2-1}&|\gamma|> c\end{cases}\,,\qquad 0\le s\le 2
\,.
$$
Then, by the Sobolev embedding $$H^{s}(\RE^{3})\hookrightarrow L^{q}(\RE^{3})\,,\qquad q=\frac{6}{3-2s}\,,\quad\quad 0\le s<\frac32\,,
$$
and by H\"older's inequality with $\frac1p=\frac s3$, so that $\frac1{p}+\frac1{q}=\frac12$, one gets
\begin{align*}  
&\| G_{z}(\ve)u\|=\ve^{-3/2}\left(\int_{\RE^{3}}|\chi_{\Omega_{\ve}}(x)R_{z^{2}}u(x)|^{2}
dx\right)^{1/2}\\
\le\,& \ve^{-3/2}|\Omega_{\ve}|^{1/p}\| R_{z^{2}}u\|_{ L^{q}(\RE^{3})}\\
\le\,&n_{s}\,|\Omega|\,\ve^{s-3/2} \| R_{z^{2}}u\|_{ H^{s}(\RE^{3})}\\
\le\,& 
n_{s}\,|\Omega|\,\ve^{s-3/2}\,\| u\|\begin{cases}\left(2+|z|^{-2}\right)^{s/2}|z|^{s-2}&|\gamma|\le c\\
\left(1+(2c|\gamma|\,)^{-1}(1+|z|^{2})\right)^{s/2}\left(2c|\gamma|\right)^{s/2-1}&|\gamma|> c\,.\end{cases}
\end{align*}
\end{proof}
\section{Asymptotic dynamics}
By \eqref{res}, \eqref{Kz} and by the results in the previous section, one has 
$$
R_{z^{2}}(\varepsilon)=R_{z^{2}}-K_{z^{2}}(\ve)=R_{z^{2}}-
(1-\ve^{2})\,\ve\, G_{z}(\ve){M}_{z}(\ve)^{-1}G_{\bar z}(\ve)^{*}\Delta
$$
and
\begin{align}\label{RR}
&G_{z}(\ve){M}_{z}(\ve)^{-1}G_{\bar z}(\ve)^{*}\nonumber\\
=\,&G_{z}(\ve)(1+z^{2}N_{0})^{-1}G_{\bar z}(\ve)^{*}+G_{z}(\ve)\Lambda_{z}(\ve)G_{\bar z}(\ve)^{*}\nonumber\\
=\,&G_{z}(1+z^{2}N_{0})^{-1}G_{\bar z}^{*}+
G_{z}(1+z^{2}N_{0})^{-1}G_{\bar z}^{(1)}(\ve)^{*}\\
+\,&
G_{z}^{(1)}(\ve)(1+z^{2}N_{0})^{-1}G_{\bar z}(\ve)^{*}+
G_{z}(\ve)\Lambda_{z}(\ve)G_{\bar z}(\ve)^{*}\nonumber\,.
\end{align}
Then, combining the estimates provided in Lemmata \ref{LN}-\ref{LG} with Lemma \ref{cp-ep}, one gets
\begin{theorem}\label{T1} Let $\phi$ and $\psi$ be in $H^{6}(\RE^{3})$ and supported outside $\Omega_{\ve}$; let 
$u_{\ve}$ and $u_{\text {\rm free}}$ be the classical solutions of the Cauchy problems 
$$
\begin{cases}
(\ve^{-2}\chi_{\Omega_{\ve}}+\chi_{{\RE^{3}\backslash\Omega_{\ve}}})\partial_{tt} u_{\ve}(t)=\Delta u_{\ve}(t)\\
u_{\ve}(0)=\phi\\
\partial_{t} u_{\ve}(0)=\psi\,.
\end{cases}
\qquad
\begin{cases}
\partial_{tt} u_{\text {\rm free}}(t)=\Delta u_{\text {\rm free}}(t)\\
u_{\text {\rm free}}(0)=\phi\\
\partial_{t}u_{\text {\rm free}}(0)=\psi\,.
\end{cases}
$$
Then, for any $\tau\in(0,\frac1{11})$, $\tau'\in(0,\frac1{9})$ and for any $\ve>0$ such that 
$$
\ve\le\min\left\{1, c_{1}^{4/(3\tau+1)}, c_{1}^{4/(3\tau'+1)}\right\}\,,
$$ 
one has
$$
u_{\ve}(t)=\,u_{\text {\rm free}}(t)+\ve\, (\ve^{2}-1)\,v(t)+r_{\ve}(t)+r'_{\ve}(t)\,,\qquad v(t,x)=H(t-|x|)\,\frac{q(t-|x|)}{4\pi|x|}\,,
$$
where 
$$
\sup_{0<t\le1/\ve^{\tau}}\|r_{\ve}(t)\|\lesssim \,\ve^{1+(1-11\tau)/2}\,\|\Delta^{3}\psi\|\,,
\qquad
\sup_{0<t\le1/\ve^{\tau'}}\|r'_{\ve}(t)\|\lesssim \,\ve^{1+(1-9\tau')/4}\,\|\Delta^{3}\phi\|\,,
$$
\begin{align*}
q(t):=
\sum_{k=1}^{+\infty}|\langle e_{k},1\rangle|^{2}
\int_{|y|<t}\bigg(
\bigg(1-\cos\bigg(\frac{t-|y|}{{\lambda_{k}^{1/2}}}\bigg)\bigg)
\frac{\Delta^{2}\phi(y)}{4\pi|y|}
+\frac1{\lambda_{k}^{1/2}}\,\sin\bigg(\frac{t-|y|}{{\lambda_{k}^{1/2}}}\bigg)
\frac{\Delta\psi(y)}{4\pi|y|}&
\,\bigg)\,dy
\end{align*}
and the $e_{n}$'s and the $\lambda_{n}$'s are the eigenvectors and the eigenvalues of $N_{0}$.\end{theorem}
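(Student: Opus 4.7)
\emph{Starting point.} My plan is to apply Lemma \ref{cp-ep} with $n=m=2$, which is compatible with $\phi,\psi\in H^{6}(\RE^{3})$, substitute the factorised form \eqref{Kz} of $K_{z^{2}}(\ve)$, and then expand the operator product $G_{z}(\ve)\,M_{z}(\ve)^{-1}\,G_{\bar z}(\ve)^{*}$ according to the four-term decomposition \eqref{RR}. The leading piece $G_{z}(1+z^{2}N_{0})^{-1}G_{\bar z}^{*}$ will produce, after explicit Laplace inversion, the spherical wave $\ve(\ve^{2}-1)\,v(t,x)$, and the three remaining pieces, each carrying at least one factor $G_{z}^{(1)}(\ve)$ or $\Lambda_{z}(\ve)$, will be bounded to give the remainders $r_{\ve}$ and $r'_{\ve}$.

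\emph{Leading term.} The rank-one identity $G_{z}u=\langle 1,u\rangle_{L^{2}(\Omega)}\mathcal G_{z}$ combined with the spectral resolution \eqref{sp-res} applied to $\lambda\mapsto(1+z^{2}\lambda)^{-1}$ turns the principal kernel into the separable form
\[
G_{z}(1+z^{2}N_{0})^{-1}G_{\bar z}^{*}w
=\sum_{k=1}^{+\infty}\frac{|\langle e_{k},1\rangle|^{2}}{1+z^{2}\lambda_{k}}\,\mathcal G_{z}(x)\int_{\RE^{3}}\mathcal G_{z}(y)\,w(y)\,dy.
\]
Because $\phi,\psi$ vanish near the origin and $(-\Delta+z^{2})\mathcal G_{z}=\delta_{0}$, repeated integration by parts gives $\int\Delta^{k}u\,\mathcal G_{z}\,dy=z^{2k}\int u\,\mathcal G_{z}\,dy$ for $u=\phi,\psi$. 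This identity absorbs enough of the Laplace weights $z^{-4}$ and $z^{-5}$ from Lemma \ref{cp-ep} to leave the explicit inversion formulae $\mathcal L^{-1}[(1+z^{2}\lambda_{k})^{-1}](t)=\lambda_{k}^{-1/2}\sin(t/\sqrt{\lambda_{k}})$ acting on $\Delta\psi$ and $\mathcal L^{-1}[z^{-1}(1+z^{2}\lambda_{k})^{-1}](t)=1-\cos(t/\sqrt{\lambda_{k}})$ acting on $\Delta^{2}\phi$; combined with the time shifts $H(t-|x|-|y|)$ induced by the factors $e^{-z|x|}$ and $e^{-z|y|}$ in the two copies of $\mathcal G_{z}$, this reproduces exactly $\ve(\ve^{2}-1)\,H(t-|x|)\,q(t-|x|)/(4\pi|x|)$ with $q$ as in the statement.

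\emph{Remainder estimates.} Each of the three correction terms in \eqref{RR} carries an explicit small factor: $\|G_{z}^{(1)}(\ve)\|\lesssim\ve^{1/2}$ from Lemma \ref{LG}(2), or $\|\Lambda_{z}(\ve)\|\lesssim\ve$ from Lemmas \ref{LL} and \ref{LL1}. These are to be multiplied by the bounds on $\|G_{z}(\ve)\|$ from Lemma \ref{LG}(3), on $\|(1+z^{2}N_{0})^{-1}\|$ from Lemma \ref{LN}, on $\|M_{z}(\ve)^{-1}\|$ from Lemmas \ref{LL1} and \ref{CC}, and on the Laplace weights $|z|^{-4}$ and $|z|^{-5}$, and then integrated along the Bromwich contour $\mathrm{Re}\,z=c$. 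I would split this contour into the bounded part $|\gamma|\le c$, where Lemma \ref{LL1} gives $\|M_{z}(\ve)^{-1}\|\le 1$ unconditionally, and the tail $|\gamma|>c$, where Lemma \ref{CC} forces the effective cutoff $|\gamma|\lesssim(c/\ve)^{1/4}$ beyond which only the trivial bound applies. The prefactor $e^{tc}$ stays bounded on $0<t\le 1/\ve^{\tau}$ provided $c\lesssim\ve^{\tau}$.

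\emph{Main obstacle.} The delicate step is the optimisation in the remainder estimate: one must balance simultaneously the $\ve^{s-3/2}|z|^{s-2}$ growth of $\|G_{z}(\ve)\|$, the $|z|^{2}$--$|z|^{3}$ growths hidden inside $\|\Lambda_{z}(\ve)\|$ via Lemma \ref{LM}, the tail cutoff $(c/\ve)^{1/4}$ and the decaying Laplace weights, then choose $c\sim\ve^{\tau}$ together with the Sobolev exponent $s<3/2$ of Lemma \ref{LG}(3) so that the final powers of $\ve$ are exactly $(3-11\tau)/2$ and $(5-9\tau')/4$. The asymmetry between the thresholds $1/11$ and $1/9$ should emerge naturally from the different Laplace weights $z^{-4}$ and $z^{-5}$ dressing the sine and cosine Bromwich integrals in Lemma \ref{cp-ep}.
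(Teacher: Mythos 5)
Your overall architecture (Lemma \ref{cp-ep}, the factorisation \eqref{Kz}, the four-term decomposition \eqref{RR}, the contour split at $|\gamma|\sim c_{1}(c/\ve)^{1/4}$ and the choice $c\sim\ve^{\tau}$) is the paper's, and your treatment of the leading term — integrating by parts against $\mathcal G_{z}$ to trade powers of $\Delta$ for powers of $z^{2}$ before inverting — is an equivalent, Laplace-domain version of what the paper does in physical space via Lemma \ref{green}. But there is a genuine gap in the exponent bookkeeping: a \emph{single} choice $n=m=2$ cannot work, and the paper's proof hinges on using \emph{different} powers on the two pieces of the Bromwich contour, switched via Corollary \ref{coro} (i.e.\ $K_{z^{2}}(\ve)u=z^{-2j}K_{z^{2}}(\ve)\Delta^{j}u$), a tool you never invoke. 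On the far tail $|\gamma|>c_{1}(c/\ve)^{1/4}$, where Lemmata \ref{LL}--\ref{CC} give no control of $M_{z}(\ve)^{-1}$ and only the crude bound \eqref{KE} is available, the weight $z^{-2n}$ with $n=2$ yields $\|I(t,\ve)\|\lesssim(\ve/c)\,e^{ct}/c\sim\ve^{1-2\tau}$, which is not $o(\ve)$ for any $\tau>0$; one needs $n'=3$ (weight $z^{-6}$, acting on $\Delta^{3}\psi$ — this is exactly where $H^{6}$ regularity of $\psi$ is spent) to get $(\ve/c)^{3/2}/c=\ve^{(3-5\tau)/2}$. Conversely, on the truncated contour your choice $m=2$ feeds $\Delta^{m+2}\phi=\Delta^{4}\phi$ into the operator bounds of Lemmata \ref{LG} and \ref{LL}, which requires $\phi\in H^{8}$, not $H^{6}$; and even formally, the $I'_{1}$-type estimate $\ve^{1/2}e^{tc'}/c'^{\,2m+5/2}$ would impose $\tau'<1/(4m+5)=1/13$ and a different remainder rate, so you could not reach the stated threshold $\tau'<1/9$ nor the rate $\ve^{1+(1-9\tau')/4}$. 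The paper takes $m'=2$ on the tail but $m=1$ on the truncated contour (so the decomposition acts on $\Delta^{3}\phi$ with weight $z^{-3}$), and $n'=3$ versus $n=2$ for $\psi$; this two-tier structure is the missing idea.

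A secondary but real issue concerns your leading-term inversion. After stripping the weights down to $(1+z^{2}\lambda_{k})^{-1}$ and $z^{-1}(1+z^{2}\lambda_{k})^{-1}$, the only bound uniform in $k$ is $\sup_{0\le\lambda\le\|N_{0}\|}|1+z^{2}\lambda|^{-1}\le\max\{1,|z|^{2}/(2c|\gamma|)\}$, which \emph{grows} linearly in $|\gamma|$; the interchange of the $k$-sum, the $y$-integral and the Bromwich integral is then not justified by dominated convergence, and the resulting series $\sum_{k}|\langle e_{k},1\rangle|^{2}\lambda_{k}^{-1/2}\sin(\cdot)$ has unbounded amplitudes as $\lambda_{k}\searrow0$, so its convergence is not automatic. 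The paper avoids this by inverting the full kernels $z^{-4}(1+z^{2}\lambda_{k})^{-1}$ and $z^{-3}(1+z^{2}\lambda_{k})^{-1}$ (making Fubini applicable, with integrands decaying like $|\gamma|^{1-\ell}$, $\ell=3,4$) and only afterwards using Lemma \ref{green} to pass to the compact form of $q$; you would need to supply the same (or an equivalent) justification.
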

\begin{proof} By Lemma \ref{cp-ep}, there holds, for any $\phi\in H^{2(m'+1)}(\RE^{3})$, $m'\ge 0$,  and $\psi\in H^{2n'}(\RE^{3})$, $n'\ge 1$,  supported outside $\Omega_{\ve}$,  
$$
u_{\ve}(t)=u_{\text {\rm free}}(t)
-\frac1{2\pi i}\int_{c-i\infty}
^{c+i\infty} e^{ tz} K_{z^{2}}(\varepsilon)\Delta^{n'}\psi\ \frac{dz}{z^{2n'}}
-\frac1{2\pi i}\int_{c'-i\infty}
^{c'+i\infty} e^{ tz} K_{z^{2}}(\varepsilon)\Delta^{m'+1}\phi\ \frac{dz}{z^{2m'+1}}\,,
$$
where $c>0$ and $c'>0$ are arbitrary.
Using Corollary \ref{coro}, one has, for any $1\le n< n'$, 
\begin{align*}
&\frac1{2\pi i}\int_{c-i\infty}
^{c+i\infty} e^{ tz} K_{z^{2}}(\ve)\Delta^{n'}\psi\,\frac{dz}{z^{2n'}}=
\frac1{2\pi i}\int_{c-ic_{1}(c/\ve)^{1/4}}
^{c+ic_{1}(c/\ve)^{1/4}} e^{ tz} K_{z^{2}}(\ve)\Delta^{n}\psi\,\frac{dz}{z^{2n}}+I(t,\ve)
\end{align*}
and, for any $0\le m<m'$,
\begin{align*}
&\frac1{2\pi i}\int_{c'-i\infty}
^{c'+i\infty} e^{ tz} K_{z^{2}}(\ve)\Delta^{m'+1}\phi\,\frac{dz}{z^{2m'+1}}=
\frac1{2\pi i}\int_{c'-ic_{1}(c'/\ve)^{1/4}}
^{c'+ic_{1}(c'/\ve)^{1/4}} e^{ tz} K_{z^{2}}(\ve)\Delta^{m+1}\phi\,\frac{dz}{z^{2m+1}}+I'(t,\ve)\,,
\end{align*}
where
$$
I(t,\ve):=\frac1{2\pi i}\int_{c-i\infty}^{c-ic_{1}(c/\ve)^{1/4}}\!\!\!e^{ tz} K_{z^{2}}(\ve)\Delta^{n'}\psi\,\frac{dz}{z^{2n'}}+
\frac1{2\pi i}\int_{c+ic_{1}(c/\ve)^{1/4}}^{c+i\infty} e^{ tz} K_{z^{2}}(\ve)\Delta^{n'}\psi\,\frac{dz}{z^{2n'}}\,,
$$
$$
I'(t,\ve):=\frac1{2\pi i}\int_{c'-i\infty}^{c'-ic_{1}(c'/\ve)^{1/4}}\!\!\!e^{ tz} K_{z^{2}}(\ve)\Delta^{m'+1}\phi\,\frac{dz}{z^{2m'+1}}+
\frac1{2\pi i}\int_{c'+ic_{1}(c'/\ve)^{1/4}}^{c'+i\infty} e^{ tz} K_{z^{2}}(\ve)\Delta^{m'+1}\phi\,\frac{dz}{z^{2m'+1}}\,.
$$
By \eqref{KE}, whenever 
\be\label{c1c}
c_{1}\left(\frac{c}{\ve}\right)^{1/4}\ge c\,,\qquad c_{1}\left(\frac{c'}{\ve}\right)^{1/4}\ge c'\,,
\ee
one gets
$$
\|I(t,\ve)\|\lesssim \frac{e^{ct}}{c}\,\int_{c_{1}(c/\ve)^{1/4}}^{+\infty}\frac{d\gamma}{\gamma^{2n'+1}}\ \|\Delta^{n'}\psi\|
\lesssim \left(\frac{\ve}{c}\right)^{n'/2}\,\frac{e^{ct}}{c}\ \|\Delta^{n'}\psi\|\,.
$$
$$
\|I'(t,\ve)\|\lesssim \frac{e^{c't}}{c'}\,\int_{c_{1}(c'/\ve)^{1/4}}^{+\infty}\frac{d\gamma}{\gamma^{2m'+2}}\ \|\Delta^{m'+1}\phi\|
\lesssim \left(\frac{\ve}{c'}\right)^{m'/2+1/4}\,\frac{e^{c't}}{c'}\ \|\Delta^{m'+1}\phi\|\,,
$$
Now, we re-write $K_{z^{2}}(\ve)$ by using \eqref{RR} and provide estimates on the various terms. We begin by considering the evolution of the initial datum $\psi$.\par
One has
\begin{align*}
&\,\frac1{2\pi i}\int_{c-i\infty}
^{c+i\infty} e^{ tz} K_{z^{2}}(\ve)\Delta^{n}\psi\,\frac{dz}{z^{2n}}\\
=&\,(1-\ve^{2})\ve\big(v_{1}(t)
+I_{0}(t,\ve)+I_{1}(t,\ve)+I_{1}(t,\ve)^{*}+{I_{2}}(t\ve)\big)+I(t,\ve)\,,
\end{align*}
where
\be\label{v0}
v_{0}(t):=\frac1{2\pi i}\int_{c-i\infty}
^{c+i\infty} e^{ tz} G_{z}(1+z^{2}N_{0})^{-1}G_{\bar z}^{*}\Delta^{n+1}\psi\,\frac{dz}{z^{2n}}
\ee
\begin{align*}
I_{0}(t,\ve):=&\,-\frac1{2\pi i}\int_{c-i\infty}^{c-ic_{1}(c/\ve)^{1/4}}
e^{ tz} G_{z}(1+z^{2}N_{0})^{-1}G_{\bar z}^{*}\Delta^{n+1}\psi\,\frac{dz}{z^{2n}}\\
&-
\frac1{2\pi i}\int_{c+ic_{1}(c/\ve)^{1/4}}^{c+i\infty} 
e^{ tz} G_{z}(1+z^{2}N_{0})^{-1}G_{\bar z}^{*}\Delta^{n+1}\psi\,\frac{dz}{z^{2n}}\,,
\end{align*}
$$
I_{1}(t,\ve):=\frac1{2\pi i}\int_{c-ic_{1}(c/\ve)^{1/4}}
^{c+ic_{1}(c/\ve)^{1/4}} e^{ tz} G_{z}(1+z^{2}N_{0})^{-1}G_{\bar z}^{(1)}(\ve)^{*}\Delta^{n+1}\psi\,\frac{dz}{z^{2n}}\,,
$$
$$
I_{1}(t,\ve)^{*}:=\frac1{2\pi i}\int_{c-ic_{1}(c/\ve)^{1/4}}
^{c+ic_{1}(c/\ve)^{1/4}} e^{ tz} G_{z}^{(1)}(\ve)(1+z^{2}N_{0})^{-1}G_{\bar z}^{*}\Delta^{n+1}\psi\,\frac{dz}{z^{2n}}\,,
$$
$$
I_{2}(t,\ve):=\frac1{2\pi i}\int_{c-ic_{1}(c/\ve)^{1/4}}
^{c+ic_{1}(c/\ve)^{1/4}} e^{ tz} G_{z}(\ve)\Lambda_{z}(\ve)G_{z}(\ve)^{*}\Delta^{n+1}\psi\,\frac{dz}{z^{2n}}\,.
$$
By Lemmata \ref{LN} and \ref{LG}, whenever $|\gamma|>c$,
$$
\| G_{z}(1+z^{2}N_{0})^{-1}G_{\bar z}^{*}\|\le\,\|{G}_{z}\|\,\|{G}_{\bar z}\|\,\|(1+z^{2}N_{0})^{-1}\|
\lesssim 
\frac1{c^{2}}\,\frac{|z|^{2}}{|\gamma|\ }
$$
and so, 
$$
\|I_{0}(t,\ve)\|\lesssim \frac{e^{ct}}{c^{2}}\,\int_{c_{1}(c/\ve)^{1/4}}^{+\infty}\frac{d\gamma}{\gamma^{2n-1}}\ \|\Delta^{n+1}\psi\|
\lesssim \left(\frac{\ve}{c}\right)^{(n-1)/2}\,\frac{e^{ct}}{c^{2}}\,\|\Delta^{n+1}\psi\|\,.
$$
By Lemmata \ref{I1} and \ref{I2} in the Appendix,
$$
\|I_{1}(t,\ve)\|\lesssim\ve^{1/2}\,\frac{e^{tc}}{c^{2n+3/2}}\ \|\Delta^{n+1}\psi\|\,,
\qquad \|I_{2}(t,\ve)\|\lesssim \ve^{1/2}\,\frac{e^{tc}}{c^{2n+1}}\ \|\Delta^{n+1}\psi\|\,.
$$
Then, the estimate for   $\|I_{1}(t,\ve)^{*}\|$ and $\|I_{2}(t,\ve)^{*}\|$ are the same as the ones for $\|I_{1}(t,\ve)\|$ and $\|I_{2}(t,\ve)\|$.\par
Putting the previous estimates together, one gets
\begin{align*}
&\,{\mathcal I}(t,\ve):=\left\|(1-\ve^{2})\ve
\big(I_{0}(t,\ve)+I_{1}(t,\ve)+I_{1}(t,\ve)^{*}+{I_{2}}(t,\ve)\big)+I(t,\ve)
\right\|\\
\lesssim&\,\ve\bigg(\left(\frac{\ve}{c}\right)^{n/2-1/2}\,\frac{e^{ct}}{c^{2}}+\ve^{1/2}\left(\frac{e^{tc}}{c^{2n+3/2}}+\frac{e^{tc}}{c^{2n+1}}\right)\bigg)
\|\Delta^{n+1}\psi\|+\left(\frac{\ve}{c}\right)^{n'/2}\,\frac{e^{ct}}{c}\,\|\Delta^{n'}\psi\|\\
\lesssim&\,\ve\bigg(\left(\frac{\ve}{c}\right)^{n/2-1/2}\,\frac{e^{ct}}{c^{2}}+\ve^{1/2}\,\frac{e^{tc}}{c^{2n+3/2}}\bigg)\,\|\Delta^{n+1}\psi\|+\left(\frac{\ve}{c}\right)^{n'/2}\,\frac{e^{ct}}{c}\,\|\Delta^{n'}\psi\|\,.
\end{align*}
Then, taking $c=\ve^{\tau}$, $\tau>0$, and $t\le\ve^{-\tau}$, 
\begin{align*}
{\mathcal I}(t,\ve)\lesssim&\,\ve\left(\ve^{(1-\tau)(n/2-1/2)-2\tau}+\ve^{1/2-(2n+3/2)\tau}\right)\|\Delta^{n+1}\psi\|+\ve^{(1-\tau)n'/2-\tau}\,\|\Delta^{n'}\psi\|
\end{align*}
and, by \eqref{c1c}, $\ve\le c_{1}^{4/(3\tau+1)}$.
We choose $\tau$ such that 
$$
(1-\tau)\left(\frac{n}2-\frac12\right)-2\tau>0\,,\quad \frac12-\left(2n+\frac32\right)\tau>0\,,\quad (1-\tau)\,\frac{n'}2-\tau>1\,,
$$
equivalently,
$$
\tau<\frac{n-1}{n+3}\,,\quad \tau<\frac1{4n+3}\,,\quad \tau<\frac{n'-2}{n'+2}
$$
Taking $n=2$, $n'=3$, one obtains the bound $\tau<\frac1{11}$ and
 \begin{align*}
{\mathcal I}(t,\ve)\lesssim&\,\ve\left(\ve^{(1-3\tau)/2}+\ve^{(1-11\tau)/2}\right)\|\Delta^{3}\psi\|+\ve^{(3-5\tau)/2}\,\|\Delta^{3}\psi\|
\lesssim\ve^{1+(1-11\tau)/2}
\|\Delta^{3}\psi\|\,.
\end{align*}
Proceeding as above, one has, as regards the evolution of the initial datum $\phi$,
\begin{align*}
&\,\frac1{2\pi i}\int_{c'-i\infty}
^{c'+i\infty} e^{ tz} K_{z^{2}}(\ve)\Delta^{m+1}\phi\,\frac{dz}{z^{2m+1}}\\
&\,=(1-\ve^{2})\ve\big(v'_{0}(t)+I'_{0}(t,\ve)+I'_{1}(t,\ve)+I'_{1}(t,\ve)^{*}+{I'_{2}}(t,\ve)\big)+I'(t,\ve)\,,
\end{align*}
where
\be\label{v0p}
v'_{0}(t):=\frac1{2\pi i}\int_{c'-i\infty}
^{c'+i\infty} e^{ tz} G_{z}(1+z^{2}N_{0})^{-1}G_{\bar z}^{*}\Delta^{m+2}\phi\,\frac{dz}{z^{2m+1}}\,,
\ee
\begin{align*}
I'_{0}(t,\ve):=&\,-\frac1{2\pi i}\int_{c'-i\infty}^{c'-ic_{1}(c'/\ve)^{1/4}}
e^{ tz} G_{z}(1+z^{2}N_{0})^{-1}G_{\bar z}^{*}\Delta^{m+2}\phi\,\frac{dz}{z^{2m+1}}\\
&-
\frac1{2\pi i}\int_{c'+ic_{1}(c'/\ve)^{1/4}}^{c'+i\infty} 
e^{ tz} G_{z}(1+z^{2}N_{0})^{-1}G_{\bar z}^{*}\Delta^{m+2}\phi\,\frac{dz}{z^{2m+1}}
\,,
\end{align*}
$$
I'_{1}(t,\ve):=\frac1{2\pi i}\int_{c'-ic_{1}(c'/\ve)^{1/4}}
^{c'+ic_{1}(c'/\ve)^{1/4}} e^{ tz} G_{z}(1+z^{2}N_{0})^{-1}G_{\bar z}^{(1)}(\ve)^{*}\Delta^{m+2}\phi\,\frac{dz}{z^{2m+1}}\,,
$$
$$
I'_{1}(t,\ve)^{*}:=\frac1{2\pi i}\int_{c'-ic_{1}(c'/\ve)^{1/4}}
^{c'+ic_{1}(c'/\ve)^{1/4}} e^{ tz} G_{z}^{(1)}(\ve)(1+z^{2}N_{0})^{-1}G_{\bar z}^{*}\Delta^{m+2}\phi\,\frac{dz}{z^{2m+1}}\,,
$$
$$
I'_{2}(t,\ve):=\frac1{2\pi i}\int_{c'-ic_{1}(c'/\ve)^{1/4}}
^{c'+ic_{1}(c'/\ve)^{1/4}} e^{ tz} G_{z}(\ve)\Lambda_{z}(\ve)G_{z}(\ve)^{*}\Delta^{m+2}\phi\,\frac{dz}{z^{2m+1}}\,.
$$
One has
$$
\|I'_{0}(t,\ve)\|\lesssim \frac{e^{c't}}{c'^{2}}\,\int_{c_{1}(c'/\ve)^{1/4}}^{+\infty}\frac{d\gamma}{\gamma^{2m}}\ \|\Delta^{m+2}\phi\|
\lesssim \left(\frac{\ve}{c'}\right)^{m/2-1/4}\,\frac{e^{c't}}{c'^{2}}\,\|\Delta^{m+2}\phi\|\,.
$$
As regard the estimates for $\|I'_{1}(t,\ve)\|$ and
$\|I'_{2}(t,\ve)\|$, it suffices to proceed as in the proofs of Lemmata \ref{I1} and \ref{I2}, with the foresight to take into account the different scaling properties of $(c'^{2}+\gamma^{2})^{-(m+1/2)}$ with respect to $(c^{2}+\gamma^{2})^{-n}$; this means that in the estimates, after the scaling $\gamma\mapsto \gamma/c'$, the exponent $2n$ has to be replaced by $2m+1$. Therefore, one gets
$$
\|I'_{1}(t,\ve)\|\lesssim\,\ve^{1/2}\,\frac{e^{tc}}{c^{2m+5/2}}\ \|\Delta^{m+2}\phi\|\,,
\qquad
\|I'_{2}(t,\ve)\|\lesssim \,\ve^{1/2}\,\frac{e^{tc}}{c^{2(m+1)}}\ \|\Delta^{m+2}\phi\|
$$
and the estimates for   $\|I'_{1}(t,\ve)^{*}\|$ and $\|I'_{2}(t,\ve)^{*}\|$ are the same as the ones for 
$\|I'_{1}(t,\ve)\|$ and $\|I'_{2}(t,\ve)\|$.\par
Putting such estimates together, one gets
\begin{align*}
&\,{\mathcal I'}(t,\ve):=\left\|(1-\ve^{2})\ve
\big(I'_{0}(t,\ve)+I'_{1}(t,\ve)+I'_{1}(t,\ve)^{*}+{I'_{2}}(t,\ve)\big)+I'(t,\ve)
\right\|\\
\lesssim&\,\ve\bigg(\left(\frac{\ve}{c'}\right)^{m/2-1/4}\,\frac{e^{c't}}{c'^{2}}+\ve^{1/2}\left(\frac{e^{tc'}}{c'^{2m+5/2}}+\frac{e^{tc'}}{c'^{2m+2}}\right)\bigg)
\|\Delta^{m+2}\phi\|+\left(\frac{\ve}{c'}\right)^{m'/2+1/4}\,\frac{e^{c't}}{c'}\,
\|\Delta^{m'+1}\phi\|\\
\lesssim&\,\ve\bigg(\left(\frac{\ve}{c'}\right)^{m/2-1/4}\,\frac{e^{c't}}{c'^{2}}+\ve^{1/2}\,\frac{e^{tc'}}{c'^{2m+5/2}}\bigg)\|\Delta^{m+2}\phi\|
+\left(\frac{\ve}{c'}\right)^{m'/2+1/4}\,\frac{e^{c't}}{c'}\,\|\Delta^{m'+1}\phi\|\,.
\end{align*}
Then, taking $c'=\ve^{\tau'}$, $\tau'>0$, and $t\le\ve^{-\tau'}$, 
\begin{align*}
{\mathcal I'}(t,\ve)\lesssim&\,\ve\left(\ve^{(1-\tau')(m/2-1/4)-2\tau'}+\ve^{1/2-(2m+5/2)\tau'}\right)\|\Delta^{m+2}\phi\|\\
&+\ve^{(1-\tau')(m'/2+1/4)-\tau'}\,\|\Delta^{m'+1}\phi\|
\end{align*}
and, by \eqref{c1c}, $\ve\le c_{1}^{4/(3\tau'+1)}$. We choose $\tau'$ such that 
$$
(1-\tau')\left(\frac{m}2-\frac14\right)-2\tau'>0\,,\quad \frac12-\left(2m+\frac52\right)\tau>0\,,\quad (1-\tau')\left(\frac{m'}2+\frac14\right)-\tau'>1\,,
$$
equivalently,
$$
\tau'<\frac{2m-1}{2m+7}\,,\quad \tau'<\frac1{4m+5}\,,\quad \tau'<\frac{2m'-3}{2m'+5}\,.
$$
Taking $m=1$ and $m'=2$, one obtains the bound $\tau'<\frac1{9}$ and
 \begin{align*}
{\mathcal I'}(t,\ve)\lesssim&\,\ve\left(\ve^{(1-9\tau')/4}+\ve^{(1-9\tau')/2}\right)
\|\Delta^{3}\phi\|
+\ve^{(5-9\tau')/4}\|\Delta^{3}\phi\|\lesssim\ve^{1+(1-9\tau')/4}
\|\Delta^{3}\phi\|\,.
\end{align*}
Thus,
$$
u(t)=u_{\rm free}(t)+\ve(\ve^{2}-1)v(t)+r_{\ve}(t)+r'_{\ve}(t)\,,
$$
$$v(t):=v_{0}(t)+v'_{0}(t)\,,\quad \|r_{\ve}(t)\|\le{\mathcal I}(t,\ve)\,,\quad \|r'_{\ve}(t)\|\le{\mathcal I'}(t,\ve)\,.
$$
By \eqref{v0}, with $n=2$, \eqref{v0p}, with $m=1$, and by \eqref{Gz}, 
\begin{align*}
v(t)=&\,\frac1{2\pi i}\int_{c-i\infty}
^{c+i\infty} e^{ tz} {\mathcal G}_{z}\langle1,(1+z^{2}N_{0})^{-1}1\rangle\langle{\mathcal G}_{\bar z},\Delta^{3}\psi\rangle\,\frac{dz}{z^{4}}
\\+&\,
\frac1{2\pi i}\int_{c'-i\infty}
^{c'+i\infty} e^{ tz} \langle1,(1+z^{2}N_{0})^{-1}1\rangle\langle{\mathcal G}_{\bar z},\Delta^{3}\phi\rangle
\,\frac{dz}{z^{3}}\,.
\end{align*}
By the spectral resolution of $N_{0}$ (see \eqref{sp-res}), the linear operator ${\mathcal G}_{z}\langle1,(1+z^{2}N_{0})^{-1}1\rangle\langle{\mathcal G}_{\bar z},\cdot\rangle$ has the kernel 
$$
K_{z}(x,y)=\sum_{k=1}^{+\infty}\frac{|\langle e_{k},1\rangle|^{2}}{1+z^{2}\lambda_{k}}\ {\mathcal G}_{z}(x){\mathcal G}_{z}(y)=\sum_{k=1}^{+\infty}\frac{|\langle e_{k},1\rangle|^{2}}{1+z^{2}\lambda_{k}}\ \frac{e^{-z(|x|+|y|)}}{(4\pi)^{2}|x|\,|y|}\,.
$$
Let
$$
K^{(\ell)}_{k,x,y}(z):=\frac1{z^{\ell}}\,\frac{e^{-z(|x|+|y|)}}{1+z^{2}\lambda_{k}}\ ,\qquad \ell\ge 0\, .
$$
Then,
$$
[{\mathcal L}^{-1}K^{(0)}_{k,x,y}](t)
=H(t-|x|-|y|)\,\frac1{{\lambda_{k}^{1/2}}}\,\sin\bigg(\frac{t-|x|-|y|}{{\lambda_{k}^{1/2}}}\bigg)
$$
and, by the Laplace transform properties,
$$
[{\mathcal L}^{-1}K^{(\ell+1)}_{k,x,y}](t)=\int_{0}^{t}[{\mathcal L}^{-1}K^{(\ell)}_{k,x,y}](s)\,ds\,.
$$
Hence,
$$
[{\mathcal L}^{-1}K^{(\ell)}_{k,x,y}](t)=H(t-|x|-|y|)\,\lambda_{k}^{(\ell-1)/2}f^{(m)}_{k,x,y}\left(\frac{t-|x|-|y|}{{\lambda_{k}^{1/2}}}\right)\,,
$$
$$
f^{(\ell+1)}_{k,x,y}(t)=\int_{0}^{t}f^{(\ell)}_{k,x,y}(s)\,ds\,,\qquad f^{(0)}_{k,x,y}(t)=\sin t\,.
$$
The latter recursive relations give
$$
f^{(3)}_{k,x,y}(t)=\frac{t^{2}}{2}-1+\cos t\,,\qquad f^{(4)}_{k,x,y}(t)=\frac{t^{3}}{6}-t+\sin t\,.
$$
Therefore,
\begin{align*}
&\,[{\mathcal L}^{-1}K^{(4)}_{k,x,y}](t)\\=&\,H(t-|x|-|y|)\,\lambda_{k}^{3/2}\left(
\frac{(t-|x|-|y|)^{3}}{6\lambda_{k}^{3/2}}-\frac{t-|x|-|y|}{\lambda_{k}^{1/2}}+\sin\bigg(\frac{t-|x|-|y|}{{\lambda_{k}^{1/2}}}\bigg)\right)\,,
\end{align*}
\begin{align*}
[{\mathcal L}^{-1}K^{(3)}_{k,x,y}](t)=H(t-|x|-|y|)\,\lambda_{k}\left(
\frac{(t-|x|-|y|)^{2}}{2\lambda_{k}}-1+\cos\bigg(\frac{t-|x|-|y|}{{\lambda_{k}^{1/2}}}\bigg)\right)\ .
\end{align*}
Let either $\ell=3$ or $\ell=4$. Then, for any $\varphi\in H^{6}(\RE^{3})$, 
\begin{align*}
&\int_{c-i\infty}
^{c+i\infty}\int_{\RE^{3}}\sum_{k=1}^{+\infty}\left|e^{ tz} \frac{|\langle e_{k},1\rangle|^{2}}{1+z^{2}\lambda_{k}}\ {\mathcal G}_{z}(x){\mathcal G}_{z}(y)
 \,\Delta^{3}\varphi(y)\,\right|\,dy\,\frac{dz}{z^{\ell}}\\
\lesssim&\, \frac{e^{ c(t-|x|)}}{|x|}
\int_{-\infty}
^{+\infty} \sum_{k=1}^{+\infty}\frac{|\langle e_{k},1\rangle|^{2}}{|1+(c+i\gamma)^{2}\lambda_{k}|}\,
\frac{d\gamma}{(c^{2}+\gamma^{2})^{\ell/2}} \int_{\RE^{3}}e^{-c|y|}\,|\Delta^{3}\varphi(y)|\,\frac{dy}{|y|}\\
\lesssim &\, \frac{e^{ c(t-|x|)}}{|x|}\sum_{k=1}^{+\infty}|\langle e_{k},1\rangle|^{2}
\int_{-\infty}
^{+\infty} \sup_{0\le\lambda\le \|N_{0}\|}\frac1{|1+(c+i\gamma)^{2}\lambda|}\,
\frac{d\gamma}{(c^{2}+\gamma^{2})^{\ell/2}} \,\times\\
&\times\int_{\RE^{3}}e^{-c|y|}\,|\Delta^{3}\varphi(y)|\,\frac{dy}{|y|}\\
\lesssim&\, \frac{e^{ c(t-|x|)}}{|x|}\,|\Omega|\int_{-\infty}
^{+\infty} \max\left\{1,\frac{c^{2}+\gamma^{2}}{2c|\gamma|}\right\}
\frac{d\gamma}{(c^{2}+\gamma^{2})^{\ell/2}}\int_{\RE^{3}}e^{-c|y|}\,|\Delta^{3}\varphi(y)|\,\frac{dy}{|y|}
<+\infty\,.
\end{align*}
Hence, by Fubini's theorem, one has
\begin{align*}
&\,\frac1{2\pi i}\int_{c-i\infty}
^{c+i\infty} e^{ tz} {\mathcal G}_{z}(x)\langle1,(1+z^{2}N_{0})^{-1}1\rangle\langle{\mathcal G}_{\bar z},\Delta^{3}\psi\rangle\,\frac{dz}{z^{4}}\\
=&\,\frac1{4\pi|x|}\,\sum_{k=1}^{+\infty}|\langle e_{k},1\rangle|^{2}\,\frac1{2\pi i}\int_{c-i\infty}
^{c+i\infty} e^{ tz} \left(
 \int_{\RE^{3}}K^{(4)}_{k,x,y}(z)\,\frac{\Delta^{3}\psi(y)}{4\pi|y|}\,dy\right)dz
\\
=&\,\frac1{4\pi|x|}\,\sum_{k=1}^{+\infty}|\langle e_{k},1\rangle|^{2}\int_{\RE^{3}}[{\mathcal L}^{-1}K^{(4)}_{k,x,y}](t) \,\frac{\Delta^{3}\psi(y)}{4\pi|y|}
\,dy
\\
=&\,\frac1{4\pi|x|}\,
\sum_{k=1}^{+\infty}|\langle e_{k},1\rangle|^{2}
\int_{\RE^{3}}H(t-|x|-|y|)\,\lambda_{k}^{3/2}\bigg(\,
\frac{(t-|x|-|y|)^{3}}{6\lambda_{k}^{3/2}}\\
&-\frac{t-|x|-|y|}{\lambda_{k}^{1/2}}+\sin\bigg(\frac{t-|x|-|y|}{{\lambda_{k}^{1/2}}}\bigg)
\bigg)
\frac{\Delta^{3}\psi(y)}{4\pi|y|}\,dy\\
=&\,\frac{H(t-|x|)}{4\pi|x|}\,
\sum_{k=1}^{+\infty}|\langle e_{k},1\rangle|^{2}\lambda_{k}^{3/2}
\int_{|y|<t-|x|}\bigg(\,
\frac{(t-|x|-|y|)^{3}}{6\lambda_{k}^{3/2}}\\
&-\frac{t-|x|-|y|}{\lambda_{k}^{1/2}}+\sin\bigg(\frac{t-|x|-|y|}{{\lambda_{k}^{1/2}}}\bigg)
\bigg)
\frac{\Delta^{3}\psi(y)}{4\pi|y|}\,dy
\,.
\end{align*}
Analogously,
\begin{align*}
&\,\frac1{2\pi i}\int_{c-i\infty}
^{c+i\infty} e^{ tz} {\mathcal G}_{z}(x)\langle1,(1+z^{2}N_{0})^{-1}1\rangle\langle{\mathcal G}_{\bar z},\Delta^{3}\phi\rangle\,\frac{dz}{z^{3}}\\
\end{align*}
\begin{align*}
=&\,\frac1{4\pi|x|}\,\sum_{k=1}^{+\infty}|\langle e_{k},1\rangle|^{2}\,\frac1{2\pi i}\int_{c-i\infty}
^{c+i\infty} e^{ tz} \left(
 \int_{\RE^{3}}K^{(3)}_{k,x,y}(z)\,\frac{\Delta^{3}\phi(y)}{4\pi|y|}\,dy\right)dz
\\
=&\,\frac1{4\pi|x|}\,\sum_{k=1}^{+\infty}|\langle e_{k},1\rangle|^{2}
\int_{\RE^{3}}[{\mathcal L}^{-1}K^{(3)}_{k,x,y}](t) \,\frac{\Delta^{3}\phi(y)}{4\pi|y|}
\,dy
\\
=&\,\frac1{4\pi|x|}\,
\sum_{k=1}^{+\infty}|\langle e_{k},1\rangle|^{2}
\int_{\RE^{3}}H(t-|x|-|y|)\,\lambda_{k}\bigg(\,
\frac{(t-|x|-|y|)^{2}}{2\lambda_{k}}\\&-1+\cos\bigg(\frac{t-|x|-|y|}{{\lambda_{k}^{1/2}}}\bigg)\bigg)
\frac{\Delta^{3}\phi(y)}{4\pi|y|}\,dy\\
=&\,\frac{H(t-|x|)}{4\pi|x|}\,
\sum_{k=1}^{+\infty}|\langle e_{k},1\rangle|^{2}\lambda_{k}
\int_{|y|<t-|x|}\bigg(\,
\frac{(t-|x|-|y|)^{2}}{2\lambda_{k}}\\&-1+\cos\bigg(\frac{t-|x|-|y|}{{\lambda_{k}^{1/2}}}\bigg)\bigg)
\frac{\Delta^{3}\phi(y)}{4\pi|y|}\,dy
\,.
\end{align*}
The proof is then concluded by Lemma \ref{green}.
\end{proof}
As regards the inhomogeneous Cauchy problem, one has the following 
\begin{theorem}\label{inhom} Let $f\in L_{loc}^{1}(\RE_{+},H^{6}(\RE^{3}))$ be such that $f(t)$ is supported outside $\Omega_{\ve}$ for any $t\in\RE_{+}$; let 
$\widehat u_{\ve}$ and $\widehat u_{\text {\rm free}}$ be the mild (classical if furthermore $f\in{\mathcal C}(\RE_{+}, H^{2}(\RE^{3}))\,$) solutions of the inhomogeneous Cauchy problems 
$$
\begin{cases}
(\ve^{-2}\chi_{\Omega_{\ve}}+\chi_{{\RE^{3}\backslash\Omega_{\ve}}})\partial_{tt} \widehat u_{\ve}(t)=\widehat u_{\ve}(t)+f(t)\\
\widehat u_{\ve}(0)=0\\
\partial_{t} \widehat u_{\ve}(0)=0\,.
\end{cases}
\qquad
\begin{cases}
\partial_{tt} \widehat u_{\text {\rm free}}(t)=\Delta \widehat u_{\text {\rm free}}(t)+f(t)\\
\widehat u_{\text {\rm free}}(0)=0\\
\partial_{t}\widehat u_{\text {\rm free}}(0)=0\,.
\end{cases}
$$
If   
$$
\|\Delta^{3}f(t)\|\lesssim t^{\alpha}\,, \quad t\gg 1\,,\qquad\alpha\ge -1\,,
$$
then, for any $\tau\in\left(0,\frac1{11+2(\alpha+1)}\right)$ and for any $\ve>0$ such that 
$$
\ve\le\min\left\{1, c_{1}^{4/(3\tau+1)}\right\}\,,
$$ 
one has 
$$
\widehat u_{\ve}(t)=\widehat u_{\text {\rm free}}(t)+\ve\, (\ve^{2}-1)\,\widehat v(t)+\widehat r_{\ve}(t)\,,\qquad \widehat v(t,x)
=H(t-|x|)\,\frac{\widehat q(t-|x|)}{4\pi|x|}\,,
$$
where 
$$
\sup_{0<t\le1/\ve^{\tau}}\|\widehat r_{\ve}(t)\|\lesssim\begin{cases}\ve^{1+(1-11\tau)/2}\ln\ve^{-1}&\alpha=-1\\
\ve^{1+(1-(11+2(\alpha+1)\tau)/2}&\alpha>-1\,,
\end{cases}
$$
\begin{align*}
\widehat q(t)\!:=\!\sum_{k=1}^{+\infty}|\langle e_{k},1\rangle|^{2}\,\frac1{
\lambda_{k}^{1/2}}
\int_{0}^{t}\!\!&\int_{|y|<t-s}\!\!\sin\bigg(\frac{t-s-|y|}{{\lambda_{k}^{1/2}}}\bigg)
\frac{\Delta f(s,y)}{4\pi|y|}\,dy\,ds
\end{align*}
and the $e_{n}$'s and the $\lambda_{n}$'s are the eigenvectors and the eigenvalues of $N_{0}$.\par
If $f\in L^{1}(\RE_{+},H^{6}(\RE^{3}))$, then, for any $\tau\in(0,\frac1{11})$ and for any  $\ve>0$ as above, 
$$
\sup_{0<t\le1/\ve^{\tau}}\|\widehat r_{\ve}(t)\|\lesssim
\ve^{1+(1-11\tau)/2}
\,.
$$
\end{theorem}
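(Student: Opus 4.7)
The plan is to reduce to Theorem \ref{T1} via Duhamel's principle and adapt the estimates to accommodate the extra time integration. By the variation of parameters formula for second-order Cauchy problems with self-adjoint generators (cf.\ \cite[Sec.~3]{Arendt}), the mild solution is
\[
\widehat u_{\ve}(t)=\int_{0}^{t}(-A(\ve))^{-1/2}\sin\!\bigl((t-s)(-A(\ve))^{1/2}\bigr)\,f(s)\,ds\,,
\]
and analogously for $\widehat u_{\text{\rm free}}$. Applying Lemma \ref{cp-ep} pointwise in $s$ to $\psi=f(s)\in H^{6}(\RE^{3})$ (supported outside $\Omega_{\ve}$), with $\phi=0$ and $n=3$, produces
\[
\widehat u_{\ve}(t)-\widehat u_{\text{\rm free}}(t)=-\int_{0}^{t}\frac{1}{2\pi i}\int_{c-i\infty}^{c+i\infty}e^{(t-s)z}\,K_{z^{2}}(\ve)\,\Delta^{3}f(s)\,\frac{dz}{z^{6}}\,ds\,.
\]

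Next, I plug in the resolvent expansion \eqref{RR} for $K_{z^{2}}(\ve)$ and split the $z$-contour at $|\gamma|=c_{1}(c/\ve)^{1/4}$ exactly as in the proof of Theorem \ref{T1}. Since the integrands are linear in $\Delta^{3}f(s)$ with the same $z$-structure as the $\Delta^{3}\psi$ terms in Theorem \ref{T1}, the estimates from Lemmata \ref{LN}--\ref{LG} yield, pointwise in $s$, a remainder bound of the form
\[
\bigl\|\widehat u_{\ve}(t)-\widehat u_{\text{\rm free}}(t)-\ve(\ve^{2}-1)\,\widehat v(t)\bigr\|\lesssim\ve^{1+(1-11\tau)/2}\int_{0}^{t}e^{(t-s)\ve^{\tau}}\,\|\Delta^{3}f(s)\|\,ds\,,
\]
with $c=\ve^{\tau}$ and $t\le\ve^{-\tau}$, so that the exponential factor is bounded. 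Under the growth hypothesis $\|\Delta^{3}f(s)\|\lesssim s^{\alpha}$ one has, for $t\le\ve^{-\tau}$,
\[
\int_{0}^{t}s^{\alpha}\,ds\lesssim\begin{cases}\ln(1/\ve) & \alpha=-1\,,\\ \ve^{-\tau(\alpha+1)} & \alpha>-1\,,\end{cases}
\]
which yields the announced remainder bound under the refined constraint $\tau<1/(11+2(\alpha+1))$, needed to keep $1-(11+2(\alpha+1))\tau>0$. The $L^{1}(\RE_{+};H^{6})$ case is immediate since $\int_{0}^{t}\|\Delta^{3}f(s)\|\,ds$ is controlled by the $L^{1}$-norm of $\Delta^{3}f$.

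The leading term $\widehat v(t)$ comes from the resolvent piece $G_{z}(1+z^{2}N_{0})^{-1}G_{\bar z}^{*}$ in \eqref{RR}. After swapping integration order (justified by absolute convergence via Lemma \ref{LG}), it takes the form
\[
\widehat v(t,x)=-\frac{1}{2\pi i}\int_{0}^{t}\!\!\int_{c-i\infty}^{c+i\infty}\!\!e^{(t-s)z}\,\mathcal G_{z}(x)\,\langle 1,(1+z^{2}N_{0})^{-1}1\rangle\,\langle\mathcal G_{\bar z},\Delta^{4}f(s)\rangle\,\frac{dz}{z^{6}}\,ds\,.
\]
Since $f(s)$ and all its iterated Laplacians vanish in a neighborhood of the origin, the identity $(-\Delta+z^{2})\mathcal G_{z}=\delta_{0}$ together with repeated integration by parts gives $\langle\mathcal G_{\bar z},\Delta^{4}f(s)\rangle=\bar z^{6}\langle\mathcal G_{\bar z},\Delta f(s)\rangle$, canceling the factor $z^{-6}$. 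Expanding $\langle 1,(1+z^{2}N_{0})^{-1}1\rangle$ via the spectral resolution \eqref{sp-res} of $N_{0}$ and using $\mathcal L^{-1}[e^{-z(|x|+|y|)}/(1+z^{2}\lambda_{k})](t)=H(t-|x|-|y|)\,\lambda_{k}^{-1/2}\sin((t-|x|-|y|)/\lambda_{k}^{1/2})$ produces the closed formula $\widehat v(t,x)=H(t-|x|)\,\widehat q(t-|x|)/(4\pi|x|)$ with the announced $\widehat q$.

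The main obstacle is the careful bookkeeping of the admissible range of $\tau$ under polynomial growth of $\|\Delta^{3}f\|$: the extra factor $\ve^{-\tau(\alpha+1)}$ from the $s$-integration tightens the window from $\tau<1/11$ down to $\tau<1/(11+2(\alpha+1))$, and in the borderline case $\alpha=-1$ produces the logarithmic factor $\ln(1/\ve)$. The remaining steps are essentially copies of the arguments already developed in the homogeneous case, with $s$ appearing as a passive parameter in the contour estimates and the novelty confined to the Green-type identity that reduces $\Delta^{4}f$ to $\Delta f$ in the leading term.
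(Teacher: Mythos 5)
Your proposal follows the paper's argument almost step for step: Duhamel's formula, Lemma \ref{cp-ep} applied with $\phi=0$ and $\psi=f(s)$ for each $s$, the contour splitting and the bounds of Lemmata \ref{LN}--\ref{LG} exactly as in Theorem \ref{T1}, and then the elementary integration of $\int_0^{1/\ve^\tau}\|\Delta^3 f(s)\|\,ds$, which produces the logarithm at $\alpha=-1$, the loss $\ve^{-\tau(\alpha+1)}$ for $\alpha>-1$, the tightened window $\tau<1/(11+2(\alpha+1))$, and the clean bound in the $L^{1}$ case. The one place you genuinely deviate is the derivation of the closed formula for $\widehat q$: you cancel all powers of $z$ in the Laplace domain via $\langle\mathcal G_{\bar z},\Delta^{4}f(s)\rangle=z^{6}\langle\mathcal G_{\bar z},\Delta f(s)\rangle$ (note the factor is $z^{6}$, not $\bar z^{6}$, with the paper's conventions --- otherwise nothing would cancel against $z^{-6}$) and then invert $e^{-z(|x|+|y|)}/(1+z^{2}\lambda_{k})$ termwise. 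The paper instead keeps a factor $z^{-4}$ (taking $n=2$ in Corollary \ref{coro} on the central part of the contour), verifies absolute convergence of the joint integral in $(z,y,k,s)$ with that factor present, applies Fubini, and only at the very end converts $\Delta^{3}f$ into $\Delta f$ through the time-domain Green identity of Lemma \ref{green}. The order matters: after your full cancellation the integrand has no decay in $\mathrm{Im}\,z$ uniformly in $k$ (the available bound $\sup_{0\le\lambda\le\|N_{0}\|}|1+z^{2}\lambda|^{-1}\lesssim|\gamma|/c$ grows), so the interchange of $\int dz$ with $\sum_{k}$ and $\int dy\,ds$ is not ``justified by absolute convergence via Lemma \ref{LG}'' as you assert; it would require something like $\sum_{k}\lambda_{k}^{-1}|\langle e_{k},1\rangle|^{2}<+\infty$, which is never established. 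The formula you reach is correct, and the gap is easily repaired by performing the reduction from $\Delta^{3}f$ to $\Delta f$ only after the inverse Laplace transform, as the paper does.
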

\begin{proof} By \cite[Theorem 4.1 and Lemma 4.2 in Section II.4]{Fatt} one gets the existence of the solutions with the stated regularity and 
$$
\widehat u_{\text {\rm free}}(t)=\int_{0}^{t}(-\Delta)^{-1/2}\sin((t-s)(-\Delta)^{1/2})f(s)\,ds\,,
$$
$$
\widehat u_{\ve}(t)=\int_{0}^{t}(-A(\ve))^{-1/2}\sin((t-s)(-A(\ve))^{1/2})f(s)\,ds\,.
$$
Therefore, by the same reasonings as in the proof of Theorem \ref{T1}, 
\begin{align*}
&\widehat u_{\ve}(t)=\widehat u_{\text {\rm free}}(t)\\
&\ve\,(\ve^{2}-1)\,\int_{0}^{t}\bigg(
\frac1{2\pi i}\int_{c-i\infty}
^{c+i\infty} e^{ (t-s)z} {\mathcal G}_{z}\langle1,(1+z^{2}N_{0})^{-1}1\rangle\langle{\mathcal G}_{\bar z},\Delta^{3}f(s)\rangle\,\frac{dz}{z^{4}}\bigg)\,ds+\widehat r_{\ve}(t)\,,
\end{align*}
where 
\begin{align*}
\sup_{0<t\le1/\ve^{\tau}}\|\widehat r_{\ve}(t)\|
\lesssim&\,\ve^{1+(1-11\tau)/2}\int_{0}^{1/\ve^{\tau}}\|\Delta^{3} f(s)\|\,ds\,.
\end{align*}
This gives $$\sup_{0<t\le1/\ve^{\tau}}\|\widehat r_{\ve}(t)\|
\lesssim\ve^{1+(1-11\tau)/2}$$
whenever $f\in L^{1}(\RE_{+},H^{6}(\RE^{3}))$. If $\|\Delta^{3}f(t)\|\lesssim t^{\alpha}$, $t\gg 1$, then
$$
\int_{0}^{1/\ve^{\tau}}\|\Delta^{3} f(s)\|\,ds\lesssim
\begin{cases}
1+\ln\ve^{-1}&\alpha=-1\\
1+\ve^{-\tau(\alpha+1)}&\alpha>-1\,,
\end{cases}
$$
and so in this case
$$\sup_{0<t\le1/\ve^{\tau}}\|\widehat r_{\ve}(t)\|
\lesssim\begin{cases}\ve^{1+(1-11\tau)/2}\ln\ve^{-1}&\alpha=-1\\
\ve^{1+(1-11\tau)/2-\tau(\alpha+1)}&\alpha>-1\,.\end{cases}
$$
Similarly to the proof of Theorem \ref{T1},
\begin{align*}
&\frac1{4\pi|x|}\int_{0}^{t}\int_{c-i\infty}
^{c+i\infty}\int_{\RE^{3}}\sum_{k=1}^{+\infty}\left|e^{ (t-s)z} |\langle e_{k},1\rangle|^{2}
 K^{(4)}_{k,x,y}(z)\,\frac{\Delta^{3} f(s,y)}{4\pi|y|}\,\right|\,dy\,dz\,ds\\
\lesssim&\, \frac{e^{-c|x|}}{|x|}
\int_{-\infty}
^{+\infty} \sum_{k=1}^{+\infty}\frac{|\langle e_{k},1\rangle|^{2}}{|1+(c+i\gamma)^{2}\lambda_{k}|}\,
\frac{d\gamma}{(c^{2}+\gamma^{2})^{3}} \int_{\RE^{3}}e^{-c|y|}\,\int_{0}^{t}e^{ c(t-s)}|\Delta^{3} f(s,y)|\,ds\,\frac{dy}{|y|}\\
\lesssim&\, \frac{e^{-c|x|}}{|x|}\sum_{k=1}^{+\infty}|\langle e_{k},1\rangle|^{2}
\int_{-\infty}
^{+\infty} \max\left\{1,\frac{c^{2}+\gamma^{2}}{2c\gamma}\right\}\frac{d\gamma}{(c^{2}+\gamma^{2})^{3}} \,\times\\
&\times\int_{\RE^{3}}e^{-c|y|}\,\int_{0}^{t}|\Delta^{3}f(s,y)|\,ds\,\frac{dy}{|y|}<+\infty
\end{align*}
and so, by Fubini's theorem, 
\begin{align*}
&\,\int_{0}^{t}\frac1{2\pi i}\int_{c-i\infty}
^{c+i\infty} e^{ (t-s)z} {\mathcal G}_{z}(x)\langle1,(1+z^{2}N_{0})^{-1}1\rangle\langle{\mathcal G}_{\bar z},\Delta^{3}f(s)\rangle\,\frac{dz}{z^{4}}\,ds\\
=&\,\int_{0}^{t}\frac1{4\pi|x|}\,\sum_{k=1}^{+\infty}|\langle e_{k},1\rangle|^{2}\,\frac1{2\pi i}\int_{c-i\infty}
^{c+i\infty} e^{ (t-s)z} \left(
 \int_{\RE^{3}}K^{(4)}_{k,x,y}(z)\,\frac{\Delta^{3}f(s,y)}{4\pi|y|}\,dy\right)dz\,ds
\\
=&\,\frac1{4\pi|x|}\,\sum_{k=1}^{+\infty}|\langle e_{k},1\rangle|^{2}\int_{0}^{t}\int_{\RE^{3}}[{\mathcal L}^{-1}K^{(4)}_{k,x,y}](t-s) \,\frac{\Delta^{3}f(s,y)}{4\pi|y|}
\,dy\,ds\\
=&\,\frac1{4\pi|x|}\,
\sum_{k=1}^{+\infty}|\langle e_{k},1\rangle|^{2}
\int_{0}^{t}\int_{\RE^{3}}H(t-s-|x|-|y|)\,\lambda_{k}^{3/2}
\bigg(\,
\frac{(t-s-|x|-|y|)^{3}}{6\lambda_{k}^{3/2}}\\
&-\frac{t-s-|x|-|y|}{\lambda_{k}^{1/2}}+\sin\bigg(\frac{t-s-|x|-|y|}{{\lambda_{k}^{1/2}}}\bigg)
\bigg)
\frac{\Delta^{3}f(s,y)}{4\pi|y|}\,dy\,ds\\
=&\,\frac{H(t-|x|)}{4\pi|x|}\,
\sum_{k=1}^{+\infty}|\langle e_{k},1\rangle|^{2}
\lambda_{k}^{3/2}
\int_{0}^{t-|x|}\int_{|y|<t-s-|x|}\bigg(\,
\frac{(t-s-|x|-|y|)^{3}}{6\lambda_{k}^{3/2}}\\
&-\frac{t-s-|x|-|y|}{\lambda_{k}^{1/2}}+\sin\bigg(\frac{t-s-|x|-|y|}{{\lambda_{k}^{1/2}}}\bigg)
\bigg)
\frac{\Delta^{3}f(s,y)}{4\pi|y|}\,dy\,ds
\,.
\end{align*}
The proof is then concluded by Lemma \ref{green}.
\end{proof}
\subsection{Effective dynamics} In this subsection, $\phi$, $\psi$ and  $f(t)$ are as in Theorems \ref{T1} and \ref{inhom}, $\delta_{0}$ denotes the Dirac delta distribution supported at the origin and $Mu(t)$ denotes the spherical mean of the continuous function $u$ over the sphere $\{x\in\RE^{3}:|x|=t\}$, i.e. (here $\sigma_{t}$ denotes the surface measure)
$$
Mu(t):=\frac1{4\pi t^{2}}\int_{|x|=t} u(x)\,d\sigma_{t}(x)\,;
$$
$u_{\text{\rm free}}$ and $\widehat u_{\text{\rm free}}$ denote the solutions of the homogeneous and inhomogeneous Cauchy problem as in Theorems \ref{T1} and \ref{inhom} respectively.
\begin{theorem}\label{T-eff} Let  $u_{\ve}$ and $u_{\ve,\text{\rm eff}}$ be the solutions of the inhomogeneous Cauchy problems
$$\begin{cases}
(\ve^{-2}\chi_{\Omega_{\ve}}+\chi_{{\RE^{3}\backslash\Omega_{\ve}}})\partial_{tt} u_{\ve}(t)=\Delta u_{\ve}(t)+f(t)\\
u_{\ve}(0)=\phi\\
\partial_{t} u_{\ve}(0)=\psi\,,
\end{cases}
\begin{cases}
\partial_{tt} u_{\ve,\text{\rm eff}}(t)=\Delta u_{\ve,\text{\rm eff}}(t)+\ve(\ve^{2}-1)\,q(t)\,\delta_{0}\\
u_{\ve,\text{\rm eff}}(0)=\phi\\
\partial_{t} u_{\ve,\text{\rm eff}}(0)=\psi\,,
\end{cases}
$$
where $q(t):=\sum_{k=1}^{\infty}q_{k}(t)$ and, for any $k\ge 1$, 
$q_{k}(t)$ solves the Cauchy  problem 
\be\label{cp-qeff}
\begin{cases}
\lambda_{k}\,\ddot q_{k}(t)=-q_{k}(t)+|\langle e_{k},1\rangle|^{2}\,h(t)\\
q_{k}(0)=0\\
\dot q_{k}(0)=0
\end{cases}
\ee
with
$$
h(t):=\Delta (u_{\text{\rm free}}+\widehat u_{\text{\rm free}})(t,0)=\frac{d\ }{dt}\, \big(tM {\Delta\phi}(t)\big)+tM{\Delta\psi}(t)+\int_{0}^{t}(t-s)(M{\Delta f(s)})(t-s)\,ds
$$
and the $e_{n}$'s and the $\lambda_{n}$'s are the eigenvectors and the eigenvalues of $N_{0}$.
Then, there exist $\tau\in(0,\frac1{11})$ and $\tau''\in(0,\frac12)$ such that 
$$
\sup_{0<t\le 1/\ve^{\tau}}\|u_{\ve}(t)-u_{\ve,\text{\rm eff}}(t)\|\lesssim\ve^{1+\tau''}\,,\qquad \ve\ll 1\,.
$$
\end{theorem}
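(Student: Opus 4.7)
The plan is to reduce Theorem~\ref{T-eff} to the asymptotic expansions of Theorems~\ref{T1} and \ref{inhom} by matching the leading spherical-wave correction with the effective dynamics. First, by linearity of the two Cauchy problems, decompose $u_{\ve} = u_{\ve}^{\text{hom}} + \widehat u_{\ve}$, where $u_{\ve}^{\text{hom}}$ solves the homogeneous problem with data $(\phi,\psi)$ and $\widehat u_{\ve}$ solves the inhomogeneous one with zero data. Theorems~\ref{T1} and \ref{inhom} then yield
\begin{equation*}
u_{\ve}(t,x) = u_{\text{free}}(t,x) + \widehat u_{\text{free}}(t,x) + \ve(\ve^{2}-1)\bigl[v(t,x)+\widehat v(t,x)\bigr] + R_{\ve}(t,x),
\end{equation*}
with $R_{\ve} := r_{\ve} + r'_{\ve} + \widehat r_{\ve}$. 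By the same linearity applied to the effective problem and the standard retarded Green's function $H(t-|x|)g(t-|x|)/(4\pi|x|)$ of $\partial_{tt}-\Delta$ acting on a source $g(t)\delta_{0}$, one has
\begin{equation*}
u_{\ve,\text{eff}}(t,x) = u_{\text{free}}(t,x) + \widehat u_{\text{free}}(t,x) + \ve(\ve^{2}-1)\,H(t-|x|)\,\frac{q(t-|x|)}{4\pi|x|}, \qquad q:=\sum_{k=1}^{\infty}q_{k}.
\end{equation*}

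The heart of the argument is the pointwise identity $v+\widehat v = H(t-|x|)q(t-|x|)/(4\pi|x|)$, which I would establish by Laplace matching in $t$. On one side, \eqref{cp-qeff} gives $\mathcal{L}[q_{k}](z) = |\langle e_{k},1\rangle|^{2}\mathcal{L}[h](z)/(1+\lambda_{k}z^{2})$; Kirchhoff's formula at $x=0$ (as recorded after the definition of $h$) identifies $h=\Delta(u_{\text{free}}+\widehat u_{\text{free}})(\cdot,0)$, and the resolvent representations $\mathcal{L}[u_{\text{free}}](z)=R_{z^{2}}(z\phi+\psi)$, $\mathcal{L}[\widehat u_{\text{free}}](z)=R_{z^{2}}\mathcal{L}[f](z)$, combined with the elementary identity $\langle\mathcal{G}_{\bar z},\Delta^{n} u\rangle = z^{2n}\langle\mathcal{G}_{\bar z},u\rangle$ (valid for $u\in H^{2n}$ with $\supp u$ disjoint from $\{0\}$), lead to
\begin{equation*}
\mathcal{L}[q_{k}](z) = \frac{|\langle e_{k},1\rangle|^{2}}{1+\lambda_{k} z^{2}}\bigl[z^{-1}\langle\mathcal{G}_{\bar z},\Delta^{2}\phi\rangle + \langle\mathcal{G}_{\bar z},\Delta\psi\rangle + \langle\mathcal{G}_{\bar z},\mathcal{L}[\Delta f](z)\rangle\bigr].
\end{equation*}
On the other side, recognizing the radial $|y|$-integrals in the explicit expressions for $v$ and $\widehat v$ as convolutions in the radial variable, together with $\mathcal{L}[1-\cos(\omega_{k}\cdot)](z) = \omega_{k}^{2}/[z(z^{2}+\omega_{k}^{2})]$ and $\mathcal{L}[\omega_{k}\sin(\omega_{k}\cdot)](z) = \omega_{k}^{2}/(z^{2}+\omega_{k}^{2})$ (with $\omega_{k}=\lambda_{k}^{-1/2}$), produces exactly the same bracket multiplied by $\mathcal{G}_{z}(x)$. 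Uniqueness of the Laplace transform, together with the absolute $k$-summability already exploited for the kernel $K_z(x,y)=\sum_{k}|\langle e_{k},1\rangle|^{2}(1+z^{2}\lambda_{k})^{-1}\mathcal{G}_{z}(x)\mathcal{G}_{z}(y)$ in the proof of Theorem~\ref{T1}, closes the identification.

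Granted the matching, we conclude $u_{\ve}-u_{\ve,\text{eff}} = R_{\ve}$, and the theorem follows by the triangle inequality from the bounds on $r_{\ve}, r'_{\ve}, \widehat r_{\ve}$ already proven. Concretely, any $\tau\in(0,\tfrac{1}{11})$ makes the three estimates valid on $[0,1/\ve^{\tau}]$, and the worst surviving exponent is $\tau'' := \min\{(1-11\tau)/2,(1-9\tau)/4\}\in(0,1/2)$, with a harmless logarithm in the borderline case $\alpha=-1$ of Theorem~\ref{inhom} that is absorbed by a slight decrease of $\tau''$. The main obstacle is the Laplace-matching step: legitimizing the interchange of $\sum_{k}$ with the Bromwich integral and ensuring the absence of boundary terms in $\langle\mathcal{G}_{\bar z},\Delta^{n} u\rangle = z^{2n}\langle\mathcal{G}_{\bar z},u\rangle$. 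Both rely on the support hypothesis $\supp(\phi,\psi,f(t,\cdot))\cap\Omega_{\ve} = \emptyset$ and on the dominated-convergence estimates already developed in Section~4.
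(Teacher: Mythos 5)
Your proposal is correct and follows the same architecture as the paper's proof: both reduce Theorem \ref{T-eff} to Theorems \ref{T1} and \ref{inhom} by linearity and then identify the explicit modulation function $q+\widehat q$ from those theorems with $\sum_{k}q_{k}$, where $q_{k}$ solves \eqref{cp-qeff}. The only genuine difference is how that identification is carried out. The paper works entirely in the time domain: it splits $q_{k}=q_{1,k}+q_{2,k}+q_{3,k}$, differentiates each explicit integral twice in $t$, reads off the ODE directly, and then uses Green's identity plus Kirchhoff's formula to recognize the resulting source as $|\langle e_{k},1\rangle|^{2}\,\Delta(u_{\text{\rm free}}+\widehat u_{\text{\rm free}})(t,0)$. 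You instead match Laplace transforms: $\mathcal{L}[q_{k}](z)=|\langle e_{k},1\rangle|^{2}\mathcal{L}[h](z)/(1+\lambda_{k}z^{2})$ on one side, and on the other the radial convolution structure of the explicit formulas together with $\mathcal{L}[1-\cos(\omega_{k}\cdot)]$, $\mathcal{L}[\omega_{k}\sin(\omega_{k}\cdot)]$ and the identity $\langle\mathcal G_{\bar z},\Delta^{n}u\rangle=z^{2n}\langle\mathcal G_{\bar z},u\rangle$ (which is the Laplace-domain avatar of the paper's Green's-identity step, valid because $0\notin\supp u$). This is essentially an inversion of the route by which Theorem \ref{T1} produced $q$ in the first place, so it is consistent and closes correctly; the time-domain verification is slightly more self-contained in that it never needs to re-justify interchanging $\sum_{k}$ with a Bromwich integral, whereas your route makes the spectral origin of \eqref{cp-qeff} (the factor $(1+\lambda_{k}z^{2})^{-1}$ from the resolvent of $N_{0}$) more transparent. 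Your handling of the error term — taking a single $\tau$ small enough that all three remainder bounds apply and setting $\tau''$ to the worst surviving exponent, absorbing the logarithm in the $\alpha=-1$ case — matches what the theorem's "there exist $\tau,\tau''$" formulation requires.
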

\begin{proof} By Theorems \ref{T1} and \ref{inhom}, it suffices to show that, for any $k\ge 1$, 
$$q_{k}(t):=q_{1,k}(t)+q_{k,2}(t)+q_{k,3}(t)\,,
$$ solves \eqref{cp-qeff}, where
$$
q_{1,k}(t):=|\langle e_{k},1\rangle|^{2}
\int_{|y|<t}
\bigg(1-\cos\bigg(\frac{t-|y|}{{\lambda_{k}^{1/2}}}\bigg)\bigg)
\frac{\Delta^{2}\phi(y)}{4\pi|y|}\,dy
$$
$$
q_{k,2}(t):=|\langle e_{k},1\rangle|^{2}\,\frac1{\lambda_{k}^{1/2}}
\int_{|y|<t}\sin\bigg(\frac{t-|y|}{{\lambda_{k}^{1/2}}}\bigg)
\frac{\Delta\psi(y)}{4\pi|y|}\,dy\,,
$$ 
$$
q_{k,3}(t):=|\langle e_{k},1\rangle|^{2}\,\frac1{
\lambda_{k}^{1/2}}
\int_{0}^{t}\!\!\int_{|y|<t-s}\!\!\sin\bigg(\frac{t-s-|y|}{{\lambda_{k}^{1/2}}}\bigg)
\frac{\Delta f(s,y)}{4\pi|y|}\,dy\,ds\,.
$$
One has, 
\begin{align*}
\ddot q_{1,k}(t)&\,=|\langle e_{k},1\rangle|^{2}\,\frac1{\lambda^{1/2}_{k}}
\frac{d\ }{dt}\int_{|y|<t}\sin\bigg(\frac{t-|y|}{{\lambda_{k}^{1/2}}}\bigg)
\frac{\Delta^{2}\phi(y)}{4\pi|y|}\,dy\\
&\,=|\langle e_{k},1\rangle|^{2}\,\frac1{\lambda_{k}}\int_{|y|<t}\cos\bigg(\frac{t-|y|}{{\lambda_{k}^{1/2}}}\bigg)
\frac{\Delta^{2}\phi(y)}{4\pi|y|}\,dy\\
&\,=-\frac1{\lambda_{k}}\left(q_{1,k}(t)-|\langle e_{k},1\rangle|^{2}\int_{|y|<t}\frac{\Delta^{2}\phi(y)}{4\pi|y|}\,dy\right)\,,
\end{align*}
and
\begin{align*}
\ddot q_{2,k}(t)&\,=|\langle e_{k},1\rangle|^{2}\,\frac1{\lambda_{k}}
\frac{d\ }{dt}\int_{|y|<t}\cos\bigg(\frac{t-|y|}{{\lambda_{k}^{1/2}}}\bigg)
\frac{\Delta\psi(y)}{4\pi|y|}\,dy\\
&\,=|\langle e_{k},1\rangle|^{2}\,\frac1{\lambda_{k}}\left( tM\Delta\psi(t)-\frac1{\lambda_{k}^{1/2}}\int_{|y|<t}\cos\bigg(\frac{t-|y|}{{\lambda_{k}^{1/2}}}\bigg)
\frac{\Delta^{2}\phi(y)}{4\pi|y|}\,dy\right)\\
&\,=-\frac1{\lambda_{k}}\left(\,q_{2,k}(t)-|\langle e_{k},1\rangle|^{2}
tM{\Delta\psi}(t)\right)\,,
\end{align*}
\begin{align*}
&\,\ddot q_{3,k}(t)=|\langle e_{k},1\rangle|^{2}\,\frac1{
\lambda_{k}}\,\frac{d\ }{dt}
\int_{0}^{t}\!\!\int_{|y|<t-s}\!\!\cos\bigg(\frac{t-s-|y|}{{\lambda_{k}^{1/2}}}\bigg)
\frac{\Delta f(s,y)}{4\pi|y|}\,dy\,ds\\
&\,=|\langle e_{k},1\rangle|^{2}\,\frac1{\lambda_{k}}\int_{0}^{t}\left( (t-s)(M\Delta f(s))(t-s)-\frac1{\lambda_{k}^{1/2}}\int_{|y|<t-s}\sin\bigg(\frac{t-s-|y|}{{\lambda_{k}^{1/2}}}\bigg)
\frac{\Delta^{2} f(s,y)}{4\pi|y|}\,dy\right)\\
&\,=-\frac1{\lambda_{k}}\left(\,q_{2,k}(t)-|\langle e_{k},1\rangle|^{2}
\int_{0}^{t}(t-s)M({\Delta f(s)})(t-s)\,ds\right)\,.
\end{align*}
Furthermore, by Green's identity and by $0\notin\text{supp}(\phi)\subseteq{\RE^{3}\backslash\Omega_{\ve}}$,
\begin{align*}
\int_{|y|<t}\frac{\Delta^{2}\phi(y)}{4\pi|y|}\,dy=
\frac1{4\pi t^{2}}\int_{|y|=t}x\!\cdot\!\nabla\Delta\phi(y)\,d\sigma_{t}(y)+
\frac1{4\pi t^{2}}\int_{|y|=t}\Delta\phi(y)\,d\sigma_{t}(y)=\frac{d\ }{dt}\, tM{\Delta\phi}(t)\,.
\end{align*}
Hence, by Kirchhoff's formula, one gets 
$$\frac{d\ }{dt}\, \big(tM{\Delta\phi}(t)\big)+tM{\Delta\psi}(t)+
\int_{0}^{t}(t-s)(M{\Delta f(s)})(t-s)\,ds=u^{\Delta}_{\text{\rm free}}(t,0)\,,
$$ where $u^{\Delta}_{\text{\rm free}}$ denotes the solution of the Cauchy problem for the free wave equation with initial data $\Delta\phi$, $\Delta\psi$ and source $\Delta f(t)$; such a solution coincides with $\Delta (u_{\text{\rm free}}+\widehat u_{\text{\rm free}})$.
\end{proof}
\section{Appendix. Auxiliary results.}
\begin{lemma}\label{I1}
$$
\|I_{1}(t,\ve)\|\lesssim\ve^{1/2}\,\frac{e^{tc}}{c^{2n+3/2}}\ \|\Delta^{n+1}\psi\|\,.
$$
\end{lemma}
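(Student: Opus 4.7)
The plan is to bound the integrand pointwise in operator norm and then integrate along the truncated vertical contour. Writing $z=c+i\gamma$, $|\gamma|\le c_1(c/\ve)^{1/4}$, the triangle inequality gives
$$
\|I_1(t,\ve)\|\le \frac{e^{tc}}{2\pi}\int_{-c_1(c/\ve)^{1/4}}^{c_1(c/\ve)^{1/4}}\|G_z\|\,\|(1+z^2N_0)^{-1}\|\,\|G^{(1)}_{\bar z}(\ve)\|\,\frac{d\gamma}{|z|^{2n}}\,\|\Delta^{n+1}\psi\|.
$$
The factor $\|G_z\|\lesssim c^{-1/2}$ from Lemma \ref{LG}(1) is uniform along the contour, and the $\ve^{1/2}$ prefactor of the bound comes from Lemma \ref{LG}(2) applied to $\|G^{(1)}_{\bar z}(\ve)\|$. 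The remaining task is to track how the $z$-dependent factors combine with $|z|^{-2n}$ once integrated.

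I would split the integration range into the two sub-regions dictated by the piecewise bounds of Lemmata \ref{LN} and \ref{LG}(2). In the inner region $|\gamma|\le c$, use $\|(1+z^2N_0)^{-1}\|\le 1$, $\|G^{(1)}_{\bar z}(\ve)\|\lesssim\ve^{1/2}(2+|z|^{-2})$ together with $|z|^{-2}\lesssim c^{-2}$ and $|z|^{-2n}\lesssim c^{-2n}$; the integrand is then $\lesssim \ve^{1/2} c^{-5/2-2n}$, and integration over an interval of length $2c$ contributes the target amount $\ve^{1/2}c^{-2n-3/2}$. In the outer region $c<|\gamma|\le c_1(c/\ve)^{1/4}$, apply $\|(1+z^2N_0)^{-1}\|\lesssim |z|^2/(c|\gamma|)\lesssim \gamma/c$, $\|G^{(1)}_{\bar z}(\ve)\|\lesssim \ve^{1/2}(1+\gamma^2)/(c\gamma)$, and $|z|^{-2n}\lesssim\gamma^{-2n}$; the integrand is then dominated by $\ve^{1/2}c^{-5/2}(\gamma^{-2n}+\gamma^{2-2n})$. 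For the relevant range of $n$, both powers are integrable at $+\infty$, and the integral is controlled by its value at the lower endpoint $\gamma=c$, producing a contribution of order $\ve^{1/2}c^{1/2-2n}$, which for $c<1$ is absorbed into the target bound $\ve^{1/2}c^{-2n-3/2}$.

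Summing the two contributions and factoring out $e^{tc}/(2\pi)$ together with $\|\Delta^{n+1}\psi\|$ yields the stated estimate. The main obstacle I foresee is purely bookkeeping: keeping track of the interplay between the two piecewise norm bounds (the thresholds in Lemmata \ref{LN} and \ref{LG} happen to coincide at $|\gamma|=c$, which is fortunate), and verifying that the outer-region contribution never exceeds the inner-region one. One has to be careful that the upper cutoff $c_1(c/\ve)^{1/4}$ does not spoil the estimate in the outer region; for this the integrability of $\gamma^{2-2n}$ (which holds for the values of $n$ used in Theorem \ref{T1}) is what makes the upper endpoint irrelevant, so the outer contribution is truly dominated by the inner one.
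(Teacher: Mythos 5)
Your proposal is correct and follows essentially the same route as the paper: both split the truncated vertical contour at $|\gamma|=c$, combine the piecewise bounds of Lemmata \ref{LN} and \ref{LG} pointwise, and integrate (the paper merely adds a rescaling $\gamma\mapsto c\gamma$ to extract the powers of $c$). The only minor slip is that your outer-region tally quotes only the $\gamma^{2-2n}$ term, giving $\ve^{1/2}c^{1/2-2n}$, whereas the $\gamma^{-2n}$ term there actually produces $\ve^{1/2}c^{-2n-3/2}$ — but that is exactly the target bound, so the conclusion is unaffected.
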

\begin{proof}
By Lemmata \ref{LN} and \ref{LG},
\begin{align*}
&\|I_{1}(t,\ve)\|\\
\lesssim&\, e^{tc}\int_{0}^{c_{1}(c/\ve)^{1/4}}\!\!\!\|G_{c+i\gamma}\|\,\|(1+(c+i\gamma)^{2}N_{0})^{-1}\|\,\|G_{c+i\gamma}^{(1)}(\ve)\|\,\frac{d\gamma}{(c^{2}+\gamma^{2})^{n}}\ \|\Delta^{n+1}\psi\|\\
\lesssim&\,  \ve^{1/2}\,\frac{e^{tc}}{c^{1/2}}\bigg(\int_{0}^{c}\left(1+\frac1{c^{2}+\gamma^{2}}\right)\,\frac{d\gamma}{(c^{2}+\gamma^{2})^{n}}\\
&+\int_{c}^{c_{1}(c/\ve)^{1/4}}\frac{c^{2}+\gamma^{2}}{c\gamma}\left(1+(1+c^{2}+\gamma^{2})\,\frac{1}{c\gamma}\right)\,\frac{d\gamma}{(c^{2}+\gamma^{2})^{n}}\bigg) \|\Delta^{n+1}\psi\|
\\
\end{align*}%
\begin{align*}%
\lesssim&\,  \ve^{1/2}\,\frac{e^{tc}}{c^{1/2}}\bigg(\int_{0}^{c}\frac{d\gamma}{(c^{2}+\gamma^{2})^{n}}+\int_{0}^{c}\frac{d\gamma}{(c^{2}+\gamma^{2})^{n+1}}\\
&+\frac1{c^{2}}\int_{c}^{c_{1}(c/\ve)^{1/4}}\bigg(\frac{c}{\gamma}+\frac1{\gamma^{2}}+\frac{c^{2}}{\gamma^{2}}+1\bigg)\,\frac{d\gamma}{(c^{2}+\gamma^{2})^{n-1}}\bigg) \|\Delta^{n+1}\psi\|\\
\lesssim&\, \ve^{1/2}\,\frac{e^{tc}}{c^{1/2}}\bigg(\frac1{c^{2n-1}}\int_{0}^{1}\frac{d\gamma}{(1+\gamma^{2})^{n}}+\frac1{c^{2n+1}}\int_{0}^{1}\frac{d\gamma}{(1+\gamma^{2})^{n+1}}\\
&+\frac1{c^{2n-1}}\int_{1}^{+\infty}\bigg(\frac1{\gamma}+\frac{1}{c^{2}\gamma^{2}}+\frac{1}{\gamma^{2}}+1\bigg)\,\frac{d\gamma}{(1+\gamma^{2})^{2n-1}}\bigg) \|\Delta^{n+1}\psi\|\\
\lesssim&\,\ve^{1/2}\,\frac{e^{tc}}{c^{1/2}}\bigg(\frac1{c^{2n-1}}+\frac1{c^{2n+1}}\bigg)\|\Delta^{n+1}\psi\|\\
\lesssim&\,\ve^{1/2}\,\frac{e^{tc}}{c^{2n+3/2}}\ \|\Delta^{n+1}\psi\|
\end{align*}
\end{proof}
\begin{lemma}\label{I2}
$$
 \|I_{2}(t,\ve)\|\lesssim \ve^{1/2}\,\frac{e^{tc}}{c^{2n+1}}\ \|\Delta^{n+1}\psi\|\,.
$$
\end{lemma}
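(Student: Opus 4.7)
The plan is to follow the same scheme as in Lemma \ref{I1}, with the factor $G_{z}(1+z^{2}N_{0})^{-1}G_{\bar z}^{(1)}(\ve)^{*}$ replaced by the middle factor $G_{z}(\ve)\Lambda_{z}(\ve)G_{z}(\ve)^{*}$ appearing in the definition of $I_{2}(t,\ve)$. By submultiplicativity of the operator norm,
\[
\|G_{z}(\ve)\Lambda_{z}(\ve)G_{z}(\ve)^{*}\|\le \|G_{z}(\ve)\|^{2}\,\|\Lambda_{z}(\ve)\|,
\]
so that, after taking $|e^{tz}|\le e^{tc}$ on the contour, one is reduced to bounding the $\gamma$-integral of $\|G_{z}(\ve)\|^{2}\,\|\Lambda_{z}(\ve)\|\,(c^{2}+\gamma^{2})^{-n}$ over $|\gamma|\le c_{1}(c/\ve)^{1/4}$. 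The extra factor of $\ve$ carried by $\Lambda_{z}(\ve)$ (compared to a simple factor of $(1+z^{2}N_{0})^{-1}$ in $I_{1}$) is precisely what will degrade the $\ve^{1/2}$ bound by only one power of $c^{-1/2}$, yielding $\ve^{1/2}e^{tc}c^{-(2n+1)}$ instead of the $c^{-(2n+3/2)}$ of Lemma \ref{I1}.

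For $\|\Lambda_{z}(\ve)\|$ one proceeds case-by-case. By Lemma \ref{CC}, condition \eqref{1/2} of Lemma \ref{LL} is satisfied throughout the integration range; for $|\gamma|\le c$, I would invoke Lemma \ref{LL1} to obtain $\|\Lambda_{z}(\ve)\|\lesssim \ve\,c^{2}(c+\|N_{0}\|)$, while for $c<|\gamma|\le c_{1}(c/\ve)^{1/4}$ I would combine Lemma \ref{LL} with the large-$|\gamma|$ bound $\|(1+z^{2}N_{0})^{-1}\|\lesssim |z|^{2}/(c|\gamma|)$ from Lemma \ref{LN}, getting a bound of the form $\|\Lambda_{z}(\ve)\|\lesssim \ve\,|z|^{6}(|z|+1)/(c\gamma)^{2}$. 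For $\|G_{z}(\ve)\|^{2}$, either one uses the triangle-inequality decomposition $G_{z}(\ve)=G_{z}+G_{z}^{(1)}(\ve)$ in conjunction with Lemma \ref{LG}(1)-(2) — exactly as in the proof of Lemma \ref{I1} — or one applies Lemma \ref{LG}(3) with a Sobolev exponent $s\in[5/4,3/2)$; the lower bound $s\ge 5/4$ is precisely what is needed so that $\ve\cdot\ve^{2(s-3/2)}\ge\ve^{1/2}$.

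Splitting the $\gamma$-integration into the two regions $|\gamma|\le c$ and $c<|\gamma|\le c_{1}(c/\ve)^{1/4}$, rescaling $\gamma\mapsto c\gamma$ in each piece, and collecting the resulting powers of $c$ and $\ve$ as in the bookkeeping of Lemma \ref{I1}, one finds that both contributions are controlled by $\ve^{1/2}e^{tc}c^{-(2n+1)}\|\Delta^{n+1}\psi\|$. The main obstacle, as in Lemma \ref{I1}, lies in the large-$\gamma$ tail: there $\|\Lambda_{z}(\ve)\|$ grows polynomially in $\gamma$, and it is essential that the $(c^{2}+\gamma^{2})^{-n}$ weight combined with the decay of $\|G_{z}(\ve)\|^{2}$ (which, for $|\gamma|>c$, behaves like $(c|\gamma|)^{-2}$ times a power of $(c+|\gamma|)$) keeps the integrand integrable on $(c,c_{1}(c/\ve)^{1/4})$. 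The cutoff $c_{1}(c/\ve)^{1/4}$ coming from Lemma \ref{CC} is exactly what is needed so that the resulting tail contribution still fits within the target $\ve^{1/2}$ order.
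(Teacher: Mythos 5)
Your proposal is correct and follows essentially the same route as the paper's proof: the same submultiplicative bound $\|G_{z}(\ve)\Lambda_{z}(\ve)G_{z}(\ve)^{*}\|\le\|G_{z}(\ve)\|^{2}\|\Lambda_{z}(\ve)\|$, the same split at $|\gamma|=c$ with Lemma \ref{LL1} on the small-$\gamma$ range and Lemmata \ref{LL}, \ref{LN}, \ref{CC} on the large-$\gamma$ range, and the paper indeed commits to your second option for $\|G_{z}(\ve)\|$, namely Lemma \ref{LG}(3) with $s=5/4$, followed by the rescaling $\gamma\mapsto c\gamma$. (Only a cosmetic slip: since $\ve\le 1$, the condition $s\ge 5/4$ gives $\ve\cdot\ve^{2(s-3/2)}\le\ve^{1/2}$, not $\ge$.)
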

\begin{proof}
By Lemmata \ref{LL}, \ref{LL1}, \ref{CC} and \ref{LG} (with $s=5/4$),  
\begin{align*}
&\|I_{2}(t,\ve)\|\\
\lesssim&\, e^{tc}\int_{0}^{c_{1}(c/\ve)^{1/4}}\!\!\!\|G_{c+i\gamma}(\ve)\|^{2}\|\Lambda_{c+i\gamma}(\ve)\|\,\frac{d\gamma}{(c^{2}+\gamma^{2})^{n}}\ \|\Delta^{n+1}\psi\|\\
\lesssim &\,\ve^{1/2}\,e^{tc}\bigg(\int_{0}^{c}\bigg(1+\frac{1}{c^{2}+\gamma^{2}}\bigg)^{\!\!5/4}\!\!\frac{c^{2}(1+c)}{(c^{2}+\gamma^{2})^{3/4}}\,
\frac{d\gamma}{(c^{2}+\gamma^{2})^{n}}\\
&+\int_{c}^{c_{1}(c/\ve)^{1/4}}\!\!\!\frac{(1+(c+\gamma)^{2})^{5/4}}{(c\gamma)^{2}}\,(1+(c^{2}+\gamma^{2})^{1/2})\,\frac{(c^{2}+\gamma^{2})^{3}}{(c\gamma)^{2}}\,\frac{d\gamma}{(c^{2}+\gamma^{2})^{n}}
\bigg)\|\Delta^{n+1}\psi\|\\
\lesssim &\,\ve^{1/2}\,e^{tc}\bigg(\int_{0}^{c}\bigg(1+\frac{1}{c^{2}+\gamma^{2}}\bigg)^{\!\!5/4}\frac{c^{2\,}d\gamma}{(c^{2}+\gamma^{2})^{n+3/4}}
\\
&+\int_{c}^{c_{1}(c/\ve)^{1/4}}(1+(c+\gamma)^{2})^{5/4}\,
\frac{1+(c^{2}+\gamma^{2})^{1/2}}{(c\gamma)^{4}}
\,\frac{d\gamma}{(c^{2}+\gamma^{2})^{n-3}}\bigg)\|\Delta^{n+1}\psi\|\\
\lesssim &\,\ve^{1/2}\,e^{tc}\bigg(\frac1{c^{2n-3/2}}\int_{0}^{1}\bigg(c^{2}+\frac{1}{1+\gamma^{2}}\bigg)^{\!\!5/4}\frac{d\gamma}{(1+\gamma^{2})^{n+3/4}}\\
&+\frac1{c^{2n+1}}\int_{1}^{+\infty}
(1+c^{2}(1+\gamma)^{2})^{5/4}\,
\frac{1+c(1+\gamma^{2})^{1/2}}{\gamma^{4}}
\,\frac{d\gamma}{(1+\gamma^{2})^{n-3}}
\bigg)\|\Delta^{n+1}\psi\|\\
\lesssim &\,\ve^{1/2}\,e^{tc}\bigg(\frac1{c^{2n-3/2}}+\frac1{c^{2n+1}}\bigg)\|\Delta^{n+1}\psi\|\\
\lesssim &\,\ve^{1/2}\,\frac{e^{tc}}{c^{2n+1}}\ \|\Delta^{n+1}\psi\|\,.
\end{align*}
\end{proof}
\begin{lemma}\label{green} Let $\varphi\in H^{6}(\RE^{3})$ be supported outside $\Omega_{\ve}$. Then
$$
\int_{|x|<t}\!\!\left(\frac{(t-|x|)^{2}}{2\lambda}-1+\cos\bigg(\frac{t-|x|}{\lambda^{1/2}}\bigg)\right){\Delta^{3}\varphi(x)}\,\frac{dx}{|x|}=\frac1\lambda
\int_{|x|<t}\!\!\left(1-\cos\bigg(\frac{t-|x|}{\lambda^{1/2}}\bigg)\right)\Delta^{2}\varphi(x)\,\frac{dx}{|x|}
$$
and
\begin{align*}
&\int_{|x|<t}\left(\frac{(t-|x|)^{3}}{6\lambda^{3/2}}-\frac{t-|x|}{\lambda^{1/2}}+\sin\bigg(\frac{t-|x|}{\lambda^{1/2}}\right)\bigg){\Delta^{3}\varphi(x)}\,\frac{dx}{|x|}
=\frac1{\lambda^{2}}
\int_{|x|<t}\sin\bigg(\frac{t-|x|}{\lambda^{1/2}}\bigg)\Delta\varphi(x)\,\frac{dx}{|x|}\,.
\end{align*}
\end{lemma}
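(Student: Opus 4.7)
The strategy is to recognize both identities as consequences of integration by parts, where we shift the Laplacians from $\varphi$ onto a radial kernel of the form
$$
K(x):=\frac{F(t-|x|)}{|x|}, \qquad x\neq 0,
$$
for a suitable function $F$. The key computational fact I will use is that, since $(ru(r))''=r\Delta u$ for a radial function $u(r)=F(t-r)/r$, one has
$$
\Delta_{x}K(x)=\frac{F''(t-|x|)}{|x|}, \qquad x\neq 0.
$$
Thus iterating $\Delta$ simply iterates the second derivative of $F$ in its argument.

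For the first identity, I take $F(\tau):=\frac{\tau^{2}}{2\lambda}-1+\cos(\tau/\lambda^{1/2})$, so that $F(0)=0$, $F'(0)=0$, and $F''(\tau)=\frac{1}{\lambda}\bigl(1-\cos(\tau/\lambda^{1/2})\bigr)$. The vanishing of $F$ and $F'$ at $\tau=0$ is exactly what ensures that, on $\partial B_{t}=\{|x|=t\}$, both $K$ and its outward normal derivative $\partial_{\nu}K=-F'(0)/t-F(0)/t^{2}$ vanish. A single application of Green's identity on $B_{t}$, with $w=\Delta^{2}\varphi$, then yields
$$
\int_{B_{t}}K\,\Delta w\,dx=\int_{B_{t}}(\Delta K)\,w\,dx,
$$
which, upon substituting the formula for $F''$, is the claimed equality. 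For the second identity I take $F(\tau):=\frac{\tau^{3}}{6\lambda^{3/2}}-\frac{\tau}{\lambda^{1/2}}+\sin(\tau/\lambda^{1/2})$; one checks by direct differentiation that $F(0)=F'(0)=F''(0)=F'''(0)=0$ and $F^{(4)}(\tau)=\frac{1}{\lambda^{2}}\sin(\tau/\lambda^{1/2})$. Setting $\widetilde K(x):=F''(t-|x|)/|x|$, we again have $\Delta K=\widetilde K$ and $\Delta\widetilde K=F^{(4)}(t-|x|)/|x|$, and both $K,\partial_{\nu}K$ and $\widetilde K,\partial_{\nu}\widetilde K$ vanish on $\partial B_{t}$. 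Two successive applications of Green's identity move both Laplacians off of $\Delta^{3}\varphi=\Delta(\Delta^{2}\varphi)=\Delta^{2}(\Delta\varphi)$, producing the right-hand side $\frac{1}{\lambda^{2}}\int_{B_{t}}\sin((t-|x|)/\lambda^{1/2})\,\Delta\varphi\,\frac{dx}{|x|}$.

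The one point requiring care, and the main (mild) technical obstacle, is the singularity of $K$ at the origin: since $F(t)$ is generally nonzero, $K(x)\sim F(t)/|x|$ as $x\to 0$, and $\Delta K$ even contains a distributional $\delta_{0}$-contribution from $\Delta(1/|x|)=-4\pi\delta_{0}$. However, the assumption that $\supp(\varphi)$ is disjoint from $\Omega_{\ve}\ni 0$ means that $\Delta^{k}\varphi$ vanishes identically in some neighborhood $B_{\delta}(0)$. Thus the integration by parts is carried out on the annular region $B_{t}\setminus\overline{B_{\delta}(0)}$; the boundary contributions at $\partial B_{\delta}(0)$ are zero because $\varphi$ and its derivatives vanish there, and the distributional delta at $0$ never enters. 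The regularity of $\varphi$ (namely $H^{6}$, which via approximation by smooth compactly supported functions disjoint from $0$ suffices to make both integrations by parts rigorous) closes the argument.
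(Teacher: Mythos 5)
Your proposal is correct and follows essentially the same route as the paper: the radial-kernel identity $\Delta\bigl(g(|x|)/|x|\bigr)=g''(|x|)/|x|$, Green's identity over $B_t$ with boundary terms killed by the vanishing of $g,g'$ (resp.\ $g,g',g'',g'''$) at $|x|=t$, and the support condition on $\varphi$ to avoid the origin. Your explicit remarks about the $\delta_0$-contribution and the annular region $B_t\setminus\overline{B_\delta(0)}$ are a slightly more careful rendering of what the paper compresses into the phrase ``by Green's formula and by $0\notin\supp(\varphi)$.''
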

\begin{proof} For any $g\in{\mathcal C}^{2}(\RE_{+})$ and for any $x\not=0$, one has
$$
\nabla\left(\frac{g(|x|)}{|x|}\right)
=\left(g'(|x|)-\,\frac{g(|x|)}{|x|}\right)\frac{x}{|x|^{2}}\,,\qquad
\Delta\left(\frac{g(|x|)}{|x|}\right)=\frac{g''(|x|)}{|x|}\,.
$$
Hence, by Green's formula and by $0\notin\text{supp}(\varphi)\subseteq{\RE^3\backslash\Omega_{\ve}}$,
\begin{align}\label{g}
&\int_{|x|<t}\frac{g(|x|)}{|x|}\,{\Delta^{n}\varphi(x)}\,dx=
\int_{|x|<t}\Delta\left(\frac{g(|x|)}{|x|}\right)\,\Delta^{n-1}\varphi(x)\,dx\nonumber\\
&+\frac{g(t)}{t^{2}}\int_{|x|=t}x\!\cdot\!\nabla\Delta^{n-1}\varphi(x)\,d\sigma_{t}(x)-\frac1t\int_{|x|=t}x\!\cdot\!\nabla\left(\frac{g(|x|)}{|x|}\right)\Delta^{n-1}\varphi(x)\,d\sigma_{t}(x)\nonumber\\
\,&=
\int_{|x|<t}\frac{g''(|x|)}{|x|}\,\Delta^{n-1}\varphi(x)\,dx+\frac{g(t)}{t^{2}}\int_{|x|=t}x\!\cdot\!\nabla\Delta^{n-1}\varphi(x)\,d\sigma_{t}(x)\\
\,&-\frac1t\left(g'(t)-\,\frac{g(t)}{t}\right)\int_{|x|=t}\Delta^{n-1}\varphi(x)\,d\sigma_{t}(x)\nonumber\,.
\end{align} 
Taking $g(s)=\frac{(t-s)^{2}}{2\lambda}-1+\cos\left(\frac{t-s}{\lambda^{1/2}}\right)$, one gets, by \eqref{g} and by $g(t)=g'(t)=0$
\begin{align*}%
&\int_{|x|<t}\left(\frac{(t-|x|)^{2}}{2\lambda}-1+\cos\bigg(\frac{t-|x|}{\lambda^{1/2}}\bigg)\right){\Delta^{3}\varphi(x)}\,\frac{dx}{|x|}\\
\,&=\frac1\lambda
\int_{|x|<t}\left(1-\cos\bigg(\frac{t-|x|}{\lambda^{1/2}}\bigg)\right)\Delta^{2}\varphi(x)\,\frac{dx}{|x|}\,.
\end{align*} 
Taking $g(s)=\frac{(t-s)^{3}}{6\lambda^{3/2}}-\frac{t-s}{\lambda^{1/2}}+\sin\left(\frac{t-s}{\lambda^{1/2}}\right)$, one gets, by using \eqref{g} twice and by $g(t)=g'(t)=g''(t)=g'''(t)=0$,
\begin{align*}%
&\int_{|x|<t}\left(\frac{(t-|x|)^{3}}{6\lambda^{3/2}}-\frac{t-|x|}{\lambda^{1/2}}+\sin\bigg(\frac{t-|x|}{\lambda^{1/2}}\bigg)\right){\Delta^{3}\varphi(x)}\,\frac{dx}{|x|}\\
\,&=
\int_{|x|<t|}\frac{g''(|x|)}{|x|}\,\Delta^{2}\varphi(x)\,dx=
\int_{|x|<t|}\frac{g''''(|x|)}{|x|}\,\Delta\varphi(x)\,dx\\
\,&=\frac1{\lambda^{2}}
\int_{|x|<t}\sin\bigg(\frac{t-|x|}{\lambda^{1/2}}\bigg)\Delta\varphi(x)\,\frac{dx}{|x|}\,.
\end{align*} 
\end{proof}
\vfill\eject
\noindent
{\bf Acknowledgements.} A.P. acknowledges the support of the Next Generation EU-Prin project 2022 ''Singular Interactions and Effective Models in Mathematical Physics''  and of the National Group of Mathematical Physics (GNFM-INdAM). Part of this work was done while A.P. was visiting the Laboratoire de Math\'{e}matiques CNRS de Reims; this stay was financially supported by the Laboratoire Ypatia de Sciences Math\'{e}matiques. 

\vskip30pt\noindent
{\bf Declarations.}
The authors have no relevant financial or non-financial interests to disclose.
The authors have no competing interests to declare that are relevant to the content of this article.
Data sharing not applicable to this article as no data sets were generated or analyzed during the current study.

\end{document}